\documentclass[10pt,reqno,journal]{IEEEtran}

\usepackage{amssymb}
\usepackage{xcolor}
\usepackage{cite}
\usepackage{subcaption}
\usepackage{epsfig}
\usepackage[amssymb]{SIunits}
\usepackage{graphicx}
\usepackage{epstopdf}
\usepackage{amsmath}
\usepackage{amsthm}
\usepackage[amssymb]{SIunits}
\allowdisplaybreaks


\newcommand{\ncom}{\newcommand}
\ncom{\beqn}{\begin{eqnarray*}} 
\ncom{\eeqn}{\end{eqnarray*}}
\ncom{\beq}{\begin{eqnarray}} 
\ncom{\eeq}{\end{eqnarray}}
\newtheorem{theorem}{Theorem}
\newtheorem{assumption}{Assumption}
\newtheorem{proposition}{Proposition}

\newtheorem{definition}{Definition}
\newtheorem{remark}{Remark}
\newtheorem{lemma}{Lemma}

\ncom{\R}{I\!\!R}
\newcommand{\norm}[1]{\vert\vert#1\vert\vert}


\begin{document}
\title{
Event-triggered Control for Nonlinear Systems with Center Manifolds}
\author{Akshit Saradagi$^{1}$, Vijay Muralidharan$^{2}$, Arun D. Mahindrakar$^{3}$,~\IEEEmembership{Senior Member,~IEEE} and Pavankumar Tallapragada$^{4},~\IEEEmembership{Member,~IEEE}$
  \thanks{$^{1}$Akshit and $^{3}$Arun are with the Department of Electrical Engineering, Indian Institute of Technology Madras, Chennai, India (emails: akshitsaradagi@gmail.com, arun\_dm@iitm.ac.in)} 
  \thanks{$^{2}$Vijay is with the Department of Electrical Engineering, Indian Institute of Technology Palakkad, Palakkad, India (email: vijay@iitpkd.ac.in)} 
  \thanks{$^{4}$Pavankumar Tallapragada is with the Department of Electrical Engineering, Indian Institute of Science, Bengaluru, India
        (email: pavant@iisc.ac.in)}
}
\maketitle
\begin{abstract}
In this work, we consider the problem of event-triggered implementation of control laws designed for the local stabilization of nonlinear systems with center manifolds. We propose event-triggering conditions which are derived from a local input-to-state stability characterization of such systems. The triggering conditions ensure local ultimate boundedness of the trajectories and the existence of a uniform positive lower bound for the inter-event times. The ultimate bound can be made arbitrarily small, but by allowing for smaller inter-event times. Under certain assumptions on the controller structure, local asymptotic stability of the origin is also guaranteed. Two sets of triggering conditions are proposed, that cater to the cases where the exact center manifold and only an approximation of the center manifold is computable. The closed-loop system exhibits some desirable properties when the exact knowledge of the center manifold is employed in checking the triggering conditions. Three illustrative examples that explore different scenarios are presented and the applicability of the proposed methods is demonstrated. The third example concerns the event-triggered implementation of a position stabilizing controller for the open-loop unstable Mobile Inverted Pendulum (MIP) robot.     
\end{abstract}
\begin{IEEEkeywords}
Event-triggered control, Input-to-state stability, Center manifold theory, Mobile Inverted Pendulum Robot
\end{IEEEkeywords}

\section{Introduction}
\label{introduction}
Our view of what constitutes as resource in the implementation of control laws has evolved over the years. In many optimal control formulations, the control effort or control energy captures the sensing, computation and actuation costs and is traded with closed-loop performance. The focus on large-scale multi-agent systems in the recent years has revealed that, resources such as CPU time and wireless channel bandwidth cannot be overlooked, as they are shared and costly \cite{iot, networked}. In response, various resource-aware implementations have emerged, that look to utilize the resources judiciously, while guaranteeing pre-specified performance. Event-triggered control \cite{tab1, Tab_CDC, macej} is one such resource-aware technique which has gained popularity in the recent years and presents an alternative to time-triggered control. 

In event-triggered control, the control loop is closed when certain events occur in the system and not periodically as in time-triggered control. As the event instants are implicitly defined and the inter-event times are aperiodic, non-existence of Zeno behaviour (existence of a uniform positive lower bound for the inter-event times) must be ensured. Many event-triggered implementations require continuous monitoring of the states to detect the occurrence of an event. As this is not feasible in digital implementations, solutions are being explored where the occurrence of an event needs to be checked only periodically \cite{periodic_lin},\cite{borgers_periodic}. Often, controllers are designed first and then event-triggering conditions are designed. This sequential design procedure is called emulation-based approach. Recently, control and implementation co-design is being investigated, where both are designed parallelly \cite{abdelrahim_codesign}\cite{codesign}. 

Although research in event-triggered control has investigated a wide variety of settings, as surveyed in \cite{macej} and \cite{survey1}, the case of event-triggered control of nonlinear systems with center manifolds has not been looked at so far. Center manifold analysis is a crucial design and analysis tool for nonlinear systems with degenerate equilibria and is widely used in the areas of control theory, bifurcation theory and multi-scale modelling, among many others \cite{carr},\cite{gucken1}. We realised the need for new research in this direction when analysing the applicability of existing results for the Mobile Inverted Pendulum (MIP) robot\cite{vijay_TCST}. Control of a tethered satellite system is another example where a controller is designed in the presence of a center manifold \cite{tethered}. In this work, we present our solution to the problem of event-triggered control of nonlinear systems with center manifolds.

The work presented here builds on the preliminary findings from our recent work \cite{CDC_2020}, where an explicit Lyapunov-based characterisation of local input-to-state stability (LISS) for nonlinear systems with center manifolds was derived, assuming a general controller structure proposed in \cite{Hamzi_Krener}. This LISS characterization is used in the present work to design event-triggering conditions. In \cite{CDC_2020}, we also identified hurdles in the way of designing event-triggered implementation, namely, requirement of the exact knowledge of the center manifold in checking the triggering conditions and the non-applicability of the existing ISS-based results for nonlinear systems \cite{tab1}, \cite{Postoyan} for a large class of systems under consideration. 

For the general class of nonlinear systems, two approaches to the design of event-triggered control are popular in literature. The approach assuming input-to-state stability of the closed-loop system with respect to measurement errors \cite{tab1, Postoyan, Girard_dynamic, wang_lemmon, NOZARI}, guarantees asymptotic stability of the origin. However, the non-existence of Zeno behaviour can be ensured only under the assumption that the composition of the comparison functions involved in the ISS characterization is Lipschitz continuous over compact sets. This assumption is not satisfied for a large subset of the systems under consideration. Under the assumption of only asymptotic tracking, the work \cite{pavan} proposes event-triggered tracking (which can be adapted for stabilization) for nonlinear systems and guarantees Zeno-free uniform ultimate boundedness of the tracking error. For the systems under consideration, the Lyapunov analysis is essentially local and the results from \cite{pavan} are not directly applicable.

The main contributions of this work are the following. In this work, event-triggered implementation of control laws designed for local stabilization of nonlinear systems with center manifolds is being investigated for the first time. The proposed methods ensure Zeno-free local ultimate boundedness of the trajectories, with an ultimate bound that can be made arbitrarily small. Under some assumptions on the controller structure, Zeno-free asymptotic stability of the origin is also ensured. For a subclass of the systems under consideration, the LISS characterization meets the sufficient conditions from \cite{tab1} and \cite{Postoyan} and triggering conditions proposed in these works can be employed. For systems which do not fall in this class, which include many practical systems, such as the MIP robot, these triggering conditions cannot be used. The triggering conditions presented in this work cater to this class of systems and this forms our main contribution. As in \cite{pavan}, for a large class of systems under consideration, the triggering conditions presented in this work ensure ultimate boundedness of the trajectories, to an ultimate bound that can be made arbitrarily small. However, under some assumptions on the controller structure, Zeno-free asymptotic stability of the origin is also ensured. The proposed triggering conditions cater to both the cases where the exact and the approximate knowledge of the center manifold is available. Simulation results for three illustrative examples are presented, that explore the different scenarios being considered in this work. The third example concerns the event-triggered position stabilization of an open-loop unstable MIP robot.

The organization of this work is as follows. We begin with notations and preliminaries in Section \ref{preliminaries}. In Section \ref{formulation_4}, we recollect important results from center manifold theory, setup the problem and summarize the findings from our work in \cite{CDC_2020} on LISS of nonlinear systems with center manifolds. This LISS characterization is used in designing event-triggered control in Section \ref{event_4}, for systems with exactly computable center manifolds. In Section \ref{need}, possibility of designing triggering conditions that do not require the exact knowledge of the center manifold is explored. In section \ref{example_4}, we present triggering conditions utilizing the approximate knowledge of the center manifold and draw a comparison with the case where the exact knowledge of the center manifold is assumed. In subsection \ref{C5_position}, using the triggering conditions presented in section \ref{example_4}, event-triggered implementation of a position stabilizing controller for an MIP robot is presented, following which concluding remarks are made in section \ref{conclusions}. 
\section{NOTATIONS AND PRELIMINARIES}\label{preliminaries}
We denote by $\mathbb{R}$, the set of real numbers and by $\bar{\mathbb{R}}_{+}$ the set of non-negative real numbers. Given two vectors $y$ and $w$, we denote by $(y;w)$ the concatenation of the two vectors $[y^{\top}\;w^{\top}]^{\top}$. We denote by $|\cdot|$, the absolute value of a real number and by $\norm{\cdot}$, the Euclidean norm of a vector or the induced 2-norm of a matrix, depending on the argument. An $n \times n$ identity matrix is denoted by $\mathbb{I}_n$. Given a matrix $A \in \mathbb{R}^{n\times n}$, $A \succ 0$ denotes that $A$ is a symmetric positive definite matrix. A continuous function $\alpha: [0,a)\rightarrow [0,\infty)$ is said to be a class-${\cal K}$ function if $\alpha(0)=0$ and it is strictly increasing.  The notations $f(x)=\mathcal{O}(\norm{x}^p)$ and $f(x) \in \mathcal{O}(\norm{x}^p)$, $p \in \mathbb{R}$ denote that $|f(x)|\leq c_1\norm{x}^p$ for all $x$ such that $\norm{x}<\epsilon$, for some $c_1, \epsilon>0$.
\begin{definition}[Local input-to-state stability (LISS) \cite{huajuan}]\label{Local_Input}
The system $\dot{x}=f(x,d),\; x \in \mathbb{R}^n \text{\;and\;} d \in \mathbb{R}^m$ with $f$ being locally Lipschitz and $f(0,0)=0$, is said to be locally input-to-state stable in the domain $D_x\subset\mathbb{R}^n$ with respect to input $d$ in the domain $ D_d\subset\mathbb{R}^m$, if there exists a Lipschitz continuous function $V : D_x \rightarrow \mathbb{\bar{R}}_+$ and class-${\cal K}$ functions $\alpha_1, \alpha_2, \alpha_3$ and $\beta$ such that
\begin{align*}
\alpha_1(\norm{x})\leq V(x)\leq\alpha_2({\norm{x}})
\end{align*} 
and
\begin{align*}
\zeta^{\top}f(x,d) \leq -\alpha_3(\norm{x})+\beta(\norm{d})
\end{align*}
hold for all $x \in D_x$, $d \in D_d$ and $\zeta \in \partial V(x)$, the subdifferential \cite{convex_funda} of $V$ at $x$. The function $V$ satisfying the above conditions is called an LISS Lyapunov function.
\end{definition}

\section{Problem setup}\label{formulation_4}
In this section, we introduce the systems under consideration - nonlinear systems with center manifolds, review the control design techniques for such systems and setup the problem of designing event-triggered control with the most general controller structure.

Consider the nonlinear dynamical system
\begin{align}
\dot{x}=f(x,u), \;\; x(t_0)=x_0
\label{forced}
\end{align}
where $x\in \mathbb{R}^n$, $u \in \mathbb{R}^m$ and $f : \mathbb{R}^n \times \mathbb{R}^m \rightarrow \mathbb{R}^n$ is a twice continuously differentiable function with $f(0,0)=0$. The Taylor series expansion of the function $f$ about $x=0$ and $u=0$ yields
\begin{align}
\dot{x}=Ax+Bu+\tilde{f}(x,u)
\label{lin_forced}
\end{align}
where $A \in \mathbb{R}^{n \times n}$, $B \in \mathbb{R}^{n \times m}$ and $\tilde{f}(x,u)$ constitutes the higher-order terms and satisfies
\begin{equation}
\tilde{f}(0,0)=0, \;\; \frac{\partial{\tilde{f}}}{\partial{x}}(0,0)=0 \;\; \text{and} \;\; \frac{\partial{\tilde{f}}}{\partial{u}}(0,0)=0.
\label{conditions_1}
\end{equation}
In this work, we focus on nonlinear systems whose linearized models have uncontrollable modes on the imaginary axis. For such systems, there exists a linear transformation $x=T(y;z), T \in \mathbb{R}^{n \times n}$ such that system \eqref{lin_forced} is transformed into the following form
\begin{equation}
\begin{aligned}
\dot{y} &= A_1y + \tilde{g}_1(y,z,u)\\
\dot{z} &= A_2z + B_2u + \tilde{g}_2(y,z,u)
\label{nocenter}
\end{aligned}
\end{equation}
where $A_1 \in \mathbb{R}^{k\times k}$, $A_2  \in \mathbb{R}^{(n-k)\times(n-k)}$, $B_2 \in \mathbb{R}^{(n-k) \times m}$, $\tilde{g}_1, \tilde{g}_2$ are the nonlinearities, the real parts of the eigenvalues of $A_1$ are zero and the pair $(A_2,B_2)$ is controllable. The functions $\tilde{g}_1(y,z,u)$ and $\tilde{g}_2(y,z,u)$ satisfy 
\begin{equation}
\begin{aligned}
\tilde{g_i}(0,0,0)&=0, &\frac{\partial{\tilde{g_i}}}{\partial{y}}(0,0,0)=0, &\\ \frac{\partial{\tilde{g_i}}}{\partial{z}}(0,0,0)&=0, &\frac{\partial{\tilde{g_i}}}{\partial{u}}(0,0,0)=0, &\text{\;\;for\;\;} i=1,2.
\label{conditions_2} 
\end{aligned}
\end{equation}
This is easily inferred from the properties of function $\tilde{f}$ in \eqref{conditions_1}.

Suppose we design a controller $K(y,z)$ for system \eqref{nocenter} and wish to assess the local stability of the origin $(y,z)=0$ of the closed-loop system. The control $u$ cannot influence the eigenvalues of the matrix $A_1$, even though the poles of $A_2$ can be arbitrarily placed. The indirect method of Lyapunov through linearization cannot be used and results from center manifold theory have to be invoked. This theory provides a model-reduction technique to determine the stability of the closed-loop system by assessing the stability of a reduced system, which governs the dynamics on an invariant manifold called the center manifold. If the dynamics on the center manifold is stable (unstable), then the overall system is stable (unstable). In \cite{Hamzi_Krener}, \cite{shastry_unli}, \cite{Liaw}, systematic control design procedures were presented for nonlinear systems with center manifolds. In the following subsections, we review important results from center manifold theory and the control design approach from \cite{Hamzi_Krener}.
\subsection{Choice of a controller}\label{choice_of_a_controller}
The origin of the $z$-subsystem in \eqref{nocenter} can be locally asymptotically stabilized by a controller $u=K_{11}z, K_{11} \in \mathbb{R}^{m \times (n-k) }$, under the assumptions of stabilizability of the pair $(A_2,B_2)$. The feedback matrix $K_{11}$ indirectly determines both the center manifold and the dynamics on the center manifold. Stabilizability, while limiting the freedom in specifying the performance of the $z$-subsystem, also limits the indirect influence of the matrix $K_{11}$ on the centre manifold structure and the stability of the center manifold.  This justifies the assumption of controllability of the pair $(A_2,B_2)$ rather than stabilizability of the pair $(A_2,B_2)$. 

When the overall system cannot be stabilized by just $u=K_{11}z$, feedback of the form $u=K_{11}z+K_{12}y$ must be employed with $K_{12}$ chosen to stabilize the dynamics on the center manifold. This may also be inadequate, following which an additive pseudo-control $K_n(y)$ must be introduced to yield $u=K_{11}z+K_{12}y+K_n(y)$, as in \cite{Hamzi_Krener}. The pseudo-control $K_n:\mathbb{R}^k\rightarrow\mathbb{R}^m$ is a continuously differentiable nonlinear function of $y$ and is chosen to stabilize the dynamics on the center manifold. 
 In the rest of the work, we use the control structure $u=K(y;z)=K_{11}z+K_{12}y+K_n(y)$, as this is the most general form of the controller that can be used in the stabilization of nonlinear systems with center manifolds. Denoting $A_2+B_2K_{11}$ by $A_K$, we arrive at
\begin{equation}
\begin{aligned}
\dot{y} &= A_1y+\tilde{g}_1(y,z,K(y;z)) \\
\dot{z} &= A_Kz+B_2K_{12}y + B_2K_n(y)+\tilde{g}_2(y,z,K(y;z)). 
\label{center}
\end{aligned}
\end{equation}
\subsection{Background from Center Manifold Theory} \label{center_intro}
At this juncture, we recall some important definitions and results from center manifold theory.
%
For the next few results from center manifold theory to hold, the cross-coupling linear term, $B_2K_{12}y$ between the $y$ and $z$ subsystems in \eqref{center} must be eliminated. The cross-coupling term is eliminated using the change of variables $v \triangleq z-Ey$, where the matrix $E$ is found by solving the equation
\begin{align}
A_KE-EA_1+B_2K_{12}=0.
\label{finde}
\end{align}
As the sum of any eigenvalue of $A_1$ and any eigenvalue of $A_K$ is non-zero, we use from \cite{Liaw}, the result which states that there exists a unique matrix $E \in \mathbb{R}^{(n-k)\times k}$ such that \eqref{finde} is satisfied. Using this change of variables and the notations
\begin{align*}
g_1(y,v+&Ey,K(y;v+Ey))=\tilde{g}_1(y,v+Ey,K(y;v+Ey)) \\
g_2(y,v+&Ey,K(y;v+Ey))=B_2K_n(y)+\tilde{g}_2(y,v+Ey,\\
&K(y;v+Ey))-E\tilde{g}_1(y,v+Ey,K(y;v+Ey))
\end{align*}
we arrive at
\begin{equation}
\begin{aligned}
\dot{y} &= A_1y + g_1(y,v+Ey,K(y;v+Ey))\\
\dot{v} &= A_Kv + g_2(y,v+Ey,K(y;v+Ey)).
\label{final_tranf}
\end{aligned}
\end{equation}
Again, it can be easily checked that $g_1$ and $g_2$ also satisfy conditions \eqref{conditions_2}. 
\begin{definition}[Center Manifold]
For the dynamical system \eqref{final_tranf}, a manifold $v=h(y)$ is called a center manifold, if 
\begin{equation}
h(0)=0 \text{\;\;and\;\;} \dfrac{\partial h}{\partial y}(0)=0
\label{center_quad}
\end{equation}
and $v(0)=h(y(0))$ implies $v(t)=h(y(t)),~ \forall\;t\in[0,\infty)$.
\end{definition}

The center manifold is an invariant manifold of system \eqref{final_tranf}. The center manifold $h(y)$ satisfying \eqref{center_quad} is found by solving the partial differential equation
\begin{equation}
\begin{aligned}
0=& A_Kh(y)+g_2(y,h(y)+Ey,K(y;h(y)+Ey))\\
&-\frac{\partial h(y)}{\partial y}(A_1y + g_1(y,h(y)+Ey,K(y;h(y)+Ey))).
\label{partial} 
\end{aligned} 
\end{equation}
\begin{theorem}[Existence of a center manifold \cite{khalil}]\label{existence}
Consider the system \eqref{final_tranf}. If $g_1(y,v)$ and $g_2(y,v)$ are twice continuously differentiable and satisfy conditions in \eqref{conditions_2}, with all eigenvalues of $A_1$ having zero real parts and all eigenvalues of $A_K$ having negative real parts, then there exists a constant $\delta>0$ and a continuously differentiable function $h(y)$, defined for all $\norm{y}\leq \delta$, such that $v=h(y)$ is a center manifold for the system \eqref{final_tranf}.
\label{center_manifold_theorem}
\end{theorem}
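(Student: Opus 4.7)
The plan is to construct the center manifold as the graph $v=h(y)$ of a unique fixed point of a Lyapunov--Perron integral operator, in the spirit of the classical existence proofs \cite{carr, khalil}. Since \eqref{conditions_2} only guarantees smallness of the nonlinearities in a neighborhood of the origin, the first step is to globalize the dynamics by a cutoff. Writing $G_i(y,v):=g_i(y,v+Ey,K(y;v+Ey))$ for $i=1,2$, I would pick a smooth cutoff $\psi:\bar{\mathbb{R}}_+\to[0,1]$ with $\psi\equiv 1$ on $[0,1/2]$ and $\psi\equiv 0$ on $[1,\infty)$, and replace $G_i$ by
\[
\bar G_i(y,v) := G_i(y,v)\,\psi(\norm{y}/\delta)\,\psi(\norm{v}/\delta).
\]
By \eqref{conditions_2}, the $\bar G_i$ are globally bounded and globally Lipschitz in $(y,v)$, with Lipschitz constants $L_i(\delta)\to 0$ as $\delta\to 0$. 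Since $\bar G_i=G_i$ on $\{\norm{y}\le\delta/2,\,\norm{v}\le\delta/2\}$, any invariant graph for the truncated system, restricted to a neighborhood of the origin, is a center manifold of \eqref{final_tranf}.

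Next, fix a small $\rho>0$ and work in the complete metric space
\[
\mathcal{X}_\rho=\{h:\mathbb{R}^k\to\mathbb{R}^{n-k}\mid h(0)=0,\ \mathrm{Lip}(h)\le\rho\}
\]
equipped with the sup-norm. For $h\in\mathcal{X}_\rho$, let $\varphi(\cdot;\eta,h)$ denote the flow of $\dot y=A_1 y+\bar G_1(y,h(y))$ with $\varphi(0;\eta,h)=\eta$; this is defined for all $t\in\mathbb{R}$ because $\bar G_1$ is bounded and Lipschitz. Since $A_K$ is Hurwitz, $\norm{e^{A_K \tau}}\le Me^{-\sigma\tau}$ for $\tau\ge 0$, and the variation-of-constants formula for $\dot v=A_K v+\bar G_2$ along the flow admits a unique bounded solution on $(-\infty,0]$, giving the Lyapunov--Perron operator
\[
(\mathcal{T}h)(\eta):=\int_0^\infty e^{A_K \tau}\bar G_2\bigl(\varphi(-\tau;\eta,h),\,h(\varphi(-\tau;\eta,h))\bigr)\,d\tau.
\]
Integrability is immediate from the exponential decay of $e^{A_K\tau}$ and boundedness of $\bar G_2$, and standard estimates exploiting the smallness of $L_i(\delta)$ show that $\mathcal{T}$ maps $\mathcal{X}_\rho$ into itself.

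The crux of the argument is the contraction estimate. For $h_1,h_2\in\mathcal{X}_\rho$, a Gr\"onwall bound on the difference $\varphi(-\tau;\eta,h_1)-\varphi(-\tau;\eta,h_2)$ controls the sensitivity of the flow to $h$; combined with the Lipschitz property of $\bar G_2$ and $h_i$, and the polynomial bound $\norm{e^{A_1\tau}}\le C(1+|\tau|)^{k-1}$ (recall $A_1$ has only imaginary eigenvalues), one obtains
\[
\norm{\mathcal{T}h_1-\mathcal{T}h_2}_\infty \le \kappa(\delta,\rho)\,\norm{h_1-h_2}_\infty,
\]
where $\kappa(\delta,\rho)\to 0$ as $\delta\to 0$ because $L_i(\delta)\to 0$. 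The main obstacle is coordinating the parameters: $\rho$ must be small enough for the self-mapping property to hold, while $\delta$ must be small enough for $\kappa(\delta,\rho)<1$, and the two choices interact through the Gr\"onwall constants. Once they are simultaneously fixed, Banach's fixed-point theorem yields a unique $h^*\in\mathcal{X}_\rho$, and its restriction to $\{\norm{y}\le\delta/2\}$ is the desired center manifold. The normalizations $h^*(0)=0$ and $\partial h^*/\partial y(0)=0$ follow directly from the vanishing first-order data in \eqref{conditions_2}, and continuous differentiability of $h^*$ is recovered either by working from the start in a $C^1$-weighted function space or by differentiating the fixed-point equation and running a second contraction argument for the derivative.
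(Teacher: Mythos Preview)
The paper does not supply its own proof of this theorem: it is recalled as background from \cite{khalil}, and the remark following Theorem~\ref{red_theorem} notes that the proof in \cite{carr} proceeds via the contraction mapping principle. Your proposal is precisely a sketch of that classical Lyapunov--Perron contraction argument, so it is correct and aligned with the source the paper cites; there is nothing further to compare.
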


System \eqref{final_tranf} satisfies the hypothesis of Theorem \ref{existence} for the existence of a $k$-dimensional center manifold $v=h(y)$ and the dynamics on the center manifold is governed by
\begin{align}
\dot{y}=A_1y+g_1(y,h(y)+Ey,K(y;h(y)+Ey))
\label{dynamics_center}
\end{align}
which is referred to as the reduced system. We next recall a result, known popularly as the \textit{Reduction Theorem}.
\begin{theorem}[Reduction Theorem \cite{khalil}]\label{red_theorem}
Under the assumptions of Theorem \ref{center_manifold_theorem}, if the origin $y=0$ of the reduced system \eqref{dynamics_center} is locally asymptotically stable (unstable), then the origin of the full system \eqref{final_tranf} is locally asymptotically stable (unstable).
\end{theorem}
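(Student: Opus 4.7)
The plan is to flatten the center manifold by introducing the deviation variable $w \triangleq v - h(y)$, which measures how far a trajectory lies from the manifold, and then exploit a cascade-like structure: on $\{w=0\}$ the $y$-dynamics reduces exactly to the reduced system \eqref{dynamics_center}, while off the manifold the $w$-component contracts exponentially because the linear part $A_K$ is Hurwitz. Local asymptotic stability of the reduced system, combined with this exponential contraction, should then propagate to the full $(y,v)$ system via a composite Lyapunov argument.

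First I would derive the $(y,w)$ dynamics. Differentiating $w = v - h(y)$ along \eqref{final_tranf} and substituting the center-manifold PDE \eqref{partial} to cancel the zeroth-order-in-$w$ piece yields, after Taylor expansion,
\begin{align*}
\dot{y} &= A_1 y + g_1\bigl(y,\, h(y)+Ey+w,\, K(y;\,h(y)+Ey+w)\bigr), \\
\dot{w} &= A_K w + R(y,w),
\end{align*}
where $R(y,0) \equiv 0$ by invariance of $h$, and $R$ is $\mathcal{O}(\norm{w}(\norm{y}+\norm{w}))$ by conditions \eqref{conditions_2}. Thus on $w=0$ the $y$-dynamics is exactly the reduced system, and the $w$-subsystem has a stable linear part with perturbations that vanish on the manifold.

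Next I would invoke a converse Lyapunov theorem for the reduced system: local asymptotic stability of \eqref{dynamics_center} yields a smooth $W(y)$ with $\dot{W}\big|_{w=0} \le -\alpha(\norm{y})$ on a neighborhood of the origin. Since $A_K$ is Hurwitz, solve the Lyapunov equation $A_K^\top P + P A_K = -\mathbb{I}$ and set $U(w)=w^\top P w$, which gives $\dot{U}\big|_{R=0} \le -\norm{w}^2$. I would then form the composite candidate $V(y,w) = W(y) + \lambda\, U(w)$ and compute $\dot{V}$ along the full dynamics. The derivative splits into (i) $\dot{W}$ evaluated on the reduced system, (ii) a cross-term from the $w$-dependence of $g_1$, and (iii) $\lambda \dot{U}$ with its $R$-perturbation. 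The main obstacle is precisely controlling the cross-couplings in (ii) and in the $R$-term of (iii): both are higher-order in $(y,w)$ by \eqref{conditions_2}, so Young's inequality lets me absorb them into the negative-definite contributions from $-\alpha(\norm{y})$ and $-\lambda\norm{w}^2$, provided $\lambda$ is chosen large enough and the neighborhood is shrunk. This yields $\dot{V}<0$ on a punctured neighborhood of the origin and establishes local asymptotic stability of the full system.

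For the instability case I would use a Chetaev-type argument: by hypothesis there exists, arbitrarily close to $y=0$, a point whose forward trajectory under the reduced system \eqref{dynamics_center} leaves a prescribed neighborhood. Choosing initial conditions on the center manifold itself (i.e.\ $w(0)=0$), invariance forces the full-system trajectory to remain on $v=h(y)$ and to coincide with the reduced trajectory, which escapes the neighborhood. Hence the origin of \eqref{final_tranf} cannot be stable, completing the proof.
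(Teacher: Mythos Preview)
Your strategy of flattening the manifold via $w=v-h(y)$ and building a composite Lyapunov function is correct in spirit, and the instability argument via invariance of the center manifold is fine. The gap is in the choice $U(w)=w^\top Pw$. Differentiating $\lambda U$ yields a negative term of order $-\lambda c\norm{w}^2$, quadratic in $\norm{w}$, whereas the cross-term (ii) coming from the $w$-dependence of $g_1$ is bounded by $\alpha_5(\norm{y})\,k_1\norm{w}$, which is \emph{linear} in $\norm{w}$. Completing the square in $\norm{w}$ shows that $\dot V<0$ would require $\alpha_5(\norm{y})^2/\alpha_4(\norm{y})$ to stay bounded as $y\to 0$. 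A generic converse Lyapunov function does not deliver this: for the reduced dynamics $\dot y=-y^3$ of Example~1 the natural choice $W(y)=y^2/2$ gives $\alpha_4=|y|^4$, $\alpha_5=|y|$, and the ratio blows up like $|y|^{-2}$. No amount of enlarging $\lambda$ or shrinking the neighborhood fixes this, and Young's inequality cannot simultaneously match the (possibly high) order of $\alpha_4$ in $\norm{y}$ and the quadratic order of $-\lambda c\norm{w}^2$.

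This is precisely why the paper (following Khalil, as noted in the Remark after Proposition~\ref{main_result}) uses the \emph{nonsmooth} composite $V=V_1(y)+\sqrt{w^\top Pw}$ of \eqref{main_lyapunov_fun}. The square root makes the $w$-contribution to $\dot V$ linear, namely $-\tfrac{\lambda_{\min}(Q)}{2\sqrt{\lambda_{\max}(P)}}\norm{w}$, which dominates the linear cross-term $k_vk_1\norm{w}$ directly once the neighborhood is shrunk so that $k_1$ is small enough; no relation between $\alpha_4$ and $\alpha_5$ is needed. The paper does not give a separate proof of Theorem~\ref{red_theorem} but recovers it as the $e=0$ case of Proposition~\ref{main_result}, whose derivation \eqref{LISS_Char} is exactly this nonsmooth computation. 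To repair your argument, either replace $U(w)$ by $\sqrt{w^\top Pw}$, or argue explicitly that one can choose $W$ with $\alpha_5^2\le C\alpha_4$ near the origin---which is possible in specific cases but is not what an off-the-shelf converse theorem provides.
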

%
\subsection{Event-based control and measurement errors} \label{measurement}
In event-triggered implementation, the control is updated at discrete instants $t_i$, $i \geq 0$, called the event times. Between two events, in the interval $[t_i, t_{i+1})$, the control is held constant to $u=K(y(t_i);v(t_i)+Ey(t_i))$ and measurement and actuation errors are introduced to account for the sample and hold nature of the implementation. With the measurement error $e_y(t) \triangleq y(t_i)-y(t)$ and $e_v(t) \triangleq v(t_i)-v(t)$, the control can be rewritten as $u=K(y+e_y;v+e_v+E(y+e_y))$. Under this implementation, the closed-loop system is
\begin{equation}
\begin{aligned}
\dot{y} =& A_1y+g_1(y,v+Ey,u) \\
\dot{v} =& A_Kv +B_2K_{11}e_v+B_2(K_{12}+EK_{11})e_y\\
&+g_2(y,v+Ey,u).
\label{with_error}
\end{aligned}
\end{equation}
For further analysis of the system, we introduce a new coordinate $w=v-h(y)$ (the need for which is explained in section \ref{need}). With control $u=K(y+e_y,w+h(y)+e_v+E(y+e_y))$ and $K_1=[K_{12} + E K_{11} \;\; K_{11}]$ and $e=(e_y,e_v)$, the dynamics \eqref{with_error} in the $(y;w)$ coordinates is found to be
\begin{align}
\dot{y} &= A_1y+g_1(y,w+h(y)+Ey,u)\nonumber\\
\dot{w} &= A_K(w+h(y))+B_2K_1e+ g_2(y,w+h(y)+Ey,u)\nonumber\\
&\hspace{1cm}-\dfrac{\partial h(y)}{\partial y}(A_1y +g_1(y,w+h(y)+Ey,u). \label{two}
\end{align}
Subtracting \eqref{partial} from \eqref{two} and using the following notations
\begin{align*}
N_1(y,w,e) \triangleq &g_1(y,w+h(y)+Ey,K(y+e_y;w+h(y)\\
&+e_v+E(y+e_y)))\\
&-g_1(y,h(y)+Ey,K(y;h(y)+Ey))\\
N_2(y,w,e) \triangleq &g_2(y,w+h(y)+Ey,K(y+e_y;w+h(y)\\
&+e_v+E(y+e_y)))-g_2(y,h(y)+Ey,\\
&K(y;h(y)+Ey))-\dfrac{\partial h(y)}{\partial y}(A_1y+g_1(y,w\\
&+h(y)+Ey,K(w+h(y)+e))-(A_1y\\
&+g_1(y,h(y)+Ey,K(y;h(y)+Ey))))
\end{align*}  
we obtain
\begin{equation}
\begin{aligned}
\dot{y} =& A_1y+g_1(y,h(y)+Ey,K(y;h(y)+Ey)) \\
& \hspace{4.75cm} + N_1(y,w,e) \\
\dot{w} =& A_Kw+B_2K_1e+N_2(y,w,e).  
\label{final}
\end{aligned}
\end{equation}
By substitution, we see that for $i=1, 2$
\begin{align}
N_i(y,0,0)=0, \frac{\partial N_i}{\partial w}(0,0,0)=0 \text{\;and\;} \frac{\partial N_i}{\partial e}(0,0,0)=0.
\label{crucial}
\end{align}
When $N_i$ satisfy conditions \eqref{crucial}, there exists a constant $\delta_{yw}>0$ such that, in the set 
\begin{align}
S=\{(y;w) \; | \; \norm{(y;w)}< \delta_{yw}\}
\label{delta}
\end{align}
we have for $i=1,2$,
\begin{align}
\norm{N_i}\leq k_i\norm{(w;e)}\leq k_i(\norm{w}+\norm{e})
\label{notice_difference}
\end{align}
where the constants $k_i>0$ can be made arbitrarily small by decreasing $\delta_{yw}$. Note that the stability properties of system \eqref{final} are the same as that of system \eqref{forced} in view of the sequence of smooth transformations $w=v-h(y)$, $v=z-Ey$ and $x=T(y;z)$ relating the two systems.

In this work, our design of event-triggered implementation uses the ISS based approach proposed in \cite{tab1} and generalized in \cite{Postoyan}. Verifying that the closed-loop system is ISS with respect to measurement errors forms the first step of this approach. This led us to the question : Does a controller that locally asymptotically stabilizes \eqref{final_tranf} also render \eqref{final} locally input-to-state stable with respect to measurement errors? 
Although the result \cite[Lemma 1]{sontag} answers this question in the affirmative, this qualitative result is useful for event-triggered control only when the class-${\cal K}$ function $\alpha_3$ and $\beta$ that characterize local input-to-state stability in Definition \ref{Local_Input} are known. Motivated by the unavailability of such explicit characterization of input-to-state stability in terms of comparison functions for nonlinear systems with center-manifolds and with the intent of designing event-triggered control, we investigated the local ISS property of the systems under consideration in our preliminary work \cite{CDC_2020}. In the next subsection, we recall an important result from this work.
\subsection{Local input-to-state stability of nonlinear systems with center manifolds}\label{input_to_state_stability}

The Proposition presented in this subsection establishes that a controller that locally asymptotically stabilizes system \eqref{final_tranf}, renders \eqref{final} LISS with respect to measurement errors. In Definition \ref{Local_Input} of local input-to-state stability, it is only required that $V$ be a Lipschitz continuous function. It is well know that a system is input-to-state stable if and only if there exists an ISS Lyapunov function\cite{sontag_coprime}, which is a continuously differentiable function. However, a continuously differentiable function is not necessary to establish ISS \cite{ISS_main}. In the following proposition from \cite{CDC_2020}, LISS of system \eqref{final} is shown by employing a nonsmooth Lyapunov function.
\begin{proposition}[\cite{CDC_2020}]\label{main_result}
Under the assumptions of Theorem \ref{center_manifold_theorem}, if the origin $y=0$ of the reduced system \eqref{dynamics_center} is locally asymptotically stable, then the overall system \eqref{final} is locally input-to-state stable with respect to the error $e=(e_y;e_v)$. 
\end{proposition}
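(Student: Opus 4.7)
The plan is to exhibit an LISS Lyapunov function for \eqref{final} by exploiting the near-cascade structure of the system that is made explicit by \eqref{crucial}–\eqref{notice_difference}. The two subsystems are analyzed separately and then combined through a nonsmooth pointwise maximum, so that the resulting Lyapunov function is locally Lipschitz but not smooth, in line with Definition \ref{Local_Input}.

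\textbf{Step 1 (Lyapunov function for the $w$-subsystem).} Because all eigenvalues of $A_K$ have negative real parts, pick $P \succ 0$ solving $A_K^\top P + P A_K = -Q$ for some $Q \succ 0$, and set $V_2(w) = w^\top P w$. Differentiating along \eqref{final} and invoking the bound $\|N_2\| \leq k_2(\|w\|+\|e\|)$ from \eqref{notice_difference}, I would get
\begin{align*}
\dot{V}_2 \leq -\bigl(\lambda_{\min}(Q) - 2\|P\|k_2\bigr)\|w\|^2 + 2\|P\|\bigl(\|B_2 K_1\| + k_2\bigr)\|w\|\|e\|.
\end{align*}
Since $k_2$ can be made arbitrarily small by shrinking $\delta_{yw}$ in \eqref{delta}, the coefficient of $\|w\|^2$ stays strictly positive, and Young's inequality yields $\dot{V}_2 \leq -c_1\|w\|^2 + c_2\|e\|^2$ with $c_1, c_2 > 0$. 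So $V_2$ is an LISS Lyapunov function for the $w$-subsystem with input $e$.

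\textbf{Step 2 (Lyapunov function for the reduced $y$-dynamics).} With $w=0$ and $e=0$, the $y$-dynamics in \eqref{final} coincides with the reduced system \eqref{dynamics_center}, which is LAS by hypothesis. A converse Lyapunov theorem for local asymptotic stability delivers a smooth $V_1(y)$ and class-$\mathcal{K}$ functions $\alpha_1, \alpha_2, \alpha_3$ with $\alpha_1(\|y\|) \leq V_1(y) \leq \alpha_2(\|y\|)$ and $\nabla V_1(y)\cdot[A_1 y + g_1(y, h(y)+Ey, K(y;h(y)+Ey))] \leq -\alpha_3(\|y\|)$ in a neighborhood of the origin, on which $\|\nabla V_1\|$ is bounded by some constant $M > 0$. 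Using $\|N_1\| \leq k_1(\|w\|+\|e\|)$, the derivative of $V_1$ along \eqref{final} is bounded by
\begin{align*}
\dot{V}_1 \leq -\alpha_3(\|y\|) + Mk_1\bigl(\|w\|+\|e\|\bigr).
\end{align*}

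\textbf{Step 3 (Combined nonsmooth LISS Lyapunov function).} I would form the candidate
\begin{align*}
V(y,w) = \max\{V_1(y),\, \mu V_2(w)\}
\end{align*}
with a weight $\mu > 0$ chosen later. This $V$ is locally Lipschitz; lower and upper bounds in terms of $\|(y;w)\|$ follow from $V(y,w) \geq \max\{\alpha_1(\|y\|),\, \mu\lambda_{\min}(P)\|w\|^2\}$ and $V(y,w) \leq \alpha_2(\|y\|) + \mu\lambda_{\max}(P)\|w\|^2$. The Clarke subdifferential $\partial V$ is a singleton in the interior of each region $\{V_1 \gtrless \mu V_2\}$, and on the switching manifold $\{V_1 = \mu V_2\}$ it equals the convex hull of $\{\nabla V_1, \mu\nabla V_2\}$. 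For $\zeta \in \partial V$, $\zeta^\top \dot{x}$ is a convex combination of the two per-subsystem bounds. In the region $V_1 \geq \mu V_2$, the coupling term $Mk_1\|w\|$ is controlled through $\|w\|^2 \leq V_2/\lambda_{\min}(P) \leq V_1/(\mu\lambda_{\min}(P)) \leq \alpha_2(\|y\|)/(\mu\lambda_{\min}(P))$; in the region $V_1 \leq \mu V_2$, the $w$-decay $-c_1\|w\|^2$ is the dominant negative contribution. By choosing $\mu$ large and shrinking $\delta_{yw}$ so that $k_1$, $k_2$ are small, both regions give an estimate of the form
\begin{align*}
\zeta^\top \dot{x} \leq -\alpha(\|(y;w)\|) + \beta(\|e\|), \qquad \forall \zeta \in \partial V,
\end{align*}
for some class-$\mathcal{K}$ functions $\alpha$ and $\beta$, which is exactly the LISS Lyapunov condition of Definition \ref{Local_Input}.

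\textbf{Main obstacle.} The hardest part is the cancellation of the cross-coupling $Mk_1\|w\|$ in Step 3. Since $\alpha_2$ and $\alpha_3$ are only class-$\mathcal{K}$ (not polynomially comparable), the inequality $Mk_1\|w\| \leq \tfrac{1}{2}\alpha_3(\|y\|)$ on $\{V_1 \geq \mu V_2\}$ cannot be enforced by increasing $\mu$ alone and may require further shrinking of $\delta_{yw}$ so that $k_1$ depends on the neighborhood in a compatible way. The nonsmooth analysis at the switching surface, where both subdifferential elements mix, is what forces the locally Lipschitz (rather than $C^1$) formulation of Definition \ref{Local_Input} and is the technical reason a nonsmooth Lyapunov function is used.
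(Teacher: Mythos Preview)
Your proposal has a genuine gap, and in fact you already put your finger on it in the ``Main obstacle'' paragraph without resolving it. In the region $\{V_1 \geq \mu V_2\}$ your derivative estimate is $\dot V_1 \leq -\alpha_3(\|y\|) + Mk_1\|w\| + Mk_1\|e\|$, and the only way the region constraint helps is through $\|w\| \leq \sqrt{\alpha_2(\|y\|)/(\mu\lambda_{\min}(P))}$. You would then need
\[
Mk_1\,\sqrt{\alpha_2(\|y\|)/(\mu\lambda_{\min}(P))} \;\leq\; \tfrac12\,\alpha_3(\|y\|)
\]
on a full neighbourhood of $y=0$. For systems with centre manifolds this is exactly the bad case: the reduced dynamics \eqref{dynamics_center} is typically only polynomially stable, so $\alpha_3(\|y\|)\in\mathcal{O}(\|y\|^p)$ with $p>1$ (cf.\ Assumption~\ref{second_assumption} and the discussion around \eqref{ISS_char_1}), whereas $\sqrt{\alpha_2(\|y\|)}$ behaves at best like $\|y\|$. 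The left-hand side above then dominates the right-hand side as $\|y\|\to 0$ for \emph{any} fixed $\mu$ and any fixed $k_1>0$; shrinking $\delta_{yw}$ makes $k_1$ smaller but cannot make it vanish on a whole neighbourhood. So the max construction does not close.

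The paper avoids this obstruction not by a different weighting but by a different \emph{scaling} of the $w$-part: it takes the additive candidate $V(y,w)=V_1(y)+\sqrt{w^{\top}Pw}$. The square root is the whole point. Differentiating $\sqrt{w^{\top}Pw}$ along \eqref{final} produces a decay term that is \emph{linear} in $\|w\|$, namely $-\tfrac{\lambda_{\min}(Q)}{2\sqrt{\lambda_{\max}(P)}}\|w\|$, together with perturbations that are also linear in $\|w\|$ and $\|e\|$. Consequently the cross term $k_vk_1\|w\|$ coming from $\dot V_1$ is absorbed directly by this linear $w$-decay (after shrinking $\delta_{yw}$ so that $k_1,k_2$ are small), with no need to compare anything against $\alpha_3(\|y\|)$. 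This is what yields the clean LISS estimate \eqref{LISS_Char}. Your quadratic $V_2$ gives only a $-c_1\|w\|^2$ decay, which is too weak near $w=0$ to cancel a term linear in $\|w\|$; that mismatch is precisely why the square-root (nonsmooth) Lyapunov function is used here and in the classical Lyapunov proof of the Reduction Theorem in \cite{khalil}.
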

The proof involved the construction of a nonsmooth ISS Lyapunov function $V:\mathbb{R}^n\rightarrow\mathbb{\bar{R}}_+$ 
\begin{align}
V(y,w)=V_1(y)+\sqrt{w^{\top}Pw}.
\label{main_lyapunov_fun}
\end{align}
By the converse Lyapunov theorem, the local asymptotic stability of the equilibrium point of the center manifold dynamics implies the existence of a continuously differentiable Lyapunov function $V_1 :\mathbb{R}^k\rightarrow \mathbb{\bar{R}}_+$ and class-${\cal K}$ functions $\alpha_4, \alpha_5$ such that 
\begin{align*}
\dot{V}_1&=\dfrac{\partial V_1}{\partial y}(A_1y+g_1(y,h(y)+Ey,K(y;h(y)+Ey)))\\
&\leq -\alpha_4(\norm{y}), \\
&\hspace{2cm} \left\lvert\left\lvert\dfrac{\partial V_1}{\partial y}\right\lvert\right\lvert\leq\alpha_5(\norm{y})\leq k_v
\end{align*}
for some $k_v>0$, in a neighbourhood of the origin. Also, since the matrix $A_K$ is Hurwitz, for every $Q\succ0$ there exists a unique $P\succ0$ such that $A_K^{\top}P+PA_{K}=-Q$. 
Taking the time derivative of $V$ along the trajectories of the system \eqref{final}, it was shown that
\begin{equation}
\begin{aligned}
\zeta^{\top}(\dot{y};\dot{w})&\leq\underbrace{-\alpha_4(\norm{y})-(1-s_f)\frac{\lambda_{min}(Q)}{2\sqrt{\lambda_{max}(P)}}\norm{w}}_{-\alpha_D(\norm{(y;w)})}\\
&\hspace{0.25cm}+\underbrace{\left(k_vk_1+k_2\frac{\lambda_{max}(P)}{\sqrt{\lambda_{min}(P)}}+\frac{\norm{PB_2K_1}}{\sqrt{\lambda_{min}(P)}}\right)\norm{e}}_{\beta_G(\norm{e}) }\\
&\hspace{0.5cm}+\gamma(\norm{w})
\end{aligned}
\label{LISS_Char}
\end{equation}
where $\zeta \in \partial V(x)$ and $s_f \in (0,1)$. The constant $\delta_{yw}$ defining the set $S$ in \eqref{delta} is chosen such that the constants $k_1$ and $k_2$ in \eqref{notice_difference} lead to $\gamma(\norm{w})=\left(k_vk_1+k_2\frac{\lambda_{max}(P)}{\sqrt{\lambda_{min}(P)}} -\frac{s_f \lambda_{min}(Q)}{2\sqrt{\lambda_{max}(P)}}\right)\norm{w}\leq 0$. The functions $\alpha_D(\norm{(y;w)})$ and $\beta_G(\norm{e})$ are class-${\cal K}$ functions of $\norm{(y;w)}$ and $\norm{e}$ respectively. From Definition \ref{Local_Input}, \eqref{LISS_Char} implies that
the origin of system \eqref{final} is locally input-to-state stable with respect to the error $e$.

Proposition \ref{main_result} generalizes the \textit{Reduction Theorem} (Theorem \ref{red_theorem}), as in the absence of the error $e$, local asymptotic stability of the overall system is recovered.
As mentioned before, the stability properties of system \eqref{final} are the same as that of system \eqref{forced}, in view of the sequence of smooth transformations relating the two systems. 
\begin{remark}
The work in \cite{carr} consolidated earlier work on center manifold theory and presented proofs of Theorems \ref{existence} and \ref{red_theorem} using the contraction mapping principle. In \cite{khalil}, a Lyapunov based proof of the \textit{Reduction Theorem} was presented for the first time using a nonsmooth Lyapunov function. The structure of the ISS Lyapunov functions chosen in our work is inspired by the nonsmooth Lyapunov function in \cite{khalil}. In \cite{Hamzi_Krener}, where design of control laws in the presence of center manifolds is presented, also uses the same nonsmooth Lyapunov function.
\end{remark}
 In the rest of this work, for ease of notation, we denote by $\dot{V}$, the derivative of $V$ when it exists and $\zeta^{\top}(\dot{y};\dot{w}), \zeta\in\partial V$, when the derivative does not exist. When we arrive at inequalities of the form $\dot{V} \leq f_r(y,w)$, it is made sure that such inequalities hold in both the sets where the function is differentiable and non-differentiable.    
\section{Event-triggered control} \label{event_4}
In this section, we use the LISS characterization of nonlinear systems with center manifolds presented in the previous section, to propose event-triggered control implementations. We begin by assessing the applicability of methods that already exist in literature. 
The LISS characterization 
\begin{align}
\dot{V}\leq-\alpha_D(\norm{(y;w)})+\beta_G(\norm{e})
\label{ISS_char_1}
\end{align} 
where, 
$\alpha_D$ and $\beta_G$ are class-${\cal K}$ functions was derived in the previous section. In \cite{tab1} and \cite{Postoyan}, a relative threshold based event-triggered control was proposed, where the events are triggered and the control is updated when the condition 
\begin{align}
\beta_G(\norm{e})\geq \sigma\alpha_D(\norm{(y;w)}), \;\; \sigma \in (0,1)
\label{trig_4}
\end{align}
is satisfied at event times $t_i, i \geq 0$. The input $u(t)$ under this implementation evolves as 
\begin{equation}
u(t) =\begin{cases} K(y(t_i);z(t_i)) & \text{if\;\;} \beta_G(\norm{e})< \sigma\alpha_D(\norm{(y;w)}) \\
K(y(t);z(t)) & \text{if\;\;} \beta_G(\norm{e})\geq\sigma\alpha_D(\norm{(y;w)}).
\end{cases}
\label{event_41}
\end{equation}
Note that the state $x$ of system \eqref{forced} is measured, from which $(y;z)$ is obtained through a linear transformation and the control $K(y;z)$ (design of which has been discussed in Section \ref{choice_of_a_controller}) is computed. From \eqref{ISS_char_1} and \eqref{event_41}, we have
\begin{align*}
\dot{V}\leq-(1-\sigma)\alpha_D(\norm{(y;w)}) < 0, \;\; \forall \;(y;w) \neq 0 
\end{align*}
%
and local asymptotic stability of the origin of system \eqref{final} is guaranteed. Although the thresholding condition \eqref{trig_4} is easily and directly derived from the LISS characterization, we see two major hurdles on closer observation.
\begin{enumerate}
    \item The triggering rule \eqref{trig_4} can be accurately checked only when $w=v-h(y)$ can be exactly computed. The center manifold $h(y)$ is determined by solving \eqref{partial} and there are systems for which $h(y)$ can be exactly computed (one such system is considered in Example 1). However, for most systems, only an approximation of $h(y)$ can be found.
    \item The non-existence of Zeno behaviour must be ensured. This involves showing that the inter-event times $t_{i+1}-t_i$ are lower bounded by a positive constant for all $i \geq 0$. Sufficient conditions that rule-out Zeno behaviour have been proposed in \cite{tab1, Postoyan} and these conditions require that the comparison functions $\alpha_D$ and $\beta_G$ in \eqref{ISS_char_1} are such that $\alpha_D^{-1} \circ \beta_G$ is Lipschitz continuous over compact sets. This assumption on the comparison functions has since been made in \cite{wang_lemmon, Girard_dynamic, NOZARI}, among many others. In our case, this assumption holds only when $\alpha_4 \in {\cal O}(\norm{y}^p)$, $p\leq1$. When $\alpha_4 \in {\cal O}(\norm{y}^p)$, $p>1$, the sufficient conditions from \cite{tab1, Postoyan} are not satisfied and no conclusion can be drawn regarding the existence or non-existence of Zeno behaviour under the implementation \eqref{event_41}.   
\end{enumerate}
In this section, we look to overcome these hurdles for the systems under consideration. To begin with, we focus on systems for which the center manifold can be exactly computed and look to overcome the second difficulty by proposing triggering conditions which are different from \eqref{trig_4}. Then, we turn our attention on systems for which the center manifold can only be approximately computed and propose modified triggering conditions.  

We begin by considering a class of nonlinear systems with center manifolds for which, in the Lyapunov characterization \eqref{ISS_char_1}, $\alpha_4 \in {\cal O}(\norm{y}^p)$, $p \leq 1$. As mentioned before, for such systems, the comparison functions in the LISS characterization \eqref{ISS_char_1} satisfy the sufficient conditions from \cite{tab1, Postoyan} and the triggering condition \eqref{trig_4} can be used for Zeno-free event-triggered implementation. However, the function $\beta_G$ in \eqref{ISS_char_1} is dependent on the constants $k_v$, $k_1$ and $k_2$, which characterize the local region where \eqref{ISS_char_1} holds and must be known to check the triggering condition \eqref{trig_4}. The triggering conditions proposed in this work possess a simpler relative thresholding structure and are easier to check than \eqref{trig_4}.  In the following Proposition, we present the triggering condition
\begin{equation}
\begin{aligned}
    \norm{e}\geq&\sigma\norm{(y;w)}, \\
    0< \sigma &\leq \frac{\lambda_{min}(Q)}{4\norm{PBK_1}}\frac{\sqrt{\lambda_{min}(P)}}{\sqrt{\lambda_{max}(P)}}
    \label{trig_4_simple}
\end{aligned}
\end{equation} 
which is computationally a simpler condition to check than \eqref{trig_4}.
\begin{proposition}
\label{first_new_proposition}
Consider the system \eqref{final_tranf}. Under the assumptions of Theorem \ref{center_manifold_theorem}, if the origin $y=0$ of the reduced system \eqref{dynamics_center} is locally asymptotically stable and there exists a LISS Lyapunov function such that in \eqref{ISS_char_1}, $\alpha_4 \in {\cal O}(\norm{y}^p)$, $p \leq 1$, then the origin of the overall system \eqref{final} is locally asymptotically stable under the event-triggered implementation with the triggering condition 
\eqref{trig_4_simple}. Moreover, the inter-execution times $(t_{i+1}-t_i)$ are lower bounded by a positive constant for all $i \geq 0$.
\end{proposition}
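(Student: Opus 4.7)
The plan is to prove local asymptotic stability and Zeno-freeness separately, using the LISS bound \eqref{LISS_Char} as the starting point.

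For stability, between events the triggering rule gives $\norm{e}\leq\sigma\norm{(y;w)}\leq\sigma(\norm{y}+\norm{w})$. Substituting this into \eqref{LISS_Char}, I split the coefficient of $\norm{e}$ in $\beta_G$ into the ``irreducible'' linear-injection part $\norm{PB_2K_1}/\sqrt{\lambda_{min}(P)}$ and the remaining part $k_vk_1+k_2\lambda_{max}(P)/\sqrt{\lambda_{min}(P)}$ which, by \eqref{notice_difference}, can be made arbitrarily small by shrinking $\delta_{yw}$. The explicit bound on $\sigma$ in \eqref{trig_4_simple} ensures $\sigma\norm{PB_2K_1}/\sqrt{\lambda_{min}(P)}\leq \lambda_{min}(Q)/(4\sqrt{\lambda_{max}(P)})$. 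Choosing $s_f<1/2$ in the LISS construction and then $\delta_{yw}$ small enough, the $\norm{w}$ contributions from the triggering terms are strictly dominated by the $(1-s_f)\lambda_{min}(Q)/(2\sqrt{\lambda_{max}(P)})\norm{w}$ decrease term, leaving only a linear-in-$\norm{y}$ residual. The hypothesis $\alpha_4\in\mathcal{O}(\norm{y}^p)$, $p\leq 1$, read as a statement on the growth rate of $\alpha_4$, ensures that $\alpha_4(\norm{y})/\norm{y}$ does not vanish as $\norm{y}\to 0$, so $\alpha_4(\norm{y})$ absorbs this residual on a sufficiently small neighborhood. Thus $\dot V\leq -\alpha'(\norm{(y;w)})$ for some class-${\cal K}$ function $\alpha'$; a sublevel set of $V$ contained in this neighborhood (and in $S$) is forward invariant, and local asymptotic stability of the origin follows by standard Lyapunov arguments.

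For Zeno-freeness, I monitor $\phi(t):=\norm{e(t)}/\norm{(y(t);w(t))}$ between consecutive events. Since $\dot e=-(\dot y;\dot v)$ and $h$ satisfies $h(0)=0$, $\partial h/\partial y(0)=0$, I can pass between the norms of $(y;v)$ and $(y;w)$ and between $\dot v$ and $\dot w+(\partial h/\partial y)\dot y$ with constants uniform on the forward-invariant set. Combining this with the dynamics \eqref{with_error} and the bounds \eqref{notice_difference} yields positive constants $M,M',L,L'$ for which
\begin{align*}
\norm{(\dot y;\dot w)}\leq M\norm{(y;w)}+M'\norm{e},\quad \norm{\dot e}\leq L\norm{(y;w)}+L'\norm{e}.
\end{align*}
The standard estimate $\dot\phi\leq \norm{\dot e}/\norm{(y;w)}+\phi\cdot\norm{(\dot y;\dot w)}/\norm{(y;w)}$ then gives the scalar quadratic differential inequality $\dot\phi\leq L+(L'+M)\phi+M'\phi^2$. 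Since $e(t_i^+)=0$ forces $\phi(t_i^+)=0$, the comparison lemma delivers a uniform positive time $\tau^*$ for $\phi$ to climb to $\sigma$, so $t_{i+1}-t_i\geq\tau^*>0$ for every $i\geq 0$.

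The main technical obstacle is the coordinated choice of $s_f$, $\delta_{yw}$, and a Lyapunov sublevel set so that invariance of the trajectory, the absorption inequalities in the stability step, and the uniform Lipschitz-type bounds underpinning the Zeno estimate all hold simultaneously on the same region. The hypothesis $p\leq 1$ enters only at the final absorption in the stability argument: for $p>1$, $\alpha_4$ becomes super-linear near the origin, $\alpha_4(\norm{y})/\norm{y}\to 0$, and the linear residual from the triggering rule can no longer be absorbed, which is precisely the difficulty that later sections address with alternative triggering conditions.
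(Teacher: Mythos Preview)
Your proposal is correct and follows essentially the same route as the paper: both arguments substitute the triggering bound $\norm{e}\leq\sigma(\norm{y}+\norm{w})$ into the LISS estimate \eqref{LISS_Char}, use the explicit upper bound on $\sigma$ in \eqref{trig_4_simple} to dominate the irreducible error term $\norm{PB_2K_1}/\sqrt{\lambda_{min}(P)}$ by a fraction of the $\norm{w}$-decrease, shrink $\delta_{yw}$ so that the $k_1,k_2$-dependent residuals are absorbed, and then invoke the hypothesis $\alpha_4\in\mathcal{O}(\norm{y}^p)$, $p\leq 1$ (read, as the paper does, as $\alpha_4(\norm{y})\geq l_1\norm{y}$ near zero) to absorb the linear-in-$\norm{y}$ residual. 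For Zeno-freeness both track $\norm{e}/\norm{(y;w)}$ and bound its growth via Lipschitz-type estimates on the dynamics; the paper appeals directly to \cite[Theorem~III.1]{tab1} and obtains the simpler bound $\tau=\sigma/(L(1+\sigma))$, while you carry the extra quadratic term and, more carefully than the paper, handle the fact that $\dot e=-(\dot y;\dot v)$ must be converted to $(y;w)$-coordinates via $\dot v=\dot w+(\partial h/\partial y)\dot y$.
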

\begin{proof}
Consider the LISS characterization \eqref{LISS_Char} of system \eqref{final}.
\begin{align*}
\dot{V}\leq&-\alpha_4(\norm{y})-(1-s_f)\frac{\lambda_{min}(Q)}{2\sqrt{\lambda_{max}(P)}}\norm{w}\\
&+\left(k_vk_1+k_2\frac{\lambda_{max}(P)}{\sqrt{\lambda_{min}(P)}} -\frac{s_f \lambda_{min}(Q)}{2\sqrt{\lambda_{max}(P)}}\right)\norm{w}\\
&\hspace{0.15cm}+\left(k_vk_1+k_2\frac{\lambda_{max}(P)}{\sqrt{\lambda_{min}(P)}}+\frac{\norm{PB_2K_1}}{\sqrt{\lambda_{min}(P)}}\right)\norm{e}.
\end{align*}
As $\alpha_4 \in {\cal O}(\norm{y}^p)$, $p \leq 1$, there exists a neighbourhood of the origin in which $\alpha_4(\norm{y}) \geq l_1\norm{y}$ or $-\alpha_4(\norm{y}) \leq -l_1\norm{y}$, for some $l_1>0$. Using the notation $m_{p_1}=\frac{\norm{PB_2K_1}}{\sqrt{\lambda_{min}(P)}}$ and $\bar{m}_{p_1}=\left(k_vk_1+k_2\frac{\lambda_{max}(P)}{\sqrt{\lambda_{min}(P)}}+\frac{\norm{PB_2K_1}}{\sqrt{\lambda_{min}(P)}}\right)$ and with the event-triggering mechanism ensuring $\norm{e}\leq \sigma\norm{(y;w)}\leq \sigma (\norm{y}+\norm{w})$, we arrive at 
\begin{align}
\dot{V}&\leq-(l_1-\bar{m}_{p_1}\sigma)\norm{y}-\left(\frac{\lambda_{min}(Q)}{4\sqrt{\lambda_{max}(P)}}-m_{p_1}\sigma\right)\norm{w}\nonumber\\
&\hspace{0.05cm}+\left((1+\sigma)(k_vk_1+\frac{k_2\lambda_{max}(P)}{\sqrt{\lambda_{min}(P)}})-\frac{\lambda_{min}(Q)}{4\sqrt{\lambda_{max}(P)}}\right)\norm{w}.
\end{align}
When $\sigma$ is chosen as in \eqref{trig_4_simple} and $\delta_{yw}$ defining the set $S$ in \eqref{delta} is chosen such that the constants $k_1$ and $k_2$ satisfy
\begin{align*}
\left((1+\sigma)\left(k_vk_1+k_2\frac{\lambda_{max}(P)}{\sqrt{\lambda_{min}(P)}}\right)-\frac{\lambda_{min}(Q)}{4\sqrt{\lambda_{max}(P)}}\right) \leq 0 \\
(l_1-\bar{m}_{p_1}\sigma)>0,
\end{align*} 
we have
\begin{align*}
\dot{V}&\leq-(l_1-\bar{m}_{p_1}\sigma)\norm{y}-(\frac{\lambda_{min}(Q)}{4\sqrt{\lambda_{max}(P)}}-m_{p_1}\sigma)\norm{w} < 0,
\end{align*}
which implies that local asymptotic stability of the origin $(y;w)=0$ of system \eqref{final} is guaranteed. The existence of a positive lower bound for the inter-execution times $(t_{i+1}-t_i)$ for all $i\geq 0$ can be established as in the proof presented in \cite[Theorem III.1]{tab1}, where the evolution of the fraction $\frac{\norm{e(t)}}{\norm{(y(t);w(t))}}$ is analysed to find a lower bound $\tau$ for the time taken by the fraction to reach $\sigma$ from zero. For the triggering condition \eqref{trig_4_simple}, $\tau=\sigma/(L(1+\sigma))>0$, where $L=\text{min}(L_1,L_2)$ and the constants $L_1$ and $L_2$ are such that $\norm{\dot{y}}\leq L_1\norm{(y;w;e)}\leq L_1\norm{(y;w)}+L_1\norm{e}$ and $\norm{\dot{w}}\leq L_2\norm{(y;w;e)}\leq L_2\norm{(y;w)}+L_2\norm{e}$.
Therefore, the inter-execution times are uniformly lower bounded by a positive constant. 
\end{proof}
\begin{remark}
\label{remark_prop_2}
In showing the non-existence of Zeno behaviour for the triggering condition \eqref{trig_4}, the proof of \cite[Theorem III.1]{tab1} analyses an equivalent triggering condition $(\sigma\alpha_D)^{-1} \circ \beta_G (\norm{e}) \geq \norm{x}$. Under the assumption that the function $(\sigma\alpha_D)^{-1} \circ \beta_G$ is Lipschitz continuous, we have $(\sigma\alpha_D)^{-1} \circ \beta_G (\norm{e}) \leq P_l \norm{e} \leq \norm{x}$, where $P_l$ is the Lipschitz constant. The time taken by $\norm{e}/\norm{x}$ to reach $1/P_l$ from zero serves as an under-approximation for the time taken by $\alpha_D^{-1} \circ \beta_G (\norm{e})$ to reach $\norm{x}$. As the triggering condition in our case is chosen to be $\norm{e} \geq \sigma \norm{x}$ (with $x=(y;w)$), the proof of non-existence of Zeno behaviour
begins directly with the analysis of the evolution of the fraction $\norm{e}/\norm{(y;w)}$. 
\end{remark}
Next, we consider a class of systems for which, $\alpha_4 \in {\cal O}(\norm{y}^p)$, $p>1$ in the LISS characterization \eqref{ISS_char_1}. This case arises when the function $g_1$ in the reduced system \eqref{dynamics_center} has a polynomial approximation in a neighbourhood of the origin, 
that is, $\norm{g_1} \leq k_5\norm{y}^{p}$, $p>1$ in a neighbourhood of the origin and $V_1$ in \eqref{main_lyapunov_fun} is chosen as $\norm{y}$ or $\norm{y}^2$. For simplicity, in the rest of this work, we use $g_1$ to denote $g_1(y,h(y)+Ey,K(y;h(y)+Ey))$. 
\begin{assumption}
\label{second_assumption}
For the system \eqref{final_tranf}, the matrix $A_1=0$. In the dynamics of the reduced system \eqref{dynamics_center}, the function $g_1$ is such that $\norm{g_1} \leq k_5 \norm{y}^p$ and $y^{\top} g_1 \leq -k_6 \norm{y}^{p+1}$, with $p>1$ for some $k_5,k_6>0$ in a neighbourhood of the origin. The controller $u=K(y;z)=K_{11}z+K_{12}y+K_n(y)=K_1(y;v)+K_n(y)$ considered in subsection \ref{choice_of_a_controller} is such that $K_n(y)=0$ and the matrix $K_1=[K_{12} + E K_{11} \;\; K_{11}]=[0 \;\; K_{11}]$. 
\end{assumption}
As the focus of this work is on stabilization of the origin, we do not focus on systems for which the matrix $A_1$ has purely imaginary eigenvalues, leading to periodic orbits. Therefore, the matrix $A_1$ is assumed to be a zero matrix. Models and controllers of the two examples and the Mobile Inverted Pendulum robot considered in this work satisfy the conditions of Assumption \ref{second_assumption}. In the case of Examples 1 and 2, the controller $u=K_{11}z$ suffices to stabilize the dynamics on the center manifold and $K_{12}$ and $E$ are zero. In the case of the MIP robot, $K_{12}, E$ and $K_{11}$ are such that $K_{12} + E K_{11}=0$. The utility of the particular structure of the matrix $K_1$ is in showing both asymptotic stability of the origin and the non-existence on Zeno behaviour. This restriction on the matrix $K_1$ will be relaxed in Proposition \ref{third_new_proposition}.


The proof of the proposition presented next employs a non-smooth Lyapunov function. Specifically, a sum of two functions of the form $\sqrt{x^{\top}Px}$. We now recall a result that presents the subdifferential of the function $\sqrt{x^{\top}Px}$. 
\begin{lemma}[{\cite[page 181]{convex_funda}}]\label{subdiff_quad}
 Let $P \in \mathbb{R}^{n_1 \times n_1}$ be a symmetric positive definite matrix and $P = M^{\top}DM$ be its eigen-decomposition, where $M\in \mathbb{R}^{n_1 \times n_1}$ is an orthonormal matrix and $D\in \mathbb{R}^{n_1 \times n_1}$ is a diagonal matrix. The subdifferential of the function $f_P=\sqrt{x^{\top}Px} : \mathbb{R}^{n_1} \rightarrow \bar{\mathbb{R}}_{+}$ at $x=0$ is $\partial f_P(0) = \{g \in \mathbb{R}^{n_1} : \norm{g^{\top}MD^{-\frac{1}{2}}} \leq 1\}$. 
\end{lemma}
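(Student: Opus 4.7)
The plan is to recognize $f_P$ as a Euclidean norm composed with an invertible linear map and then read off its subdifferential at the origin directly from the subgradient inequality. Using the decomposition $P = M^{\top} D M$ together with $P \succ 0$ (so $D \succ 0$), I set $A := D^{1/2} M$, which gives
\begin{align*}
f_P(x) = \sqrt{x^{\top} M^{\top} D M x} = \norm{A x}.
\end{align*}
Since $D^{1/2}$ is invertible and $M$ is orthonormal ($M^{\top} M = \mathbb{I}_{n_1}$), $A$ is invertible with $A^{-1} = M^{\top} D^{-1/2}$. The function $f_P$ is therefore convex, positively homogeneous of degree one, and satisfies $f_P(0) = 0$.

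The next step is to apply the definition of the subdifferential at the origin. A vector $g \in \mathbb{R}^{n_1}$ belongs to $\partial f_P(0)$ if and only if $g^{\top} x \leq f_P(x) = \norm{A x}$ for every $x \in \mathbb{R}^{n_1}$. Because $A$ is a bijection on $\mathbb{R}^{n_1}$, the change of variables $x = A^{-1} u$ lets $u$ range freely over $\mathbb{R}^{n_1}$, so the condition reduces to
\begin{align*}
(g^{\top} A^{-1}) u \leq \norm{u} \quad \text{for every } u \in \mathbb{R}^{n_1}.
\end{align*}

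Finally I would invoke the elementary identity $\sup_{\norm{u} \leq 1} w^{\top} u = \norm{w}$, a direct consequence of Cauchy--Schwarz with equality at $u = w/\norm{w}$, to conclude that the above inequality holds for every $u$ if and only if the Euclidean norm of the row vector $g^{\top} A^{-1} = g^{\top} M^{\top} D^{-1/2}$ is at most one. This is precisely the claimed characterization of $\partial f_P(0)$. The argument is entirely routine and I do not anticipate a real obstacle; the only care required is in tracking transposes when reading the final expression and in verifying that the change of variables $x \mapsto u = A x$ is legitimate, which rests on the invertibility of $A$ secured by $P \succ 0$ and the orthonormality of $M$.
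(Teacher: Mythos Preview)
The paper supplies no proof of this lemma; it is simply quoted from the cited reference, so there is nothing to compare against. Your argument is the standard one and is correct in spirit: writing $f_P(x)=\norm{Ax}$ with $A=D^{1/2}M$ and unwinding the subgradient inequality at the origin via the change of variables $u=Ax$ is exactly how one computes this subdifferential.

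One point deserves a second look. Your computation yields $g^{\top}A^{-1}=g^{\top}M^{\top}D^{-1/2}$, whereas the lemma as stated in the paper has $g^{\top}MD^{-1/2}$. These are not the same in general: your version gives $\norm{g^{\top}M^{\top}D^{-1/2}}^2=g^{\top}M^{\top}D^{-1}Mg=g^{\top}P^{-1}g$, which is the correct description of the subdifferential (the polar ellipsoid $\{g:g^{\top}P^{-1}g\le 1\}$), while the paper's expression would give $g^{\top}MD^{-1}M^{\top}g$, which need not equal $g^{\top}P^{-1}g$. So your derivation is right and the discrepancy is almost certainly a transpose slip in the paper's statement; you should flag that rather than assert that your formula is ``precisely the claimed characterization''.
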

The above result will be used in evaluating the subdifferential of a nonsmooth Lyapunov function in the proof of the following proposition.
\begin{proposition}
\label{second_new_proposition}
Consider the system \eqref{final_tranf}. Under the assumptions of Theorem \ref{center_manifold_theorem}, if the origin $y=0$ of the reduced system \eqref{dynamics_center} is locally asymptotically stable and the conditions in Assumption \ref{second_assumption} are satisfied,
then the origin of the overall system \eqref{final} is locally asymptotically stable under the event-triggering condition
\begin{equation}
\begin{aligned}
\norm{e_v} &\geq \sigma(\norm{w}+\norm{y}^{(p+1)}) \\
0<\sigma &\leq \frac{(1-s_f)\lambda_{min}(Q)}{2\norm{PBK_1}}\frac{\sqrt{\lambda_{min}(P)}}{\sqrt{\lambda_{max}(P)}} 
\label{sigma_sf}
\end{aligned}
\end{equation}
where $s_f \in (0,1)$. Moreover, the inter-execution times $(t_{i+1}-t_i)$ are lower bounded by a positive constant for all $i \geq 0$.
\end{proposition}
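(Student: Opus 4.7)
The plan is to construct a nonsmooth Lyapunov function $V(y,w)=\sqrt{y^\top y}+\sqrt{w^\top P w}$ (fitting the hint that Lemma \ref{subdiff_quad} will be needed to evaluate its subdifferential at $y=0$ and $w=0$), and to reuse the machinery already developed in the proof of Proposition \ref{main_result}. Assumption \ref{second_assumption} is the crucial simplification: because $K_n\equiv 0$ and $K_1=[0\ \ K_{11}]$, the controller depends on the measurement error only through $e_v$, so that $N_1, N_2$ in \eqref{final} may be bounded as $\norm{N_i}\leq k_i(\norm{w}+\norm{e_v})$ inside the local set $S$. The polynomial estimate $y^\top g_1\leq -k_6\norm{y}^{p+1}$ in Assumption \ref{second_assumption} ensures that the $y$-part of $V$ is strictly dissipative: differentiating $\norm{y}$ along $\dot y=g_1+N_1$ yields $\frac{y^\top g_1}{\norm{y}}\leq -k_6\norm{y}^p$ plus cross terms in $\norm{w}$ and $\norm{e_v}$.

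With these pieces in place, I would compute a subgradient-based bound, handling the singular axes via Lemma \ref{subdiff_quad}. Collecting the $y$-dissipation, the standard $w$-dissipation coming from $A_K^\top P+PA_K=-Q$, the $B_2K_{11}e_v$ contribution and the $N_1,N_2$ terms, and shrinking $\delta_{yw}$ so that the $k_1,k_2$ nonlinearity constants are absorbed into a fraction $s_f$ of the $\norm{w}$-dissipation (exactly as in the derivation of \eqref{LISS_Char}), I expect a bound of the form
\begin{align*}
\dot V \leq {}& -k_6\norm{y}^p - \tfrac{(1-s_f)\lambda_{min}(Q)}{2\sqrt{\lambda_{max}(P)}}\norm{w} \\
&+ \tfrac{\norm{PB_2 K_{11}}}{\sqrt{\lambda_{min}(P)}}\norm{e_v}.
\end{align*}
Substituting the triggering rule $\norm{e_v}\leq\sigma(\norm{w}+\norm{y}^{p+1})$, the specific value of $\sigma$ in \eqref{sigma_sf} is chosen so that the $\sigma\norm{w}$ piece is strictly dominated by the residual $\norm{w}$-dissipation, and the $\sigma\norm{y}^{p+1}$ piece is dominated by $-k_6\norm{y}^p$ via $\norm{y}^{p+1}\leq\delta_{yw}\norm{y}^p$ on $S$. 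This yields $\dot V<0$ off the origin and hence local asymptotic stability of the overall system.

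For Zeno-freeness, I would track the scalar $\psi(t)=\norm{e_v(t)}/(\norm{w(t)}+\norm{y(t)}^{p+1})$, which vanishes just after each event and triggers at level $\sigma$. Using $\norm{\dot e_v}=\norm{\dot v}$ together with the $\dot y,\dot w$ equations, I would derive an upper bound for $\dot\psi$ in terms of $\psi$ alone, with coefficients that are uniform on a compact forward-invariant region around the origin. Integrating from $0$ to $\sigma$ then produces a positive lower bound $\tau$ on the inter-event times, in the same spirit as \cite[Theorem III.1]{tab1} and the closing argument of Proposition \ref{first_new_proposition}.

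The main obstacle is precisely this Zeno step: unlike the standard relative-threshold setting where the denominator is a norm of the state, here the denominator $\norm{w}+\norm{y}^{p+1}$ has a superlinear term that can collapse much faster than the numerator $\norm{e_v}$ grows. The delicate point is to show that $\frac{d}{dt}\norm{y}^{p+1}=(p+1)\norm{y}^{p-1}y^\top\dot y$ is not uniformly more negative than the numerator can grow; using $\norm{\dot y}\leq L_1\norm{(y;w;e_v)}$ on $S$ gives $\frac{d}{dt}\norm{y}^{p+1}\geq -(p+1)L_1\norm{y}^p(\norm{y}+\norm{w}+\norm{e_v})$. Combining this with the upper bound on $\norm{\dot e_v}$ and the trivial estimate $\norm{w}/(\norm{w}+\norm{y}^{p+1})\leq 1$, I expect a scalar comparison inequality of the form $\dot\psi\leq a+b\psi+c\psi^2$ with uniform constants $a,b,c$, which integrates to deliver the desired uniform positive lower bound on the inter-event times.
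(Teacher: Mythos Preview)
Your plan matches the paper's proof almost exactly: the same nonsmooth Lyapunov function $V=\norm{y}+\sqrt{w^\top Pw}$, the same dissipation bound $\dot V\le -k_6\norm{y}^p-c\norm{w}+\bar m\norm{e_v}$ (the paper carries the full coefficient $\bar m_{p_2}=k_1+k_2\tfrac{\lambda_{max}(P)}{\sqrt{\lambda_{min}(P)}}+\tfrac{\norm{PB_2K_{11}}}{\sqrt{\lambda_{min}(P)}}$ and uses a second split $s_y\in(0,1)$ to absorb the $\sigma\norm{y}^{p+1}$ piece into $-k_6\norm{y}^p$, but your $\norm{y}^{p+1}\le\delta_{yw}\norm{y}^p$ argument is equivalent), and the same Zeno strategy via a Riccati comparison $\dot\psi\le a+b\psi+c\psi^2$ for $\psi=\norm{e_v}/(\norm{w}+\norm{y}^{p+1})$.

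The one place your sketch is genuinely incomplete is the \emph{numerator} estimate in the Zeno step. The crude Lipschitz bound $\norm{\dot y}\le L_1\norm{(y;w;e_v)}$ you wrote is adequate for lower-bounding $\tfrac{d}{dt}\norm{y}^{p+1}$ (thanks to the prefactor $(p+1)\norm{y}^p$), but a bound of that type applied to $\norm{\dot e_v}=\norm{\dot v}$ would produce a term of order $\norm{y}$, and $\norm{y}/(\norm{w}+\norm{y}^{p+1})$ is unbounded near the origin for $p>1$, so the Riccati coefficients would not be uniform. The paper avoids this by writing $e_v=e_w+e_h$ (equivalently $\dot v=\dot w+\tfrac{\partial h}{\partial y}\dot y$) and using three structural facts: (i) $\norm{\dot w}\le c_1\norm{w}+c_2\norm{e_v}$ has \emph{no} pure-$\norm{y}$ term, because $N_2(y,0,0)=0$ and $K_1=[0\;\;K_{11}]$; (ii) $\norm{\partial h/\partial y}\le k_8\norm{y}$ from \eqref{center_quad}; and (iii) the sharp bound $\norm{g_1}\le k_5\norm{y}^p$ from Assumption~\ref{second_assumption}, so that $\norm{\tfrac{\partial h}{\partial y}\dot y}\le k_8k_5\norm{y}^{p+1}+k_8\delta_{yw}k_1(\norm{w}+\norm{e_v})$. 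Together these yield $\norm{\dot e_v}\le a_1(\norm{w}+\norm{y}^{p+1})+a_2\norm{e_v}$, which is exactly what is needed for uniform coefficients in the comparison inequality. Make these three points explicit and your argument goes through.
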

\begin{proof}
Consider the LISS Lyapunov function candidate $V:\mathbb{R}^n \rightarrow \bar{\mathbb{R}}_{+}$ 
\begin{align}
V=\norm{y}+\sqrt{w^{\top}Pw}.
\label{lyap_prop2}
\end{align}
The function $V$ is continuously differentiable everywhere except on the set $N_e=\{(y;w)\in \mathbb{R}^n : w=0 \text{\;or\;} y=0\}$.  The functions $\alpha_1(\norm{(y;w)})=\min(1,\sqrt{\lambda_{\text{min}}(P)})\norm{(y;w)}$ and $\alpha_2(\norm{(y;w)})=\sqrt{2}\max(1,\sqrt{\lambda_{\text{max}}(P)})\norm{(y;w)}$ are class-$\mathcal{K}$ functions satisfying 
\begin{align*}
\alpha_1(\norm{(y;w)})\leq V((y;w)) \leq\alpha_2(\norm{(y;w)}).
\end{align*}
Taking the time derivative of $V$ along the trajectories of the system \eqref{final} on the set $\mathbb{R}^n \setminus N_e$, we get
\begin{align*}
\dot{V}&=\frac{y^{\top}\dot{y}}{\norm{y}}+\frac{1}{2\sqrt{w^{\top}Pw}}\left(\dot{w}^{\top}Pw+w^{\top}P\dot{w}\right).
\end{align*}
By Assumption \ref{second_assumption}, $A_1=0$, the function $g_1$ is such that $y^{\top} g_1 \leq -k_6 \norm{y}^{p+1}$, the control $u=K_{11}(v+e_v)$, $B_2K_{1}e=B_2K_{11}e_v$ in \eqref{final} and the functions $N_1$ and $N_2$ are functions of $y,w$ and $e_v$. This leads us to
\begin{align*}
\dot{V}&\leq -k_6 \norm{y}^{p} + \frac{y^{\top}N_1(y,w,e_v)}{\norm{y}}\\
&\hspace{0.33cm}+\frac{1}{2\sqrt{w^{\top}Pw}}\left((A_Kw+B_2K_{11}e_v+N_2(y,w,e_v))^{\top}Pw \right. \\
&\hspace{0.35cm} \left. +w^{\top}P(A_Kw+B_2K_{11}e_v+N_2(y,w,e_v))\right)\\
&\leq-k_6 \norm{y}^{p}+\norm{N_1(y,w,e_v)}-\frac{w^{\top}Qw}{2\sqrt{w^{\top}Pw}}\\
&\hspace{0.5cm} +\frac{1}{\sqrt{w^{\top}Pw}}(w^{\top}PB_2K_{11}e_v+w^{\top}PN_2(y,w,e_v))\\
&\leq-k_6 \norm{y}^{p}-\frac{\lambda_{min}(Q)}{2\sqrt{\lambda_{max}(P)}}\norm{w}+k_1(\norm{w}+\norm{e_v})\\
&\hspace{0.5cm} +\frac{\norm{PB_2K_{11}}}{\sqrt{\lambda_{min}(P)}}\norm{e_v}+\frac{k_2\lambda_{max}(P)}{\sqrt{\lambda_{min}(P)}}(\norm{e_v}+\norm{w}).
\end{align*}
With $s_f \in (0,1)$,
\beq
\begin{aligned}
\dot{V}&\leq-k_6 \norm{y}^{p}-(1-s_f)\frac{\lambda_{min}(Q)}{2\sqrt{\lambda_{max}(P)}}\norm{w}\\
&\hspace{0.3cm}+\left(k_1+k_2\frac{\lambda_{max}(P)}{\sqrt{\lambda_{min}(P)}}-s_f\frac{\lambda_{min}(Q)}{2\sqrt{\lambda_{max}(P)}}\right)\norm{w}\\
&\hspace{1cm}+\left(k_1+k_2\frac{\lambda_{max}(P)}{\sqrt{\lambda_{min}(P)}}+\frac{\norm{PB_2K_{11}}}{\sqrt{\lambda_{min}(P)}}\right)\norm{e_v}.
\end{aligned}
\label{comp_3}
\eeq
Using the notation $m_{p_2}=\frac{\norm{PB_2K_{11}}}{\sqrt{\lambda_{min}(P)}}$ and $\bar{m}_{p_2}=\left(k_1+k_2\frac{\lambda_{max}(P)}{\sqrt{\lambda_{min}(P)}}+\frac{\norm{PB_2K_{11}}}{\sqrt{\lambda_{min}(P)}}\right)$ and with the relative event-triggering rule ensuring $\norm{e_v}\leq \sigma(\norm{w}+\norm{y}^{p+1})$, we arrive at 
\begin{equation}
\begin{aligned}
\dot{V}\leq&-k_6 \norm{y}^{p}+\bar{m}_{p_2}\sigma \norm{y}^{p+1}-(1-s_f)\frac{\lambda_{min}(Q)}{2\sqrt{\lambda_{max}(P)}}\norm{w}\\
&+m_{p_2}\sigma\norm{w}+\left((1+\sigma)\left(k_1+k_2\frac{\lambda_{max}(P)}{\sqrt{\lambda_{min}(P)}}\right) \right.\\
&\left. -s_f\frac{\lambda_{min}(Q)}{2\sqrt{\lambda_{max}(P)}}\right)\norm{w}.
\end{aligned}
\end{equation}
With $s_y \in(0,1)$
\begin{align*}
\dot{V}\leq&-k_6(1-s_y) \norm{y}^{p}+(\bar{m}_{p_2}\sigma \norm{y}^{p+1}-k_6 s_y \norm{y}^{p})\\
&-(1-s_f)\frac{\lambda_{min}(Q)}{2\sqrt{\lambda_{max}(P)}}\norm{w}+m_{p_2}\sigma\norm{w} \nonumber\\
&+\left((1+\sigma)\left(k_1+k_2\frac{\lambda_{max}(P)}{\sqrt{\lambda_{min}(P)}}\right)\right.\\
&\left. -s_f\frac{\lambda_{min}(Q)}{2\sqrt{\lambda_{max}(P)}}\right)\norm{w}.
\end{align*}
With $\sigma$ chosen as in \eqref{sigma_sf},
in a small neighbourhood $S_1=\{(y,w) : \norm{(y;w)} < \delta_{yw}\}$, 
the constants $k_1$, $k_2$ from \eqref{notice_difference} can be chosen such that
\begin{align}
&\left((1+\sigma)\left(k_1+k_2\frac{\lambda_{max}(P)}{\sqrt{\lambda_{min}(P)}}\right) \right. \nonumber \\ & \left. \hspace{3cm} -s_f\frac{\lambda_{min}(Q)}{2\sqrt{\lambda_{max}(P)}}\right)\norm{w} \leq 0
\label{ensure}
\end{align} 
and 
\begin{align}
(\bar{m}_{p_2}\sigma \norm{y}^{p+1}-k_6 s_y \norm{y}^{p}) \leq 0.
\label{ensure_2}
\end{align}
For $(y;w) \in S_1$, we have
\begin{equation}
\begin{aligned}
\dot{V}&\leq-k_6(1-s_y) \norm{y}^{p}-(1-s_f)\frac{\lambda_{min}(Q)}{2\sqrt{\lambda_{max}(P)}}\norm{w} \\
&= - w_s((y;w)) < 0,
\label{vdot_or}
\end{aligned}
\end{equation}
where $w_s((y;w))$ is a positive definite function of $(y;w)$. It can be verified that, on the set $N_e$ where the Lyapunov function \eqref{lyap_prop2} is non-differentiable, the inequality \eqref{vdot_or} holds, that is,  $\zeta^{\top}(\dot{y};\dot{w}) \leq - w_s((y;w))$, for all $\zeta \in \partial V$ and $\partial V$ is found through Lemma \ref{subdiff_quad}. Although \eqref{vdot_or} holds in the set $S_1$, $S_1$ is not positively invariant as it is not a sub-level set of $V$ (which is positively invariant). The largest, connected sub-level set of $V$ contained in $S_1$ is $S_v=\{(y;w) \in S_1 \;|\; V((y;w))\leq \alpha_1(\delta_{yw})\}$. By \eqref{vdot_or}, asymptotic stability of the origin is guaranteed for all $x(0) \in S_v$.

Next, we prove the existence of a uniform positive lower bound for the inter-event times, ($t_{i+1}-t_i$),  when the system is initialized in the positively invariant set $S_v$. The error $e_v(t)=v(t_i)-v(t)=(w(t_i)+h(y(t_i)))-(w(t)+h(y(t)))=e_w(t)+e_h(t)$. Consider
\begin{align}
&\frac{d}{dt}\left(\frac{\norm{e_v}}{\norm{w}+\norm{y}^{p+1}}\right)  \nonumber\\
&= \frac{e_v^{\top}\dot{e}_v}{\left(\norm{w}+\norm{y}^{p+1}\right)\norm{e_v}}-\frac{\left(\frac{w^{\top}\dot{w}}{\norm{w}}+\frac{(p+1)\norm{y}^p y^{\top}\dot{y}}{\norm{y}}\right)\norm{e_v}}{\left(\norm{w}+\norm{y}^{p+1}\right)^2} \nonumber\\
&\leq \frac{\norm{\dot{e}_v}}{{\left(\norm{w}+\norm{y}^{p+1}\right)}}+\frac{\left(\norm{\dot{w}}+(p+1)\norm{y}^p\norm{\dot{y}}\right)\norm{e_v}}{{\left(\norm{w}+\norm{y}^{p+1}\right)^2}} \nonumber\\
&\leq \frac{\norm{\dot{e}_w}+\norm{\dot{e}_h}}{{\left(\norm{w}+\norm{y}^{p+1}\right)}}+\frac{\left(\norm{\dot{w}}+(p+1)\norm{y}^p\norm{\dot{y}}\right)\norm{e_v}}{{\left(\norm{w}+\norm{y}^{p+1}\right)^2}}.
\label{ineq_prop2}
\end{align}
From Assumption \ref{second_assumption}, $\norm{g_1} \leq k_5 \norm{y}^p$ for some $k_5>0$. As $\norm{h(y)} \in {\cal O}(\norm{y}^2)$ and $\norm{\frac{\partial h(y)}{\partial y}} \in {\cal O}(\norm{y})$, there exist constants $k_7>0$ and $k_8>0$ such that $\norm{h(y)}\leq k_7\norm{y}^2$ and $\norm{\frac{\partial h(y)}{\partial y}} \leq k_8\norm{y}$ in a small neighbourhood of $y=0$. From equations \eqref{final} and \eqref{notice_difference}, we have $\norm{\dot{y}} \leq k_5\norm{y}^p+k_1\norm{w}+k_1\norm{e_v}$, $\norm{\dot{w}}\leq(\norm{A_c}+k_2)\norm{w}+(\norm{B_2K_{11}}+k_2)\norm{e_v}$ and $\norm{\dot{e}_h}=\norm{\dot{h}(y)}\leq\norm{\frac{\partial h}{\partial y}}\norm{\dot{y}} \leq k_8k_5\norm{y}^{p+1}+k_8k_1\delta_{yw}\norm{w}+\delta_{yw} k_1k_8\norm{e_v}$ ($\norm{y}\leq\delta_{yw}$ has been used as $(y;w)\in S_1$). Consider the numerator of the first term in the right-hand side of \eqref{ineq_prop2}. Using the inequalities derived so far, we have
\begin{align}
\norm{\dot{e}_w}&+\norm{\dot{e}_h} \leq a_1 \left(\norm{w}+\norm{y}^{p+1}\right) + a_2 \norm{e_v}
\label{ineq1}
\end{align}
where, $a_1=\text{max}(\norm{A_c}+k_2+k_8k_1\delta_{yw},k_8k_5)$ and $a_2=\norm{B_2K_{11}}+k_2+\delta_{yw}k_1k_8$. For the numerator of the second term in \eqref{ineq_prop2}, we have
\begin{equation}
\begin{aligned}
\norm{\dot{w}}+(p+1)&\norm{y}^p\norm{\dot{y}} \\
& \leq a_3 \left(\norm{w}+\norm{y}^{p+1}\right)\norm{e} + a_4 \norm{e_v}^2 \label{ineq2}
\end{aligned}
\end{equation}
where $a_3=\text{max}(\norm{A_c}+k_2+(p+1)\delta_{yw}^pk_1,\delta_{yw}^{(p-1)}k_5)$ and $a_4=\norm{B_2K_{11}}+k_2+(p+1)\delta_{yw}^pk_1$. From \eqref{ineq1} and \eqref{ineq2}, we have 
\begin{equation}
\begin{aligned}
&\frac{d}{dt}\left(\frac{\norm{e_v}}{\norm{w}+\norm{y}^{p+1}}\right) \\
& \leq a_1 + \left(\frac{(a_2+a_3)\norm{e_v}}{\norm{w}+\norm{y}^{p+1}}\right) + a_4\left(\frac{\norm{e_v}}{\norm{w}+\norm{y}^{p+1}}\right)^2.
\end{aligned}
\label{comparison_1}
\end{equation}
Denoting $\norm{e_v}/(\norm{w}+\norm{y}^{p+1})$ by $e_s$, we have $\dot{e}_s \leq a_1 + (a_2+a_3) e_s + a_4 e_s^2$. Using the Comparison Lemma \cite{khalil}, it follows that $e_s(t) \leq \phi(t)$, where $\phi(t)$ is the solution of $\dot{\phi}=a_1 + (a_2+a_3) \phi + a_4 \phi^2$, initialized at $\phi(0)=0$. When an event occurs (when $e_s$ rises from zero to meet $\sigma$), the control is updated and $e_s(t)$ is reset to zero. Let $\tau_1$ be the time taken by $\phi(t)$ to evolve from $0$ to $\sigma$. As $e_s(t) \leq \phi(t)$, the time taken by $e_s(t)$ to reach $\sigma$ is greater than $\tau_1$. By the comparison lemma, 
\begin{align*}
    e_s(t) \leq \phi(t) = b \tan\left(\frac{b}{2}(t+c)\right) -(a_2+a_3)/(2a_4).
\end{align*}
where $b = \sqrt{4a_1a_4-(a_2+a_3)^2}$ and $c=(2/b)\tan^{-1}((a_2+a_3)/b)$. $\phi(\tau_1)=\sigma$ implies 
\begin{equation}
\begin{aligned}
    \tau_1=& \frac{2}{b}\left(\tan^{-1}\left(\frac{2a_4\sigma+(a_2+a_3)}{b}\right)-\tan^{-1}\left(\frac{a_2+a_3}{b}\right)\right)\\
    &\hspace{5.5cm} > 0.
\end{aligned}
\label{tau1_e}
\end{equation}
Thus, for all initializations $(y(0);w(0)) \in S_v$, there exists a uniform positive lower bound for the inter-event times.
\end{proof}
%
%
\begin{remark}
\label{remark_prop_3}
A triggering condition of the form $\norm{e_v}\geq \sigma(\norm{w}+\norm{y}^{(p+p_e)})$, $p_e \geq 1$ can also be used to arrive at \eqref{vdot_or} to guarantee local asymptotic stability of the origin of \eqref{final}. Proposition \ref{second_new_proposition} presents the case with $p_e=1$, which leads to the largest domain of attraction among this family of triggering conditions, as \eqref{ensure_2} holds in the largest set when $p_e=1$. As $p_e \rightarrow \infty$, the triggering condition tends to $\norm{e_v}\leq \sigma \norm{w}$ and Zeno behaviour is observed in simulations using this triggering condition. 
\end{remark}
Next, we recall an important qualitative property of the trajectories of nonlinear systems with center manifolds, under the assumptions of Theorem 1. This property, captured in the following lemma, is used in analysing the asymptotic behaviour of the inter-event times.
\begin{lemma}[Exponential convergence of trajectories to the center manifold \cite{carr}]\label{expoential_decay}
Consider the system \eqref{final_tranf} under the assumptions of Theorem \ref{existence}. Let $(y(t);v(t))$ be a solution to \eqref{final_tranf} with $\norm{(y(0);v(0))}$ sufficiently small. Then, there exist positive constants $C_1$ and $\mu$ such that 
\begin{equation}
\norm{v(t)-h(y(t))}\leq C_1e^{-\mu t}\norm{v(0)-h(y(0))}
\end{equation}
for all $t\geq0$.
\end{lemma}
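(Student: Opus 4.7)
The plan is to carry out the analysis in the shifted coordinate $w = v - h(y)$ and derive an exponential decay estimate for $\norm{w(t)}$, which is exactly the claim of the lemma. First I would note that, in the absence of any measurement error ($e = 0$), the derivation preceding \eqref{final} gives the $w$-dynamics
\begin{align*}
\dot{w} = A_K w + N_2(y, w, 0),
\end{align*}
where, in the neighborhood $S$ of \eqref{delta}, the perturbation satisfies $\norm{N_2(y, w, 0)} \leq k_2 \norm{w}$ by \eqref{notice_difference} with $e = 0$, and $k_2 > 0$ can be made arbitrarily small by shrinking $\delta_{yw}$.

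Next, since $A_K$ is Hurwitz, fix any $Q \succ 0$ and let $P \succ 0$ be the unique solution of $A_K^{\top} P + P A_K = -Q$. For the smooth quadratic Lyapunov function $V_w(w) = w^{\top} P w$, computing along the $w$-dynamics yields
\begin{align*}
\dot{V}_w = -w^{\top} Q w + 2 w^{\top} P N_2(y, w, 0) \leq -\bigl(\lambda_{\min}(Q) - 2\lambda_{\max}(P)\, k_2\bigr)\norm{w}^2.
\end{align*}
Choosing $\delta_{yw}$ small enough that $2\lambda_{\max}(P)\, k_2 < \lambda_{\min}(Q)$ makes the coefficient strictly positive, which together with $V_w \leq \lambda_{\max}(P)\norm{w}^2$ yields $\dot{V}_w \leq -2\mu\, V_w$ with $2\mu = (\lambda_{\min}(Q) - 2\lambda_{\max}(P)\, k_2)/\lambda_{\max}(P)$. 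Integrating and using $\lambda_{\min}(P)\norm{w}^2 \leq V_w \leq \lambda_{\max}(P)\norm{w}^2$ then gives
\begin{align*}
\norm{w(t)} \leq \sqrt{\tfrac{\lambda_{\max}(P)}{\lambda_{\min}(P)}}\, e^{-\mu t}\norm{w(0)},
\end{align*}
which, upon unpacking $w = v - h(y)$, is the desired bound with $C_1 = \sqrt{\lambda_{\max}(P)/\lambda_{\min}(P)}$ and the exponent $\mu$ just identified.

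The obstacle I expect to be the hardest is ensuring that the argument is valid for all $t \geq 0$, rather than only up to the first time the state leaves $S$. I would address this by a standard continuation argument: set $T^{\star} = \sup\{T \geq 0 : (y(s); w(s)) \in S \text{ for all } s \in [0, T]\}$ and assume, toward a contradiction, that $T^{\star}$ is finite. On $[0, T^{\star}]$ the exponential estimate just derived already controls $\norm{w(t)}$ uniformly by $C_1 \norm{w(0)}$. For the $y$-component, I would invoke the $y$-subsystem of \eqref{final_tranf}, the purely imaginary spectrum of $A_1$, the bound $\norm{g_1} = \mathcal{O}(\norm{(y; w)}^2)$ inherited from \eqref{conditions_2} together with the smoothness of $h$, and a Gronwall-type estimate exploiting the already-established exponential smallness of $w$. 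This allows $\norm{y(t)}$ to be kept below any preassigned fraction of $\delta_{yw}$ throughout $[0, T^{\star}]$ by taking $\norm{(y(0); v(0))}$ sufficiently small, contradicting $T^{\star} < \infty$. Hence $T^{\star} = \infty$ and the exponential estimate extends to all $t \geq 0$.
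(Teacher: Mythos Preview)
The paper does not give its own proof of this lemma; it is quoted from Carr as a classical fact, so there is no in-paper argument to compare against. Your Lyapunov computation for the $w$-subsystem is the standard and correct way to obtain the exponential decay of $\norm{w}$ \emph{while the trajectory remains in $S$}.

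The genuine gap is precisely where you flagged it, and your proposed fix does not close it. Under the hypotheses of Theorem~\ref{existence} alone the reduced dynamics~\eqref{dynamics_center} need not be stable. Even granting that $\norm{e^{A_1 t}}$ is bounded (which already requires $A_1$ to be semisimple), variation of constants and the quadratic bound on $g_1$ give at best
\[
\norm{y(t)} \;\leq\; \bigl(M\norm{y(0)} + C\norm{w(0)}\bigr)\, e^{\,c\,\delta_{yw}\,t}
\quad\text{on }[0,T^{\star}],
\]
a bound that \emph{grows with $T^{\star}$}. You therefore cannot force $\norm{y(t)}$ below a fixed fraction of $\delta_{yw}$ uniformly in $T^{\star}$ just by shrinking the initial data, and the contradiction does not close. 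Carr's original argument sidesteps this entirely: he first modifies $g_1,g_2$ by a smooth cut-off so that the relevant Lipschitz constants are globally small, proves the exponential $w$-decay for the modified system for all $t\ge 0$, and only afterwards restricts to initial data small enough that the cut-off is inactive. Alternatively, everywhere this lemma is actually used in the paper the reduced system is assumed locally asymptotically stable, and with that extra hypothesis your $y$-bound does become uniform in time; but that is stronger than what the lemma, as stated, assumes.
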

Note that $\norm{v(t)-h(y(t))}=\norm{w(t)}$, is the euclidean distance between the state at time $t$ and the center manifold. Lemma \ref{expoential_decay} states that this distance decays exponentially to zero. The state $y(t)$ on the other hand decays to the origin slowly (non-exponentially), governed by the dynamics \eqref{dynamics_center} (if the dynamics \eqref{dynamics_center} is asymptotically stable). 

\subsection{Asymptotic behaviour of the inter-event times}
By Lemma \ref{expoential_decay}, 
we have
\begin{align*}
\lim_{w(t)\rightarrow 0} \frac{\norm{e_v}}{\norm{w}+\norm{y}^{p+1}}\;\; \underrightarrow{\;\;\text{exponentially}\;\;}\;\; \frac{\norm{e_h}}{\norm{y}^{p+1}}.
\end{align*}
as $\norm{e_v}=\norm{e_w+e_h}$. The behaviour of the inter-event times with the thresholding rule $\norm{e}\geq \sigma(\norm{w}+\norm{y}^{(p+1)})$ exponentially tends to the behaviour of the inter-event times with the thresholding rule $\norm{e_h}\geq \sigma\norm{y}^{(p+1)}$.  Consider
\begin{align*}
\frac{d}{dt}\left(\frac{\norm{e_h}}{\norm{y}^{p+1}}\right) &= \frac{-e_h^{\top}\dot{h}(y)}{\norm{e_h}\;\norm{y}^{p+1}} - \frac{(p+1)\;\norm{y}^p\;(y^{\top}\dot{y})}{(\norm{y}^{p+1})^2}\frac{\norm{e_h}}{\norm{y}} \\
&\leq \frac{\norm{\frac{\partial h(y)}{\partial y}}\;\norm{\dot{y}}}{\norm{y}^{p+1}} + \frac{(p+1)\;\norm{y}^p\;\norm{\dot{y}}\;\norm{e_h}}{(\norm{y}^{p+1})^2}.
\end{align*}
Using the inequalities leading up to \eqref{ineq1} and \eqref{ineq2}, but with $w=0$, we arrive at
\begin{align*}
\frac{d}{dt} \left(\frac{\norm{e_h}}{\norm{y}^{p+1}}\right) & \leq b_1 + (b_2+b_3)\frac{\norm{e_h}}{\norm{y}^{p+1}} + b_4\left(\frac{\norm{e_h}}{\norm{y}^{p+1}}\right)^2
\end{align*}
where $b_1=k_8k_5, b_2=\delta_{yw}k_1k_8, b_3=(p+1)\delta_{yw}^{p-1}k_5$ and $b_4=(p+1)\delta_{yw}^pk_1$. Following along the steps leading to \eqref{tau1_e}, we arrive at an estimate $\tau_2$ with the same form as \eqref{tau1_e} but with $a_i$ replaced by $b_i$. 
The estimate for the uniform lower bound on the inter-event times using $b_i$  is less conservative, as the inequalities leading up to \eqref{ineq1} and \eqref{ineq2} are now less conservative due to Lemma \ref{expoential_decay}.

\subsection{Triggering rule guaranteeing local ultimate boundedness of the trajectories of the closed-loop system}

In Assumption \ref{second_assumption}, the restriction on the matrix $K_1$ was needed to ensure both asymptotic stability and non-existence of Zeno behaviour. We now relax this assumption to consider a more general class of systems and show that under the implementation rule
\begin{equation}
\begin{aligned}
u(t)&= \begin{cases} 
0 & \text{if\;\;} t_1 \neq t_0, \;\;\forall \; t\in[t_0,t_1)\\
K(y(t_i);z(t_i)) & \text{if\;\;} t\in [t_i,t_{i+1}), i\geq 1
\end{cases}\\
t_1&= \text{min}\{t \geq t_0 \; | \; (y(t);w(t)) \in S_v \setminus S_2 \}\\
t_{i+1}&=\min\{ t \geq t_i \;|\; \norm{e}\geq \sigma(\norm{w}+\norm{y}^{(p+1)}) \\
    &\hspace{2.3cm} \text{\;and\;} (y(t);w(t)) \in S_v \setminus S_2 \}, i 
    \geq 1
\label{imp2}    
\end{aligned}
\end{equation}
where
\begin{align}
S_2=\{(y;w)\; | \; \norm{(y;w)} < \alpha_2^{-1}\alpha_1(r_s)=r_1\}, 
\label{eq_ball}
\end{align}
the trajectories of the closed-loop system are ultimately bounded by a ball of desired radius $r_s$. If $(y(0);w(0)) \in S_v \setminus S_2$, the first event instant $t_1=t_0$ and the second case defining $u(t)$ is active for all $t \geq t_0$. If $(y(0);w(0)) \in S_2$, then $t_1 \neq t_0$ and the first case defining $u(t)$ is active for $t\in[t_0,t_1)$ before the second case takes over for all $t\geq t_1$. 

The following Assumption and Proposition capture this scenario. For simplicity, we use the notation $\mathcal{B}_r$ to denote a ball of radius $r$.
\begin{assumption}
\label{third_assumption}
For the system \eqref{final_tranf}, the matrix $A_1=0$. The function $g_1$ in the dynamics of the reduced system \eqref{dynamics_center} is such that $\norm{g_1} \leq k_5 \norm{y}^p$ and $y^{\top} g_1 \leq -k_6 \norm{y}^{p+1}$, with $p>1$ for some $k_5,k_6>0$ in a neighbourhood of the origin. 
\end{assumption}
\begin{proposition}
\label{third_new_proposition}
Consider the system \eqref{final_tranf}. Under the assumptions of Theorem \ref{center_manifold_theorem}, if the origin $y=0$ of the reduced system \eqref{dynamics_center} is locally asymptotically stable, the conditions in Assumption \ref{third_assumption} are satisfied and $\mathcal{B}_{r_s} \subset S_v$, 
then the trajectories of the system \eqref{final} are locally ultimately bounded by $\mathcal{B}_{r_s}$, under the event-triggered implementation \eqref{imp2}, with $\sigma$ chosen  according to \eqref{sigma_sf}. Moreover, the inter-execution times $(t_{i+1}-t_i)$ are lower bounded by a positive constant for all $i \geq 0$.
\end{proposition}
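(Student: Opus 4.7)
The plan is to extend the Lyapunov analysis of Proposition~\ref{second_new_proposition} in two ways: by accounting for the general feedback matrix $K_1=[K_{12}+EK_{11}\;\;K_{11}]$, so that both $e_y$ and $e_v$ enter the perturbation through $B_2K_1e$; and by exploiting the dead-zone $S_2$ to obtain ultimate boundedness in place of asymptotic stability of the origin. I would reuse the nonsmooth LISS Lyapunov function $V(y,w)=\norm{y}+\sqrt{w^{\top}Pw}$ together with the bounds $\alpha_1(\norm{(y;w)})\leq V\leq\alpha_2(\norm{(y;w)})$ from the proof of Proposition~\ref{second_new_proposition}.

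Differentiating $V$ along \eqref{final}, using $y^{\top}g_1\leq -k_6\norm{y}^{p+1}$ from Assumption~\ref{third_assumption} and $\norm{N_i}\leq k_i(\norm{w}+\norm{e})$ from \eqref{notice_difference}, the chain of inequalities leading to \eqref{comp_3} carries over with $\norm{e_v}$ replaced by $\norm{e}$, the constant $m_{p_2}$ replaced by $\norm{PB_2K_1}/\sqrt{\lambda_{min}(P)}$, and $\bar{m}_{p_2}$ updated accordingly. Inserting the triggering bound $\norm{e}\leq\sigma(\norm{w}+\norm{y}^{p+1})$ that is enforced while the second branch of \eqref{imp2} is active, choosing $\sigma$ as in \eqref{sigma_sf}, and shrinking $\delta_{yw}$ so that the analogues of \eqref{ensure} and \eqref{ensure_2} hold, yields
\begin{align*}
\dot{V}\leq -k_6(1-s_y)\norm{y}^{p}-(1-s_f)\frac{\lambda_{min}(Q)}{2\sqrt{\lambda_{max}(P)}}\norm{w}<0
\end{align*}
for every $(y;w)\in S_v\setminus\{0\}$ at which the controller is actively updated, i.e.\ outside $S_2$.

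To promote this pointwise bound to ultimate boundedness, I would consider the sub-level set $\Omega\triangleq\{x:V(x)\leq\alpha_1(r_s)\}$. Since $\alpha_2(r_1)=\alpha_1(r_s)$, the bounds on $V$ give $\mathcal{B}_{r_1}\subseteq\Omega\subseteq\mathcal{B}_{r_s}$, and every point of $\partial\Omega$ satisfies $\norm{x}\geq r_1$, so it lies in the closure of $S_v\setminus S_2$, where $\dot{V}<0$; hence $\Omega$ is forward invariant. Because $\dot{V}$ is uniformly bounded away from zero on $S_v\setminus\Omega$, every trajectory initialised in $S_v$ enters $\Omega\subseteq\mathcal{B}_{r_s}$ in finite time. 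The initial phase $t\in[t_0,t_1)$ with $u=0$, which is possible only when $(y(0);w(0))\in S_2$, either keeps the trajectory in $S_2\subseteq\mathcal{B}_{r_s}$ for all time, or ends at $t_1$ with the state in $S_v\setminus S_2$, after which the above argument engages. For the absence of Zeno behaviour, I would repeat the Comparison-Lemma computation \eqref{comparison_1}--\eqref{tau1_e} with the full error $e=(e_y;e_v)$ and general $K_1$: the estimates for $\norm{\dot{e}}$ and for the time derivative of $\norm{w}+\norm{y}^{p+1}$ retain the same form after a routine recalculation, producing a Riccati-type differential inequality $\dot{\phi}\leq a_1+(a_2+a_3)\phi+a_4\phi^2$ for $\phi=\norm{e}/(\norm{w}+\norm{y}^{p+1})$, and hence a uniform positive lower bound on the inter-event times analogous to \eqref{tau1_e}.

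The main obstacle I anticipate is rigorously establishing the forward invariance of $\Omega$ near $\partial S_2$, since inside $S_2$ the triggering mechanism is switched off and the held control may drive $V$ upward transiently. My plan is to exploit the strict separation $\mathcal{B}_{r_1}\subseteq\Omega$: any candidate escape point of $\partial\Omega$ must satisfy $\norm{x}\geq r_1$, so either it lies strictly outside $S_2$, where $\dot{V}<0$ by the triggering condition and escape is precluded, or it sits on the boundary $\partial S_2$, where a one-sided Nagumo-type argument based on $\dot{V}<0$ in every exterior neighbourhood again prevents escape.
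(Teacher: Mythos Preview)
Your Lyapunov analysis for ultimate boundedness is essentially the paper's argument (the paper simply invokes \cite[Theorem~4.18]{khalil} after noting that \eqref{vdot_or} holds on $S_v\setminus S_2$ under \eqref{imp2}); your more careful discussion of the invariance of $\Omega$ near $\partial S_2$ is a reasonable addition.

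The gap is in your minimum inter-event time argument. Once you pass from the error $e_v$ of Proposition~\ref{second_new_proposition} to the full error $e=(e_y;e_v)$, the numerator of $\phi=\norm{e}/(\norm{w}+\norm{y}^{p+1})$ picks up $\norm{\dot{e}_y}=\norm{\dot{y}}$, which by Assumption~\ref{third_assumption} contains the contribution $\norm{g_1}\leq k_5\norm{y}^{p}$. In the proof of Proposition~\ref{second_new_proposition} this term entered only through $\norm{\dot{e}_h}\leq\norm{\partial h/\partial y}\,\norm{\dot{y}}\leq k_8\norm{y}\cdot k_5\norm{y}^{p}$, where the extra factor $\norm{y}$ supplied by the center-manifold slope promoted it to order $p+1$ and made \eqref{ineq1} possible. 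With $e_y$ present there is no such factor, and since $\norm{y}^{p}/(\norm{w}+\norm{y}^{p+1})$ is unbounded along $w=0$, $\norm{y}\to 0$, the estimate \eqref{ineq1} does \emph{not} carry over; the Riccati inequality $\dot{\phi}\leq a_1+(a_2+a_3)\phi+a_4\phi^{2}$ cannot hold with state-independent coefficients on all of $S_v$. This is precisely what the relaxation from Assumption~\ref{second_assumption} to Assumption~\ref{third_assumption} costs, so the recalculation is not routine.

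The paper therefore takes a different and simpler route for the MIET, exploiting the dead-zone directly. Since an event can be declared only when $(y;w)\in S_v\setminus S_2$, the threshold $\sigma(\norm{w}+\norm{y}^{p+1})$ is bounded below by some $\sigma_l>0$ at every triggering instant. It then suffices to bound $\norm{e}$ itself: from $\tfrac{d}{dt}\norm{e}\leq\norm{(\dot{y};\dot{w})}\leq L_1\norm{(y;w)}+L_1\norm{e}\leq L_1\delta_{yw}+L_1\norm{e}$ on $S_v$ one obtains $\norm{e(t)}\leq(\delta/L_1)(e^{L_1 t}-1)$ and hence $t_{i+1}-t_i\geq\tau_3=(1/L_1)\ln(1+\sigma_l L_1/\delta)>0$. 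This absolute-threshold argument also handles, without further work, the case in which the trajectory dips into $S_2$ between $t_i$ and $t_{i+1}$---a scenario your Riccati bound would have to treat separately even if its coefficients could be made uniform.
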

\begin{proof}
From \eqref{vdot_or}, \eqref{eq_ball} and the implementation rule \eqref{imp2}, we are lead to
\begin{align*}
\dot{V}\leq -w_s((y;w)), \;\; \forall \; \norm{(y;w)}\geq \alpha_2^{-1}\alpha_1(r_s).
\end{align*}
By \cite[Theorem 4.18]{khalil}, we conclude that the trajectories of the closed-loop system \eqref{final_tranf}-\eqref{imp2} are ultimately bounded by $\mathcal{B}_{r_s} \subset S_v$ and this ball is reached in finite time. 

Next we show that the inter-event times $(t_{i+1}-t_i)$ are uniformly lower bounded by a positive constant for all $(y(t_0);w(t_0)) \in S_v$. Under the implementation rule \eqref{imp2}, events are triggered only in $S_v \setminus S_2$ and the control is not updated when the trajectory enters $S_2$. When the system is initialized in $S_2$, the control is set to zero until the system leaves $S_2$.
For each $i$, $\norm{(y(t_i);w(t_i))}\geq r_1$ and $e(t_i)=0$. The next event occurs at $t_{i+1}$, when $\norm{e}$ rises from zero and meets $\sigma(\norm{w}+\norm{y}^{p+1})$ in $S_v\setminus S_2$. 
In $S_v\setminus S_2$
\begin{align*}
\sigma_l \triangleq \sigma (r_1+r_1^{p+1}) &\leq \sigma(\norm{w}+\norm{y}^{p+1}) \\
&\leq \sigma (\delta_{yw}+\delta_{yw}^{p+1}) \triangleq \sigma_u.
\end{align*}
Consider the evolution of $\norm{e}$ along the trajectories of the event-triggered closed-loop system, between two consecutive event instants
\begin{align*}
\frac{d\norm{e}}{dt} \leq \norm{\dot{e}}=\norm{-(\dot{y};\dot{w})}.
\end{align*}
The function $f(x,u)$ in \eqref{forced} is twice continuously differentiable and due to the sequence of smooth coordinate transformations relating $x$ and $(y;w)$, there exists a constant $L_1$ in $S_1$ such that $\norm{(\dot{y};\dot{w})}\leq L_1\norm{(y;w;e)}\leq L_1\norm{(y;w)}+L_1\norm{e}\leq L_1\delta_{yw}+L_1\norm{e}.$
Using the notation $\delta \triangleq L_1\delta_{yw}$, 
\begin{align*}
\frac{d\norm{e}}{dt}\leq \norm{-(\dot{y};\dot{w})}&\leq L_1 \delta_{yw}+L_1\norm{e}=L_1\norm{e}+\delta.
\end{align*} 
Using the Comparison Lemma \cite{khalil} with $e(0)=0$, we have
\begin{align}
\norm{e(t)}\leq \frac{\delta}{L_1}(e^{L_1t}-1).
\label{ltau3}
\end{align}
The time taken by the right-hand side of \eqref{ltau3} to rise from $0$ to $\sigma_l$ serves as a lower bound for the inter-event times. The lower bound $\tau_3$ is found by solving
\begin{align}
\frac{\delta}{L_1}(e^{L_1\tau_3}-1)= \sigma_l \implies \tau_3 = \frac{1}{L_1} \ln \left(1+\frac{\sigma_l L_1}{\delta}\right)>0.
\label{est2}
\end{align} 
Thus, we have shown that the inter-event times are uniformly lower bounded by $\tau_3>0$.
\end{proof}
The radius $r_s$ of the ultimate bound can be made arbitrarily small, but by allowing for small estimates of inter-event times, as $\tau_3$ in \eqref{est2} is a function of $\sigma_l$ which grows smaller as $S_2$ becomes smaller. 

Next, we take up an example where the center manifold is exactly computable and demonstrate the application of the triggering condition presented in Proposition \ref{second_new_proposition}.
\subsection{Example 1} \label{Example41}
Consider the system 
\begin{equation}
\begin{aligned}
\dot{y} &= -yv \\
\dot{v} &= v+u+y^2-2v^2.
\end{aligned}
\label{example_1_origin}
\end{equation}
The nonlinear function $g_1(y,v,u)=-yv$ is independent of the control input $u$. The control $u$ determines the structure of the center manifold and its stability indirectly via the $v$-subsystem. With the controller $u=-2v$, the closed-loop system is
\begin{align*}
\dot{y} &= -yv \\
\dot{v} &= -v+y^2-2v^2.
\end{align*}
The center manifold of this system can be computed exactly \cite{roberts_1985}. By solving the PDE \eqref{partial}, the center manifold is found to be $v=y^2$. The dynamics on the center manifold is
\begin{align}
\dot{y} &=g_1(y,h(y))= -y^3.
\label{example1_center}
\end{align}
We have $\norm{g_1}=|y|^3$ and $y^{\top}g_1\leq -|y|^4$ with $k_5, k_6=1$. Therefore, Assumption \ref{second_assumption} is satisfied. The local asymptotic stability of $y=0$ can be shown using the Lyapunov function $V_1=\frac{1}{2}y^2$. With the change of variables $w=v-y^2$, we obtain
\begin{equation}
\begin{aligned}
\dot{y} &= -y(w+y^2) \\
\dot{w} &= -w-2w(w+y^2).
\end{aligned}
\label{ref_ydot}
\end{equation}
Introducing the measurement error $e_v=v(t_i)-v(t)$, we obtain
\begin{equation}
\begin{aligned}
\dot{y} &= -y^3-yw \\
\dot{w} &= -w-2e_v-w(w+y^2).
\end{aligned}
\label{ex1er}
\end{equation}
Now $N_1=-yw$ and $N_2=-w(w+y^2)$. The functions $N_1$ and $N_2$ satisfy conditions \eqref{crucial}. As $N_1$ and $N_2$ are independent of $u$ and therefore $e_v$, the right-hand sides of the inequalities \eqref{notice_difference} are functions of only $w$. In the set $S_1=\{(y,w) \; | \; \norm{y}\leq\frac{1}{\sqrt{6}} \text{\;and\;} \norm{w}\leq \frac{1}{6}\}$,
we have $\norm{N_1}\leq\frac{1}{\sqrt{6}}\norm{w}  \text{\;and \;} \norm{N_2}\leq\frac{1}{6}\norm{w}$ with $k_1=\frac{1}{\sqrt{6}}$ and $k_2=\frac{1}{6}$.    
Using the LISS Lyapunov function $V=|y|+\sqrt{w^{\top}w}$, with $s_f=\frac{3}{4}$ and $s_y=\frac{1}{2}$ and the triggering condition $\norm{e_v}\geq \sigma(\norm{w}+\norm{y}^{4})$, $\sigma=1/16$ (according to \eqref{sigma_sf}), we have
\begin{align*}
\dot{V}\leq -\frac{1}{2}|y|^3-\frac{1}{16}\norm{w}
\end{align*}
 and through Proposition \ref{second_new_proposition}, local asymptotic stability of the origin of \eqref{ex1er} is guaranteed with Zeno-free event-triggering.
 


In Figure \ref{Cumulative_Example1}, simulation results of the event-triggered closed-loop system are presented. Trajectories from three initial conditions are plotted in Figure \ref{ex1p1} along with the center manifold $v=y^2$. The trajectories tend to the center manifold exponentially and the evolution along the center manifold is significantly slower in comparison. The mechanism of event-triggering is shown in Figure \ref{ex1p2}, by plotting the evolution of the error $\norm{e_v}$ and the threshold $\frac{1}{16}(\norm{w}+\norm{y}^{4})$. 

The evolution of inter-event times for three initial conditions are shown in Figure \ref{ex1p3}. The inter-event times are lower bounded by \unit{30.3}{\milli\second}. As $h(y)$ has been determined exactly, the function $g_1(y,h(y)+Ey,K(y;h(y)+Ey))$ is known in closed-form and good estimates of the constants $k_5, k_6, k_7, k_8$ and the neighbourhoods in which they are valid can be found. However, the estimates of minimum inter-event times from \eqref{tau1_e} and \eqref{est2} are conservative, as they depend on Lipschitz constants and bounds on polynomial functions. To get a better estimate, the event-triggered closed-loop system was simulated for ten initial conditions $\left(0.1\cos(\frac{2\pi k}{10}),0.1\sin(\frac{2\pi k}{10})\right)$, $k=0, 1, 2, \ldots, 9$ for \unit{25}{\second}. The trajectories practically converge to the center manifold by \unit{15}{\second}. The mean inter-event time before \unit{15}{\second} is found to be \unit{40.8}{\milli\second} and between \unit{15}{\second} and \unit{25}{\second} is found to be \unit{33}{\milli\second}. The minimum inter-event time in these simulations (MIETs) was found to be \unit{30.3}{\milli\second} and the average inter-event time was found to be \unit{36.3}{\milli\second}. To assess the performance of event-triggered control with respect to time-triggered control, we choose MIETs as the sampling time for time-triggered control. From Figure \ref{ex1p4}, we see that the performances of time-triggered and event-triggered control are a close match. However, the number of control updates is much higher in time-triggered control, thus making a case for the use of event-triggered control. 

\begin{figure} 
    \centering
  \subfloat[Phase portrait of the event-triggered closed-loop system\label{ex1p1}]{%
       \includegraphics[width=0.49\linewidth]{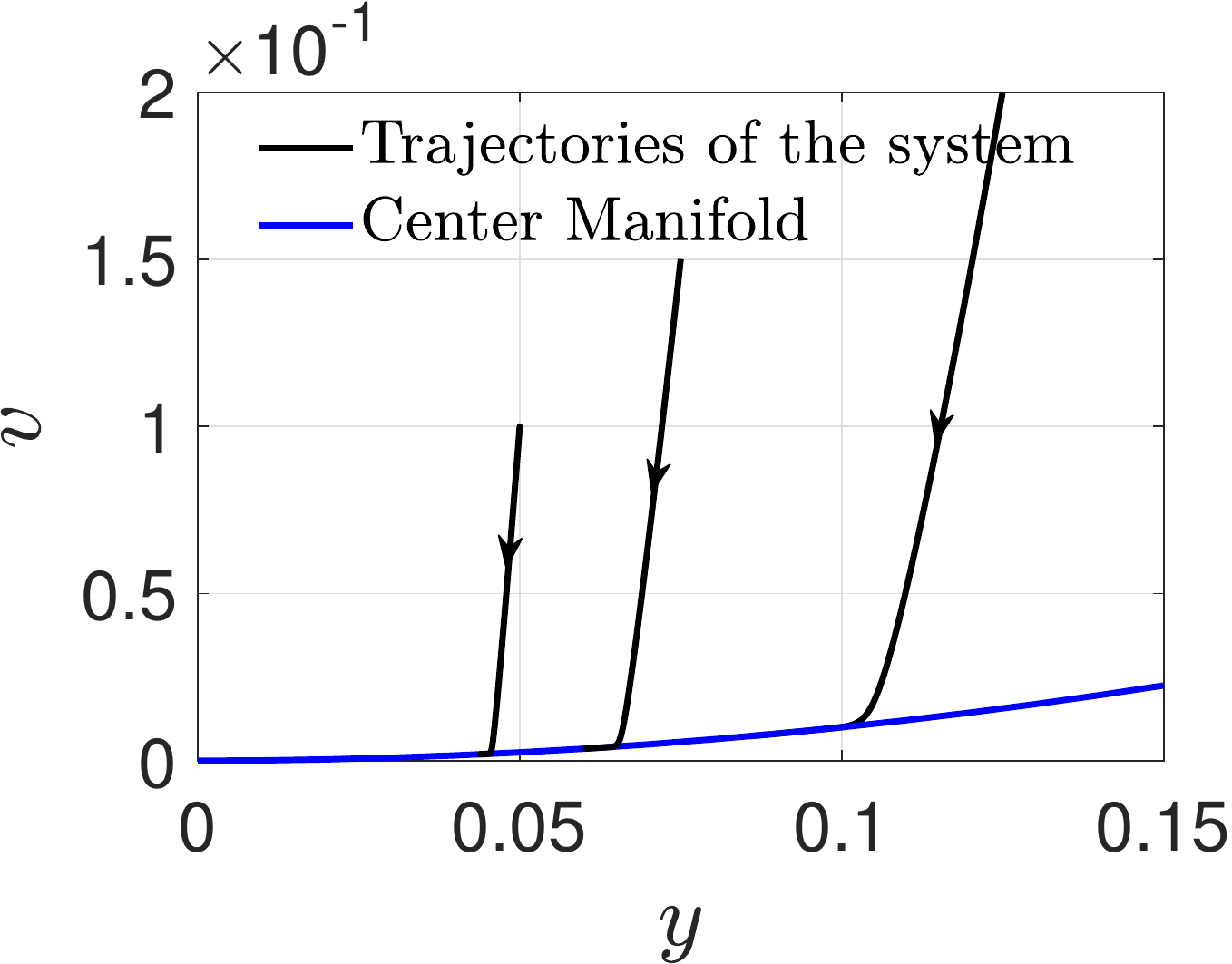}}
    \hfill
  \subfloat[Evolution of the error and the threshold\label{ex1p2}]{%
        \includegraphics[width=0.49\linewidth]{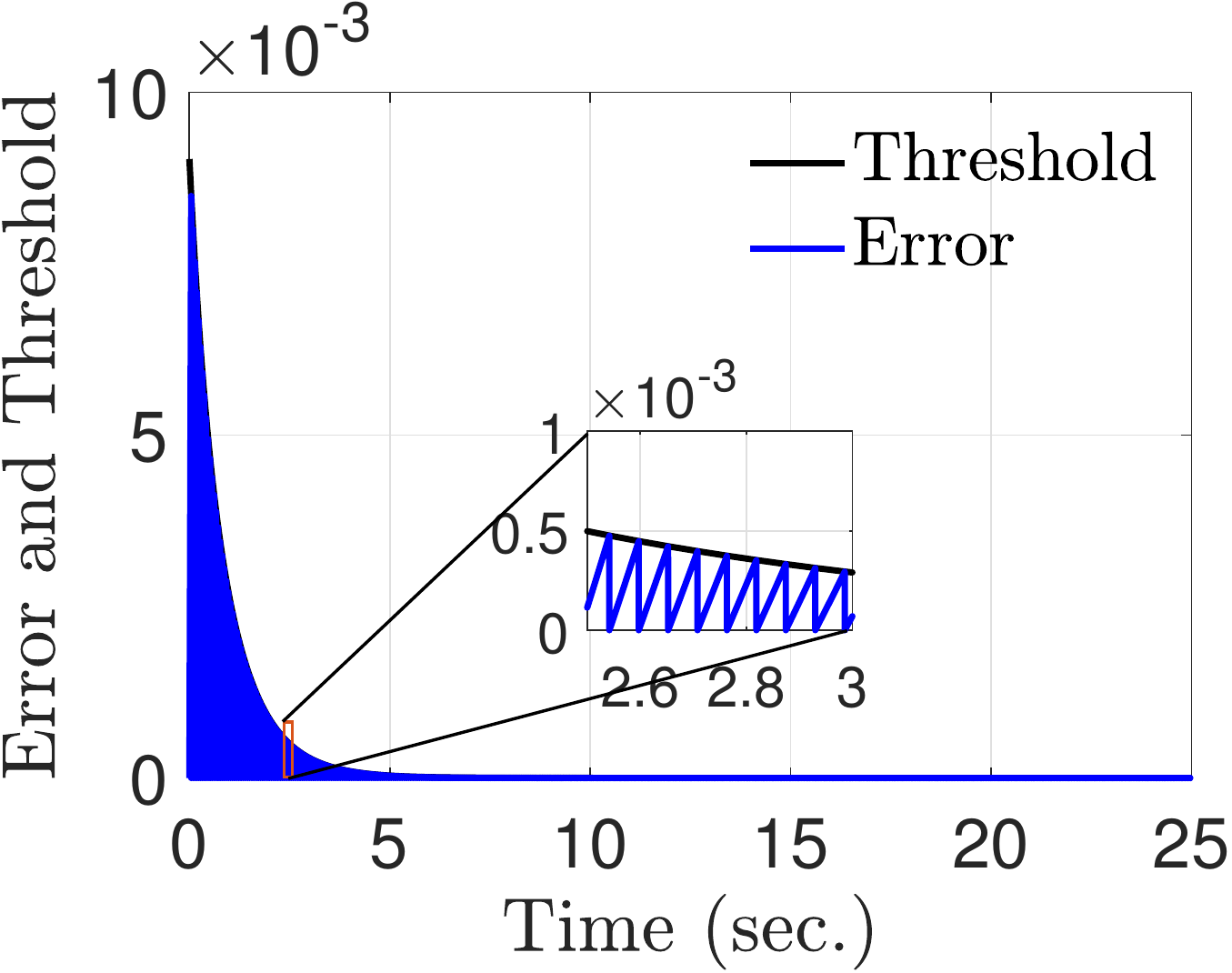}}
    \\ \vspace{0.25cm}
  \subfloat[Inter-event times\label{ex1p3}]{%
        \includegraphics[width=0.49\linewidth]{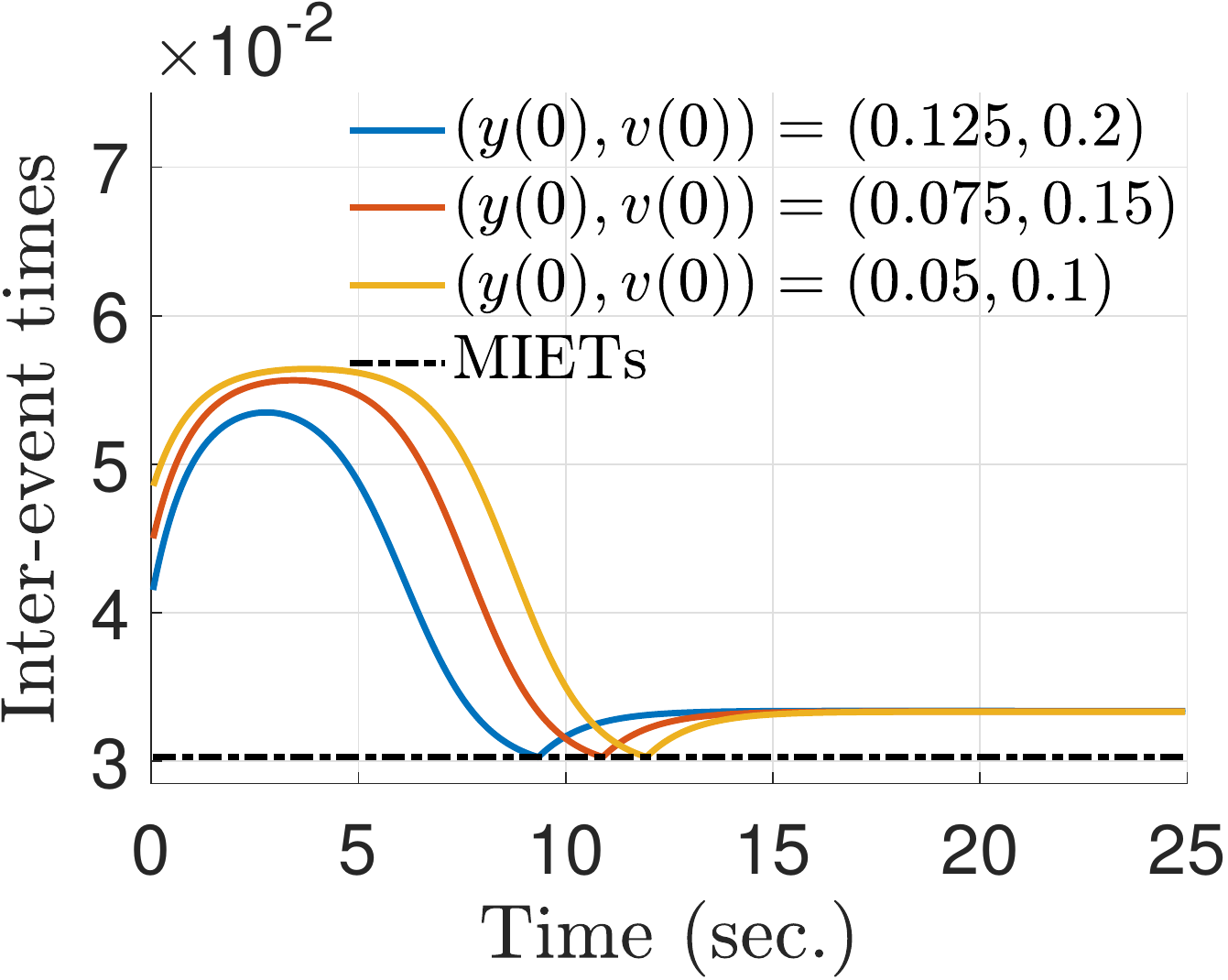}}
    \hfill
  \subfloat[Performance of time-triggered and event-triggered control\label{ex1p4}]{%
        \includegraphics[width=0.49\linewidth]{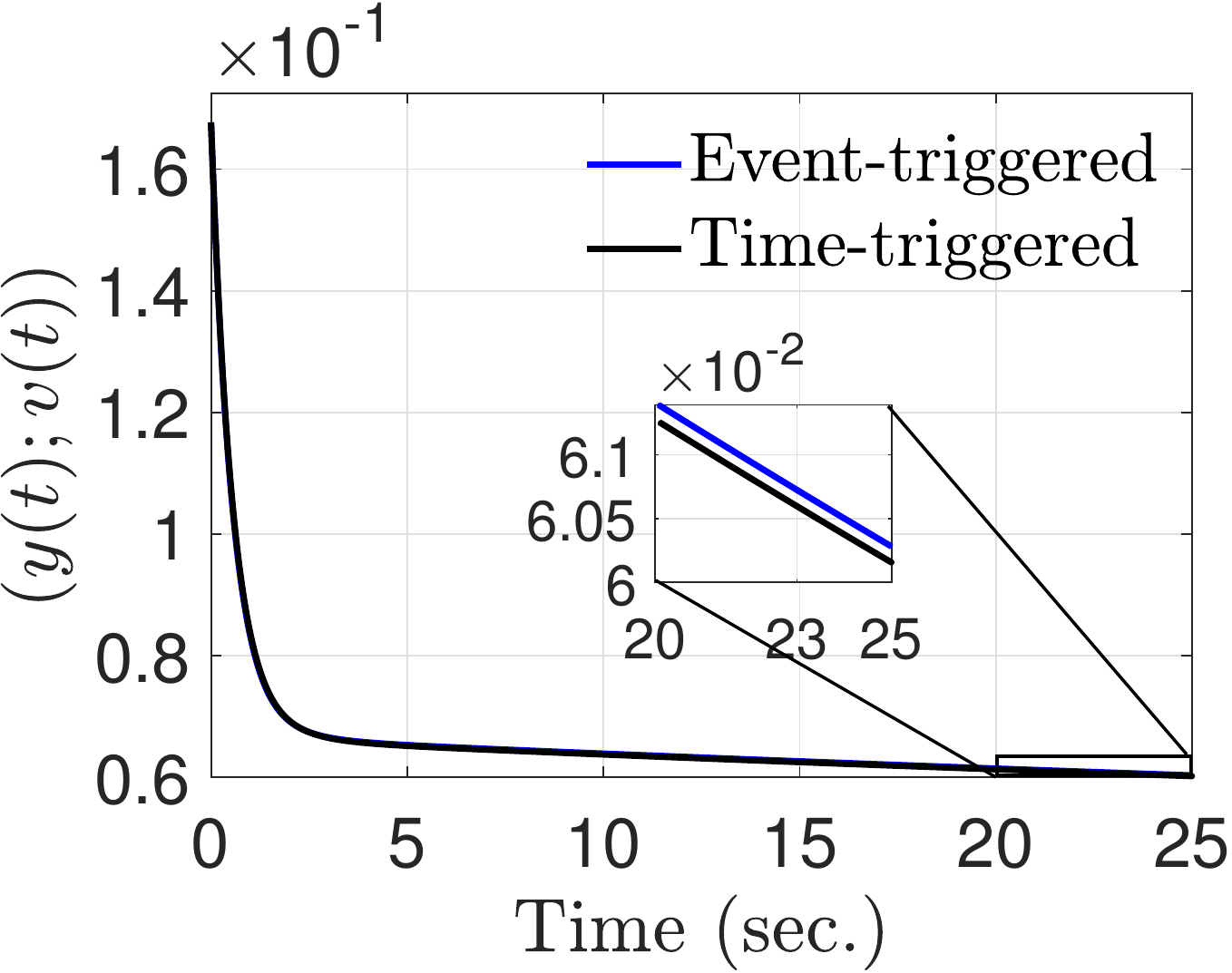}}
  \caption{Simulation results for event-triggered implementation of a controller designed for system \eqref{example_1_origin}, with the triggering rule $\norm{e_v}\geq\sigma(\norm{w}+y^4)$ for $\sigma=\frac{1}{16}$.}
  \label{Cumulative_Example1}
\end{figure}
The triggering conditions in Propositions \ref{first_new_proposition}-\ref{third_new_proposition} can be accurately checked only when the variable $w=v-h(y)$ is exactly computed. For many systems, $h(y)$ can only be found upto an approximation by solving \eqref{partial}.
This brings into question the utility of the change of coordinates from $(y;v)$ to $(y;w)$ and the following questions arise naturally - 1) Can triggering conditions be designed without resorting to the change of variables, directly in the $(y;v)$ coordinates? 2) Can we use the available partial knowledge of the center manifold to design triggering conditions? In the next two sections, we take a relook at the problem formulation, the assumptions and the proofs presented so far and address the questions posed above.  
%
%
\section{Need for the change of variables from $v$ to $w$}
\label{need}
The brief discussion following Lemma \ref{expoential_decay} in Section \ref{event_4} on the nature of the trajectories of nonlinear systems with center manifolds suggested that $(y;w)$ coordinates is the appropriate coordinate system to understand the behaviour of nonlinear systems with center manifolds. 
The change of variables from $v$ to $w=v-h(y)$ results in the functions $N_i$ satisfying conditions \eqref{crucial}, which are crucial for the proofs of Propositions \ref{main_result}-\ref{third_new_proposition}. On proceeding without the change of variables from \eqref{final_tranf}, we see that the functions
\begin{align*}
N_1&=g_1(y,v+Ey,K(y+e_y;v+e_v+E(y+e_y))\\
&\hspace{0.5cm}-g_1(y,h(y)+Ey,K(y;h(y)+Ey))\\
N_2&=g_2(y,v+Ey,K(y+e_y;v+e_v+E(y+e_v)))
\end{align*} 
satisfy
\begin{align}
N_i(0,0,0)=0, \frac{\partial N_i}{\partial y}(0,0,0)=0, \\ \frac{\partial N_i}{\partial v}(0,0,0)=0 \text{\; \;and\; \;} \frac{\partial N_i}{\partial e}(0,0,0)=0
\label{crucial_with_g_i}.
\end{align} 
In a small neighborhood of the origin
\begin{align}
S_3=\{(y;v) \;|\; \norm{(y;v)} < \delta_{yv}\}
\label{set_3}
\end{align}
we have for $i=1,2$,
\begin{align}
\norm{N_i}\leq k_i\norm{(y;v;e)}\leq k_i(\norm{y}+\norm{v}+\norm{e})
\label{newki}
\end{align}
where $k_i$ are positive constants which can be made arbitrarily small by decreasing $\delta_{yv}$. The difference between \eqref{newki} and inequalities \eqref{notice_difference} is the presence of the term $\norm{y}$. The absence of this term in \eqref{notice_difference} was crucial for all the proofs presented so far. Next, we analyse the local input-to-state stability of system \eqref{with_error} without the change of coordinates, by choosing the LISS Lyapunov function $V_4=V_1+\sqrt{v^{\top}Pv}$ (which is \eqref{main_lyapunov_fun} with $w$ replaced by $v$). Taking the time derivative of $V_4$ along the trajectories of system \eqref{with_error}, we have
\begin{align}
\dot{V}\leq&-(1-s_y)\alpha_4(\norm{y})-(1-s_f)\frac{\lambda_{min}(Q)}{2\sqrt{\lambda_{max}(P)}}\norm{v} \nonumber\\
&+\left(k_vk_1+k_2\frac{\lambda_{max}(P)}{\sqrt{\lambda_{min}(P)}}+\frac{\norm{PB_2K_1}}{\sqrt{\lambda_{min}(P)}}\right)\norm{e}\nonumber\\
&+\left(\left(k_vk_1+k_2\frac{\lambda_{max}(P)}{\sqrt{\lambda_{min}(P)}}\right)\norm{y}-s_y\alpha_3(\norm{y})\right) \nonumber\\
&+\left(k_vk_1+k_2\frac{\lambda_{max}(P)}{\sqrt{\lambda_{min}(P)}}-s_f\frac{\lambda_{min}(Q)}{2\sqrt{\lambda_{max}(P)}}\right)\norm{v}
\label{event_derivation}
\end{align}
where $s_y, s_f \in (0,1)$. When $\delta_{yv}$ defining the set $S_3$ is chosen such that $k_1$ and  $k_2$ in \eqref{newki} lead to
\begin{align}
\left(\left(k_vk_1+k_2\frac{\lambda_{max}(P)}{\sqrt{\lambda_{min}(P)}}\right)\norm{y}-s_y\alpha_4(\norm{y})\right)\leq 0
\label{asymptotic_condition}
\end{align}
and 
\begin{align}
\left(k_vk_1+k_2\frac{\lambda_{max}(P)}{\sqrt{\lambda_{min}(P)}}-s_f\frac{\lambda_{min}(Q)}{2\sqrt{\lambda_{max}(P)}}\right) \leq 0,
\label{condition_2}
\end{align} 
we obtain
\begin{equation}
\begin{aligned}
\dot{V}\leq&-(1-s_y)\alpha_3(\norm{y})-(1-s_f)\frac{\lambda_{min}(Q)}{2\sqrt{\lambda_{max}(P)}}\norm{v} \\
&+\left(k_vk_1+k_2\frac{\lambda_{max}(P)}{\sqrt{\lambda_{min}(P)}}+\frac{\norm{PB_2K_1}}{\sqrt{\lambda_{min}(P)}}\right)\norm{e}.
\end{aligned}
\label{iss_v}
\end{equation}
By making $S_3$ smaller, the constants $k_1$ and $k_2$ can be chosen such that \eqref{condition_2} is satisfied. Such a choice for the constants also leads to \eqref{asymptotic_condition} only if $\alpha_4(\norm{y}) \in {\cal O}(\norm{y}^p)$ with $p \leq 1$. In this case, the origin of system \eqref{final_tranf} is locally input-to-state stable by Definition \ref{Local_Input}. If $p>1$, \eqref{asymptotic_condition} is not satisfied close to the origin for any choice of $k_1$ and $k_2$ and \eqref{iss_v} holds only in the set $S_3\setminus S_4$ where 
\begin{equation}
\begin{aligned}
S_4=\{y \in \mathbb{R}^k\;|\;((k_vk_1+&k_2\frac{\lambda_{max}(P)}{\sqrt{\lambda_{min}(P)}})\norm{y} \\
&-s_y\alpha_4(\norm{y}))\geq 0\}.
\label{final_ball}
\end{aligned}
\end{equation}
Therefore, LISS of only the set $S_4$ can be concluded in this case. With the chosen LISS Lyapunov function in the $(y,v)$ coordinates, LISS of the origin of system \eqref{final_tranf} cannot be shown for all comparison function $\alpha_4$ given by the converse Lyapunov theorem. The function $\alpha_4$ guaranteed by the converse Lyapunov theorem (under the assumption that the reduced system is locally stable) can be any class-${\cal K}$ function and may or may not belong to $O(\norm{y}^p)$, $p \leq 1$. The change of variables from $v$ to $w$ helps in establishing LISS of the origin for all comparison functions $\alpha_4$.

The triggering conditions in Propositions \ref{first_new_proposition}-\ref{third_new_proposition} can be used by replacing $w$ by $v$, but Zeno-free local asymptotic stability of the origin is guaranteed when $\alpha_4 \in O(\norm{y}^p)$, $p \leq 1$ and only Zeno-free local asymptotic stability to the set \eqref{final_ball} can be guaranteed when $\alpha_4 \in O(\norm{y}^p)$, $p > 1$. 

The insights gained in the above discussion and in the discussion presented in the next section are formalized in Propositions \ref{first_corollary}-\ref{third_corollary} presented in section \ref{example_4}. Next, we consider the case of systems for which only approximate knowledge of the center manifold is available.  
\section{Event-triggered control with approximate knowledge of the center manifold} \label{example_4}
In this section, we analyse the applicability of Propositions \ref{second_new_proposition}-\ref{third_new_proposition} and the satisfaction of Assumptions \ref{second_assumption}, to see if triggering rules can be designed that require only partial knowledge of the center manifold. We begin with an illustrative example that brings out the differences in local stability analysis of systems for which only an approximation of the center manifold can be found and those for which the center manifold can be exactly found (Example 1).
\subsection {Example 2}
\noindent Consider the system
\begin{equation}
\begin{aligned}
\dot{y}&=-y(z_1-4z_2)\\
\left[\begin{matrix}
\dot{z}_1\\
\dot{z}_2
\end{matrix}\right]&=\left[\begin{matrix}0 & 1\\
-2 & 3\end{matrix}\right]\left[\begin{matrix}z_1\\
z_2\end{matrix}\right]+\left[\begin{matrix}0\\1\end{matrix}\right]u+\left[\begin{matrix}
y^2 \\0
\end{matrix}\right].
\end{aligned}
\label{sys_ex2}
\end{equation}
A linear state-feedback controller $u=[1 \;\; -4]z$ is used to place the poles of the linear part of the $z$-subsystem at $-0.5 \pm j0.0866$. The closed-loop system is
\begin{align}
\dot{y}&=-y(z_1-4z_2) \nonumber\\
\left[\begin{matrix}
\dot{z}_1\\
\dot{z}_2
\end{matrix}\right]&=\left[\begin{matrix}0 & 1\\
-1 & -1\end{matrix}\right]\left[\begin{matrix}z_1\\
z_2\end{matrix}\right]+\left[\begin{matrix}
y^2 \\0
\end{matrix}\right].
\label{example_closed}
\end{align}
The one-dimensional center manifold satisfying \eqref{partial}, found accurately up to order two, is given by
\begin{align*}
\left[\begin{matrix}
z_1 \\ z_2
\end{matrix}\right]=\left[\begin{matrix}
h_1(y) \\ h_2(y)
\end{matrix}\right]=\left[\begin{matrix}
y^2+{\cal O}(|y|^4) \\ -y^2+{\cal O}(|y|^4)
\end{matrix}\right].
\end{align*}
The dynamics on the center manifold is governed by
\begin{align}
\dot{y}=g_1(y,h(y))=-5y^3+{\cal O}(|y|^5).
\label{example_center}
\end{align}
Consider the nonsmooth Lyapunov function $V_1(y)=|y|$. On the set $\mathbb{R} \setminus \{0\}$,
\begin{align*}
\dot{V}_1=\frac{y\dot{y}}{|y|}=-5|y|^3 +{\cal O}(|y|^5).
\end{align*}
In a small neighbourhood of the origin 
\begin{equation}
\begin{aligned}
\dot{V}_1&\leq-5(1-\theta)|y|^3 - 5\theta |y|^3 + k_p|y|^5\\
&\leq -5(1-\theta)|y|^3<0 \;\;\;\;\; \forall\;|y|^2 \leq \dfrac{5\theta}{k_p}
\end{aligned}
\label{vdot_ex2}
\end{equation}
for some $k_p>0$ and local asymptotic stability of \eqref{example_center} is concluded. The partial knowledge of the center manifold suffices for the local stability analysis of this system. Through \eqref{example_center} and \eqref{vdot_ex2}, we see that Assumption \ref{second_assumption} is satisfied. However, since $g_1(y,h(y))$ in \eqref{example_center} cannot be obtained in closed form, only crude estimates of the neighbourhoods where the conditions in Assumption \ref{second_assumption} hold can be found. Similarly, good estimates of the neighborhoods where \eqref{crucial} are satisfied cannot be obtained. These estimates however, improve as the center manifold is computed to higher powers.  
\begin{remark}
If establishing the local stability of the equilibrium point is of sole interest, approximate knowledge of the center manifold suffices. In practice, as in \cite{tethered, vijay_TCST}, only the local analysis is performed and it is argued that the region of attraction is much larger than the estimates derived as above.
\end{remark}
Next, we analyse the consequences of using the triggering conditions of the form $\norm{e}\geq \sigma(\norm{w^{a}}+\norm{y})$ and $\norm{e_v}\geq \sigma(\norm{w^{a}}+\norm{y}^{(p+1)})$, which possess the same structure as the triggering conditions in Propositions \ref{first_new_proposition} and \ref{second_new_proposition} respectively, but with $w=v-h(y)$ replaced by the approximation $w^{a}=v-h^{a}(y)$. Here, $h^{a}(y)$ is a polynomial approximation of $h(y)$ of degree $r$, found by solving \eqref{partial} and they are related by $h(y)=h^{a}(y)+\mathcal{O}(\norm{y}^s)$ with $s>r$.  

Consider \eqref{comp_3} in the proof of Proposition \ref{second_new_proposition}. With the proposed event-triggering condition, $\norm{e_v}\leq \sigma(\norm{w^{a}}+\norm{y}^{(p+1)})$ is ensured throughout the implementation. Using the inequality $\norm{w^{a}}=\norm{w+(h(y)-h^{a}(y))}\leq\norm{w}+\norm{(h(y)-h^{a}(y))}\leq\norm{w}+\mathcal{O}(\norm{y}^s)$, we arrive at
\begin{equation}
\begin{aligned}
\dot{V}\leq&-k_6(1-s_y) \norm{y}^{p}-(1-s_f)\frac{\lambda_{min}(Q)}{2\sqrt{\lambda_{max}(P)}}\norm{w}\\
&+\mathcal{O}(\norm{y}^s).
\end{aligned}
\label{vdot_diff}
\end{equation}
The difference between \eqref{vdot_diff} and \eqref{vdot_or} in the proof of Proposition \ref{second_new_proposition} is the presence of $\mathcal{O}(\norm{y}^s)$. Close to the origin, the sum of the first and third term, which can be any polynomial in $\mathcal{O}(\norm{y}^s)$, is less than zero. Therefore $\dot{V}<0$ close to the origin and local asymptotic stability of the origin is guaranteed. The non-existence of Zeno behaviour can be shown along the same lines as in the proof of Proposition \ref{second_new_proposition}. Similar argument holds when using the triggering conditions presented in Propositions \ref{first_new_proposition} and \ref{third_new_proposition} with $w$ replaced with $w^a$.  

The insights gained above and in Section \ref{need} are formalised in the propositions to follow. The proofs of the Propositions are omitted, as they follow along the same lines as the proofs of Propositions \ref{first_new_proposition}-\ref{third_new_proposition} and differ as discussed in the present and the previous section.
\begin{proposition}
\label{first_corollary}
Consider the system \eqref{final_tranf}. Under the assumptions of Theorem \ref{center_manifold_theorem}, if the origin $y=0$ of the reduced system \eqref{dynamics_center} is locally asymptotically stable and there exists a LISS Lyapunov function such that in \eqref{ISS_char_1}, $\alpha_4 \in {\cal O}(\norm{y}^p)$, $p \leq 1$, then the origin of the overall system \eqref{final} is locally asymptotically stable under event-triggered implementations with relative triggering rules $\norm{e}\geq \sigma(\norm{y}+\norm{v})$ and $\norm{e}\geq \sigma(\norm{y}+\norm{w^{a}})$, with $\sigma$ chosen as in \eqref{sigma_sf}. Moreover, the inter-execution times $(t_{i+1}-t_i)$ are lower bounded by a positive constant for all $i \geq 0$.
\end{proposition}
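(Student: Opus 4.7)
The plan is to combine the analysis of Section~\ref{need} (which derives the LISS bound in the $(y;v)$ coordinates rather than the $(y;w)$ coordinates) with the approximation argument of Section~\ref{example_4}, then adapt the Zeno analysis of Proposition~\ref{first_new_proposition}. I would split the proof into two parts, one for each triggering rule, followed by a common Zeno-freeness argument.

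For the rule $\norm{e}\geq \sigma(\norm{y}+\norm{v})$, I would adopt the Lyapunov function $V_4 = V_1(y)+\sqrt{v^{\top}Pv}$ in the original $(y;v)$ coordinates and invoke the inequality \eqref{event_derivation} obtained in Section~\ref{need}. Since $\alpha_4 \in \mathcal{O}(\norm{y}^p)$ with $p\leq 1$, locally $\alpha_4(\norm{y})\geq l_1\norm{y}$ for some $l_1>0$, so that after choosing $\delta_{yv}$ small enough to force the bracketed coefficient of $\norm{v}$ in \eqref{event_derivation} to be non-positive, one is left with a bound of the form $\dot V \leq -(1-s_y)l_1\norm{y} - (1-s_f)\tfrac{\lambda_{\min}(Q)}{2\sqrt{\lambda_{\max}(P)}}\norm{v} + \bar m_{p_1}\norm{e}$. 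Substituting the triggering inequality $\norm{e}\leq \sigma(\norm{y}+\norm{v})$ and picking $\sigma$ small enough (which is exactly what \eqref{sigma_sf} guarantees, possibly after rescaling to account for the coefficient of $\norm{y}$) makes both residual coefficients strictly negative, giving $\dot V < 0$ in a neighbourhood of the origin.

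For the rule $\norm{e}\geq \sigma(\norm{y}+\norm{w^{a}})$, I would keep the original LISS characterization \eqref{LISS_Char} in the $(y;w)$ coordinates and use the triangle inequality $\norm{w^{a}} = \norm{w + (h(y)-h^{a}(y))} \leq \norm{w} + c\norm{y}^{s}$ with $s>r\geq 2$ from the polynomial approximation. Substituting the triggering bound $\norm{e}\leq \sigma(\norm{y}+\norm{w}+c\norm{y}^{s})$ into \eqref{LISS_Char} and again using $\alpha_4(\norm{y})\geq l_1\norm{y}$ yields
\begin{equation*}
\dot V \leq -\bigl(l_1 - \bar m_{p_1}\sigma\bigr)\norm{y} - \Bigl(\tfrac{\lambda_{\min}(Q)}{2\sqrt{\lambda_{\max}(P)}}(1-s_f) - \bar m_{p_1}\sigma\Bigr)\norm{w} + \mathcal{O}(\norm{y}^{s}).
\end{equation*}
With $\sigma$ as in \eqref{sigma_sf}, both parenthesized coefficients are strictly positive; the residual $\mathcal{O}(\norm{y}^{s})$ term (with $s\geq 2$) is then dominated by the $\norm{y}$ term in a small enough ball, exactly as in the argument leading to \eqref{vdot_diff}. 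Hence $\dot V < 0$ locally and asymptotic stability follows.

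For the inter-event time lower bound, I would reproduce the analysis of the fraction $\norm{e(t)}/(\norm{y(t)}+\norm{v(t)})$ (respectively $\norm{e(t)}/(\norm{y(t)}+\norm{w^{a}(t)})$) used in Remark~\ref{remark_prop_2}. Because $f$, the coordinate transformations, and $h^{a}$ are all smooth, there exist local Lipschitz constants $L_1, L_2$ with $\norm{\dot y}+\norm{\dot v}\leq L\norm{(y;v;e)}$, and similarly for $w^{a}$. Differentiating the fraction and bounding as in Proposition~\ref{first_new_proposition} yields $\dot q \leq L(1+q)$ with $q = \norm{e}/(\norm{y}+\norm{v})$, giving a uniform lower bound $\tau = \sigma/(L(1+\sigma)) > 0$ between events. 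The main obstacle is the second rule: one has to verify that the approximation error $h(y)-h^{a}(y)$, which is smooth in $y$, does not spoil the Lipschitz bound on $\dot w^{a}$ nor introduce a term that prevents the fraction from being reset to zero after an event. Since the event reset $e=0$ is in the measured variable $v$ and $w^{a}$ depends smoothly on $y$, this is a routine verification, and the uniform lower bound is preserved.
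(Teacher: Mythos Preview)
Your proposal is correct and follows essentially the approach the paper intends: the paper omits the proof of this proposition, stating explicitly that it ``follows along the same lines as the proofs of Propositions~\ref{first_new_proposition}--\ref{third_new_proposition} and differs as discussed'' in Sections~\ref{need} and~\ref{example_4}, which is precisely the combination you outline (the $(y;v)$-coordinate LISS bound \eqref{event_derivation} for the first rule, the approximation argument culminating in \eqref{vdot_diff} for the second, and the fraction analysis of Proposition~\ref{first_new_proposition} for Zeno-freeness). One small slip: the differential inequality for $q=\norm{e}/(\norm{y}+\norm{v})$ should read $\dot q \leq L(1+q)^2$ rather than $\dot q \leq L(1+q)$ (the second factor comes from bounding the derivative of the denominator), which is what actually yields $\tau=\sigma/(L(1+\sigma))$; this does not affect the conclusion.
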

The next proposition caters to the class of systems for which Assumption \ref{second_assumption} is satisfied.
\begin{proposition}\label{second_corollary}
Consider the system \eqref{final_tranf}. Under the assumptions of Theorem \ref{center_manifold_theorem}, if the origin $y=0$ of the reduced system \eqref{dynamics_center} is locally asymptotically stable and the conditions in Assumption \ref{second_assumption} are satisfied, then 
\begin{enumerate}
    \item If sets $S_3$ in \eqref{set_3} and $S_4$ in \eqref{final_ball} are such that $S_4 \subset S_3$, then the set $S_4$ containing the origin of the overall system \eqref{final} is locally asymptotically stable under the event-triggered implementation with relative thresholding $\norm{e_v}\geq \sigma(\norm{v}+\norm{y}^{(p+1)})$, with $\sigma$ chosen as in \eqref{sigma_sf}.
    \item The origin of the overall system \eqref{final} is locally asymptotically stable under the event-triggered implementation with relative thresholding $\norm{e_v}\geq \sigma(\norm{w^{a}}+\norm{y}^{p+1})$, with $\sigma$ chosen as in \eqref{sigma_sf}.
\end{enumerate}
Moreover, in both the cases above, the inter-execution times $(t_{i+1}-t_i)$ are lower bounded by a positive constant for all $i \geq 0$.
\end{proposition}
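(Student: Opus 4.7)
The plan is to handle the two cases separately, in each case mimicking the structure of the proof of Proposition \ref{second_new_proposition} but accounting for the two different approximations discussed in Section \ref{need} and in the paragraph preceding the proposition: working directly in the $(y;v)$ coordinates (Part 1), and using the polynomial approximation $h^{a}(y)$ of the center manifold (Part 2). In both cases the Lyapunov function is of the form $V = \|y\| + \sqrt{\xi^{\top} P \xi}$, with $\xi = v$ or $\xi = w$ depending on the case, and the triggering rule is enforced throughout so that $\|e_v\|$ can be replaced by its relative bound whenever it appears.

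For Part 1, the plan is to use $V_4 = \|y\| + \sqrt{v^{\top} P v}$ and differentiate along \eqref{with_error} on $\mathbb{R}^n \setminus N_e$, using $A_1 = 0$, $y^{\top} g_1 \leq -k_6 \|y\|^{p+1}$, the Lyapunov equation $A_K^{\top}P + PA_K = -Q$, and the bounds \eqref{newki} with $\xi = v$. This produces an inequality structurally identical to \eqref{comp_3} but with $w$ replaced by $v$ and with an additional cross-term proportional to $\|y\|$ coming from \eqref{newki}. Substituting the triggering bound $\|e_v\| \leq \sigma(\|v\| + \|y\|^{p+1})$ and picking $\sigma$ as in \eqref{sigma_sf}, the terms multiplying $\|v\|$ become negative in a small neighbourhood (by shrinking $\delta_{yv}$ so that \eqref{condition_2} holds with $k_1, k_2$). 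The remaining part is a competition of the form $-k_6(1-s_y)\|y\|^p + \bar{m}_{p_2}\sigma \|y\|^{p+1} + (k_v k_1 + k_2 \lambda_{\max}(P)/\sqrt{\lambda_{\min}(P)})\|y\| - k_6 s_y \|y\|^p$, where the ``linear in $\|y\|$'' term cannot be absorbed by any power $\|y\|^p$ with $p > 1$ close to the origin. This is exactly why $\dot V_4 < 0$ can only be guaranteed outside the set $S_4$ defined in \eqref{final_ball}, and why the assumption $S_4 \subset S_3$ is needed so that $S_4$ lies in the domain where the above Lyapunov inequality is valid. Applying \cite[Theorem 4.18]{khalil} (as in the proof of Proposition \ref{third_new_proposition}) then yields asymptotic stability of $S_4$.

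For Part 2, the plan is to use the same Lyapunov function $V = \|y\| + \sqrt{w^{\top} P w}$ as in Proposition \ref{second_new_proposition}, noting that the error $e_v = e_w + e_h$ is unchanged by the fact that we check the trigger using $w^{a}$ rather than $w$. Differentiation proceeds exactly as in Proposition \ref{second_new_proposition} up to \eqref{comp_3}. The only modification comes at the step of substituting the triggering bound: now $\|e_v\| \leq \sigma(\|w^{a}\| + \|y\|^{p+1})$, and using $\|w^{a}\| \leq \|w\| + \|h(y) - h^{a}(y)\| \leq \|w\| + c\,\|y\|^s$ with $s > r \geq 2$ introduces the extra $\mathcal{O}(\|y\|^s)$ term exactly as in \eqref{vdot_diff}. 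Choosing $\sigma$ per \eqref{sigma_sf} and shrinking $\delta_{yw}$ so that \eqref{ensure} and \eqref{ensure_2} hold, one obtains
\begin{equation*}
\dot V \leq -k_6(1-s_y)\|y\|^p - (1-s_f)\tfrac{\lambda_{\min}(Q)}{2\sqrt{\lambda_{\max}(P)}}\|w\| + c'\,\|y\|^s.
\end{equation*}
Since $s > p$, the last term is dominated by the $-\|y\|^p$ term in a sufficiently small neighbourhood of the origin, giving $\dot V < 0$ there and hence local asymptotic stability of the origin.

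For the non-existence of Zeno behaviour in both parts, the plan is to transcribe the argument of Proposition \ref{second_new_proposition} by analysing $e_s(t) = \|e_v(t)\|/(\|\xi(t)\| + \|y(t)\|^{p+1})$ for $\xi \in \{v, w^{a}\}$. The derivative of this ratio is bounded by an expression of the form $\dot e_s \leq a_1 + (a_2+a_3)e_s + a_4 e_s^2$, following \eqref{ineq1}–\eqref{comparison_1}, which yields a strictly positive lower bound $\tau_1$ for the time needed for $e_s$ to grow from $0$ to $\sigma$, as in \eqref{tau1_e}. The main obstacle I expect is the constant bookkeeping in Part 2: tracking how $h - h^{a} = \mathcal{O}(\|y\|^s)$ and $\partial_y(h - h^{a}) = \mathcal{O}(\|y\|^{s-1})$ propagate through $\dot e_h$ and through the denominator $\|w^{a}\| + \|y\|^{p+1}$ of the triggering ratio, so as to obtain clean constants $a_i$ analogous to those in Proposition \ref{second_new_proposition} without losing the uniform positive lower bound on $\tau_1$. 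This is routine but technical, and is where the proof genuinely differs from the earlier propositions.
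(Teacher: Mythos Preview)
Your proposal is essentially the argument the paper has in mind (the paper omits the proof and refers back to Propositions~\ref{first_new_proposition}--\ref{third_new_proposition} and the discussion in Sections~\ref{need}--\ref{example_4}), and Part~2 together with its Zeno analysis is correct as written.

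There is, however, a genuine gap in the Zeno argument for Part~1. You propose to bound the ratio $e_s=\norm{e_v}/(\norm{v}+\norm{y}^{p+1})$ exactly as in Proposition~\ref{second_new_proposition}. But that argument relied crucially on the bounds \eqref{notice_difference}, namely $\norm{N_i}\leq k_i(\norm{w}+\norm{e})$, which are the very consequence of the change of variables $w=v-h(y)$. In the $(y;v)$ coordinates one only has \eqref{newki}, $\norm{N_i}\leq k_i(\norm{y}+\norm{v}+\norm{e})$, so that $\norm{\dot e_v}=\norm{\dot v}$ contains an irreducible $k_2\norm{y}$ contribution. When you divide by the denominator $\norm{v}+\norm{y}^{p+1}$, the term $\norm{y}/(\norm{v}+\norm{y}^{p+1})$ is unbounded near the origin (take $v=0$ and $y\to 0$; the ratio behaves like $\norm{y}^{-p}$ with $p>1$). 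Hence you cannot obtain a differential inequality of the form $\dot e_s\leq a_1+(a_2+a_3)e_s+a_4 e_s^2$ with \emph{uniform} constants $a_i$, and the Comparison-Lemma step \eqref{tau1_e} breaks down.

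This is precisely why the paper remarks, immediately after the statement, that ``the event-triggered implementation in case~1) of Proposition~\ref{second_corollary} must be as in \eqref{imp2} with no triggering in the interior of the set $S_4$ to ensure Zeno-free implementation.'' In other words, for Part~1 you should follow the route of Proposition~\ref{third_new_proposition}, not Proposition~\ref{second_new_proposition}: outside $S_4$ the threshold $\sigma(\norm{v}+\norm{y}^{p+1})$ is uniformly bounded below by some $\sigma_l>0$, and then the direct estimate $\norm{e(t)}\leq(\delta/L_1)(e^{L_1 t}-1)$ of \eqref{ltau3}--\eqref{est2} yields a positive lower bound $\tau_3$ on the inter-event times. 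With this correction your proof of Part~1 goes through; the stability part you wrote (using $V_4$, \eqref{event_derivation}--\eqref{iss_v}, and \cite[Theorem~4.18]{khalil}) is fine.
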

The event-triggered implementation in case 1) of Proposition \ref{second_corollary} must be as in \eqref{imp2} with no triggering in the interior of the set $S_4$ to ensure Zeno-free implementation. Unlike Proposition \ref{third_new_proposition}, the set for which asymptotic stability can be guaranteed cannot be made arbitrarily small. 
\begin{proposition}
\label{third_corollary}
Consider the system \eqref{final_tranf}. Under the assumptions of Theorem \ref{center_manifold_theorem}, if the origin $y=0$ of the reduced system \eqref{dynamics_center} is locally asymptotically stable and the conditions in Assumption \ref{third_assumption} are satisfied, then the trajectories of the system \eqref{final} are locally ultimately bounded by $\mathcal{B}_{r_s} \subset S_v$ (a sub-level set of \eqref{lyap_prop2} where \eqref{vdot_diff} holds), under the event-triggered implementation \eqref{imp2} with relative thresholding $\norm{e}\geq \sigma(\norm{w^{a}}+\norm{y}^{(p+1)})$ and $\sigma$ chosen  according to \eqref{sigma_sf}. Moreover, the inter-execution times $(t_{i+1}-t_i)$ are lower bounded by a positive constant for all $i \geq 0$.
\end{proposition}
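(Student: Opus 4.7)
The plan is to mimic the proof of Proposition \ref{third_new_proposition}, adapting the bounds so that the error introduced by replacing $h(y)$ with its polynomial approximation $h^a(y)$ can be absorbed outside the ultimate bound ball. First I would retain the same Lyapunov function $V=\norm{y}+\sqrt{w^{\top}Pw}$ as in \eqref{lyap_prop2} (with $w=v-h(y)$ being the exact deviation, even though $w$ is not computed for the triggering) so that the LISS bound \eqref{comp_3} still holds. The triggering rule uses $w^a=v-h^a(y)$, and since $h(y)=h^a(y)+\mathcal{O}(\norm{y}^s)$ with $s>r\geq 2$, the identity $\norm{w^a}\leq\norm{w}+c\norm{y}^s$ lets me convert the implemented thresholding $\norm{e}\leq\sigma(\norm{w^a}+\norm{y}^{p+1})$ into an in-proof bound $\norm{e}\leq\sigma\norm{w}+\sigma(c\norm{y}^s+\norm{y}^{p+1})$.

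Second, substituting this into \eqref{comp_3} and choosing $\sigma$ as in \eqref{sigma_sf}, the $\norm{w}$ coefficient is handled exactly as in Proposition \ref{second_new_proposition}, and the $\norm{y}$-dependent terms yield
\begin{equation*}
\dot{V}\leq -k_6(1-s_y)\norm{y}^p-(1-s_f)\frac{\lambda_{\min}(Q)}{2\sqrt{\lambda_{\max}(P)}}\norm{w}+\mathcal{O}(\norm{y}^s),
\end{equation*}
which is precisely the inequality \eqref{vdot_diff}. Since $s>p$, the $\mathcal{O}(\norm{y}^s)$ remainder is dominated by $-k_6(1-s_y)\norm{y}^p$ on some neighbourhood of the origin, so there exists a continuous positive definite function $\tilde{w}_s(y,w)$ with $\dot{V}\leq -\tilde{w}_s(y,w)$ for all $(y;w)$ in a set $\tilde{S}_1$; the sub-level set construction of $S_v$ carries over verbatim. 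Under the implementation \eqref{imp2}, triggering is disabled inside $S_2=\{\norm{(y;w)}<r_1\}$ where $r_1=\alpha_2^{-1}\alpha_1(r_s)$, so the Lyapunov decrease holds for every $\norm{(y;w)}\geq r_1$, and Khalil's Theorem 4.18 delivers ultimate boundedness by $\mathcal{B}_{r_s}$ in finite time.

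Third, for the Zeno analysis I would repeat the argument that leads to \eqref{est2}. Between events, $\norm{\dot{e}}=\norm{(\dot{y};\dot{w})}$ is still bounded by $L_1\norm{(y;w)}+L_1\norm{e}\leq L_1\delta_{yw}+L_1\norm{e}$ inside $S_1$ (the smoothness of the coordinate changes is not affected by whether we know $h$ exactly). Outside $S_2$, the triggering threshold admits a uniform positive lower bound, namely $\sigma_l^a\triangleq \sigma(r_1^a+r_1^{p+1})$ where $r_1^a>0$ is the minimum of $\norm{w^a}$ on the boundary of $S_2$ (which is positive provided $h-h^a$ is small enough on $\mathcal{B}_{r_1}$, as can be arranged by making $\delta_{yw}$ small and taking $r$ high enough, or simply by the fact that $\norm{w^a}+\norm{y}^{p+1}>0$ on the closed set $\bar{S}_v\setminus S_2$ since that set is compact and excludes the origin). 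The Comparison Lemma applied to $\norm{e(t)}\leq (\delta/L_1)(e^{L_1 t}-1)$ then yields the uniform lower bound $\tau_3^a=(1/L_1)\ln(1+\sigma_l^a L_1/\delta)>0$.

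The main obstacle I anticipate is the third step, specifically ensuring that the threshold $\sigma(\norm{w^a}+\norm{y}^{p+1})$ is bounded below by a strictly positive constant on $S_v\setminus S_2$, because $w^a$ can vanish on the approximate center manifold $v=h^a(y)$ while $w$ need not. One clean resolution is to shrink $\delta_{yw}$ so that $\norm{h-h^a}_\infty$ on $\bar{S}_v$ is much smaller than $r_1^{p+1}$; then $\norm{y}^{p+1}$ alone already gives a positive lower bound for the threshold on $S_v\setminus S_2\cap\{\norm{y}\geq r_1/2\}$, while on the complementary region $\norm{w}\geq r_1/2$ forces $\norm{w^a}$ to be bounded below. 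A case split along these lines, together with the observation that the approximation error drops out to higher order, lets the Comparison-Lemma argument go through unchanged and produces a uniform $\tau_3^a>0$, as required.
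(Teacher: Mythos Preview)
Your proposal is correct and matches the paper's approach exactly: the paper omits the proof of this proposition and merely says it ``follow[s] along the same lines as the proofs of Propositions \ref{first_new_proposition}--\ref{third_new_proposition} and differ[s] as discussed'' in Sections \ref{need}--\ref{example_4}, i.e., precisely the substitution $\norm{w^a}\leq\norm{w}+\mathcal{O}(\norm{y}^s)$ leading to \eqref{vdot_diff}, followed by the ultimate-boundedness and Comparison-Lemma steps of Proposition \ref{third_new_proposition}. Your care in the third step---noting that the threshold $\sigma(\norm{w^a}+\norm{y}^{p+1})$ must be bounded below on $S_v\setminus S_2$ and proposing a case split to secure this---is actually more scrupulous than what the paper writes down, so you may keep that argument as stated.
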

Propositions \ref{first_corollary}-\ref{third_corollary} are analogues of the Propositions \ref{first_new_proposition}-\ref{third_new_proposition} from Section \ref{event_4}. Next, we implement the controller designed for Example 2 using the relative thresholding rule from Proposition \ref{second_corollary}. 
\subsection{Example 2 revisited}
The center manifold $h(y)$ was found up to order two and $h^{a}_1(y)=[y^2 \; -y^2]^{\top}$.  The Lyapunov function from Proposition \ref{second_new_proposition} is of the form $V=|y|+\sqrt{w^{\top}Pw}$, where $P\succ0$ is found by solving the Lyapunov equation for $Q=\mathbb{I}_2$. 
With the thresholding rule from Proposition \ref{second_corollary}, with $s_f=0.5$, we arrive at 
\begin{align*}
\dot{V}\leq -|y|^3-\frac{\lambda_{min}(Q)}{4\sqrt{\lambda_{max}(P)}}\norm{w}+{\cal O}(|y|)^4.
\end{align*}
Local asymptotic stability of the origin and the non-existence of Zeno behaviour is guaranteed through Proposition \ref{second_corollary}. 
\begin{figure}
\centering
\subfloat[Evolution of the states of the closed-loop system \label{plot1_good}]{\includegraphics[width=0.49\linewidth]{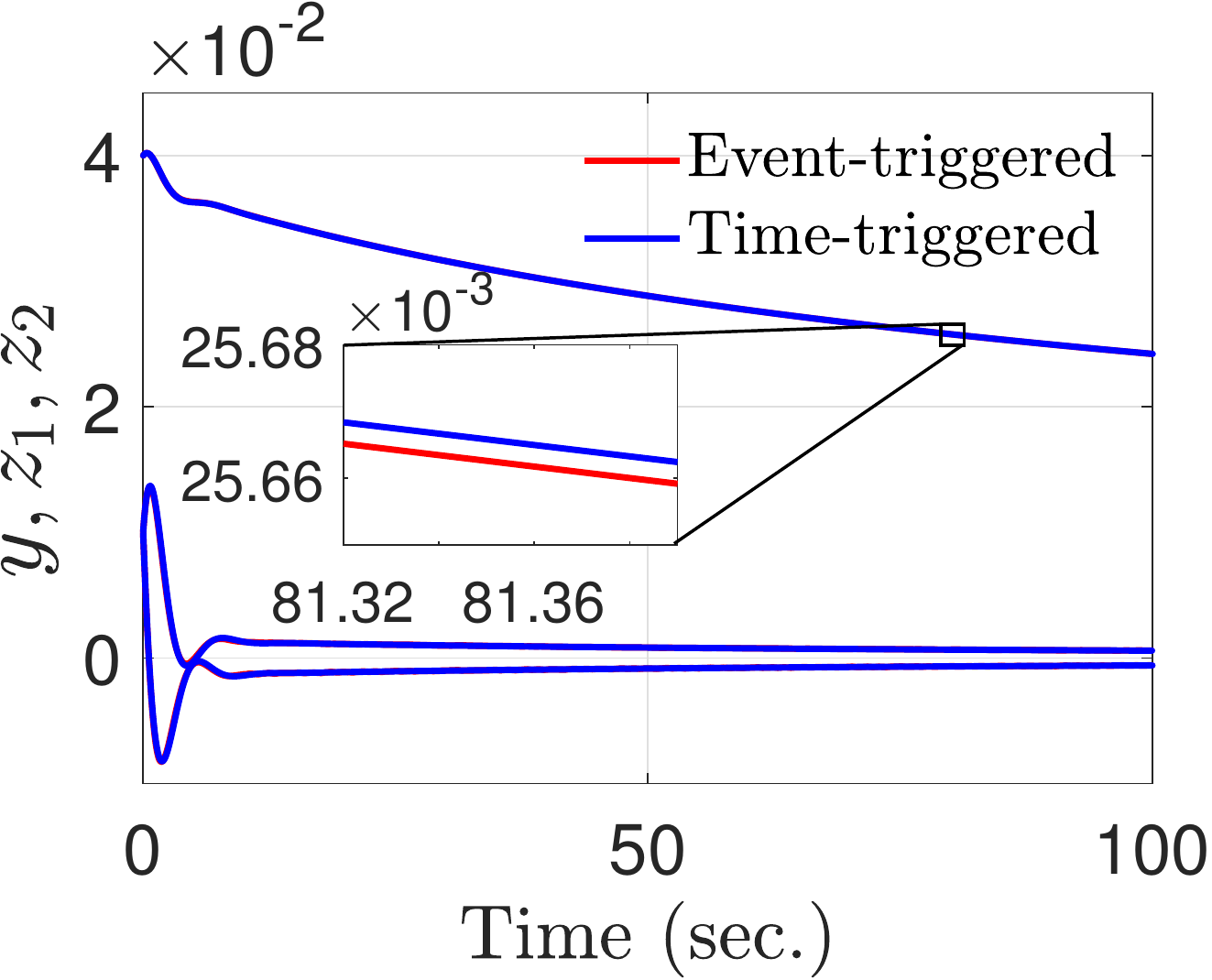}}
\hfill
\subfloat[Evolution of the system in the $y-z_1$ plane\label{plot2_good}]{\includegraphics[width=0.49\linewidth]{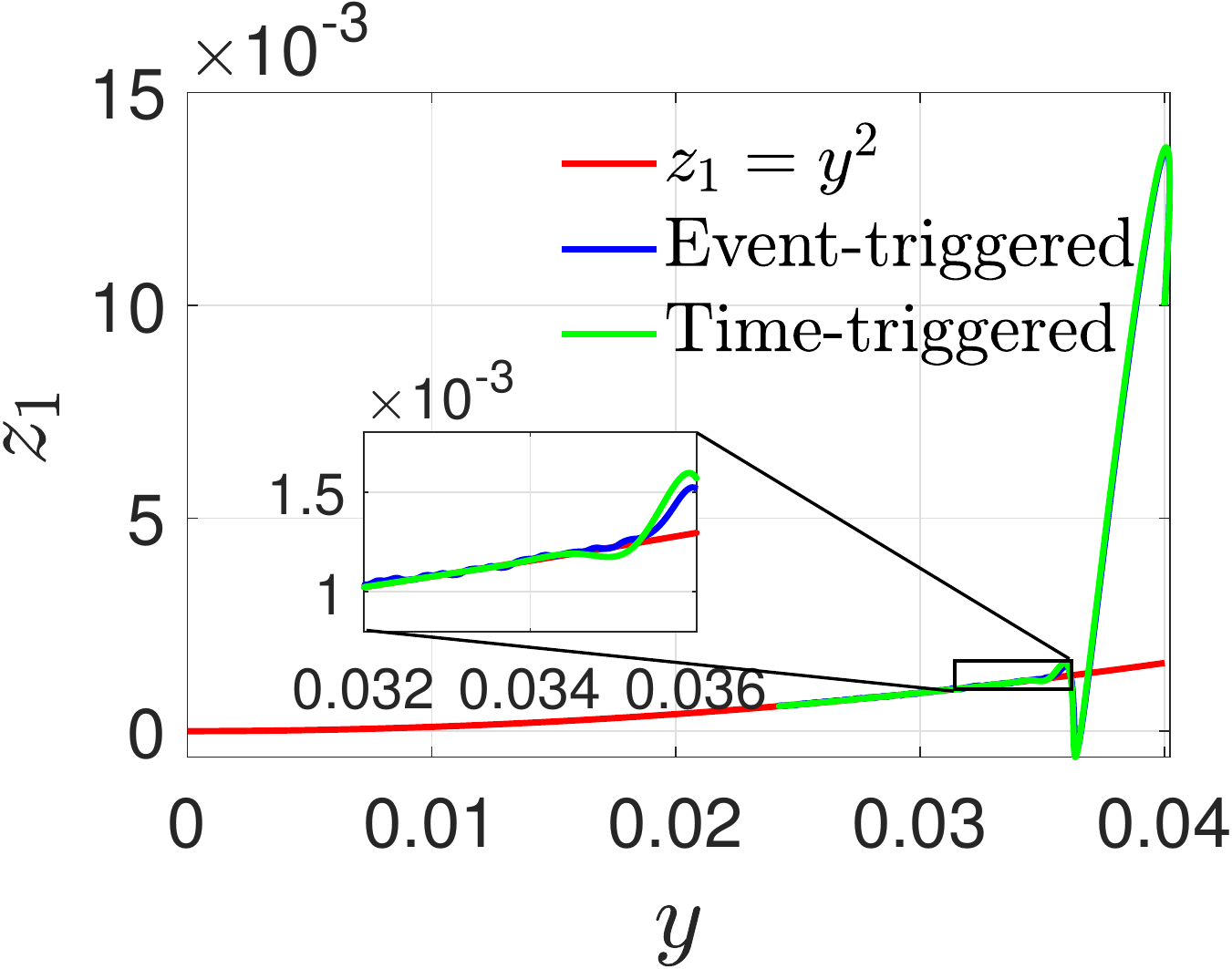}} 
\\ \vspace{0.2cm}
\subfloat[Evolution of the system in the $y-z_2$ plane\label{plot3_good}]{\includegraphics[width=0.49\linewidth]{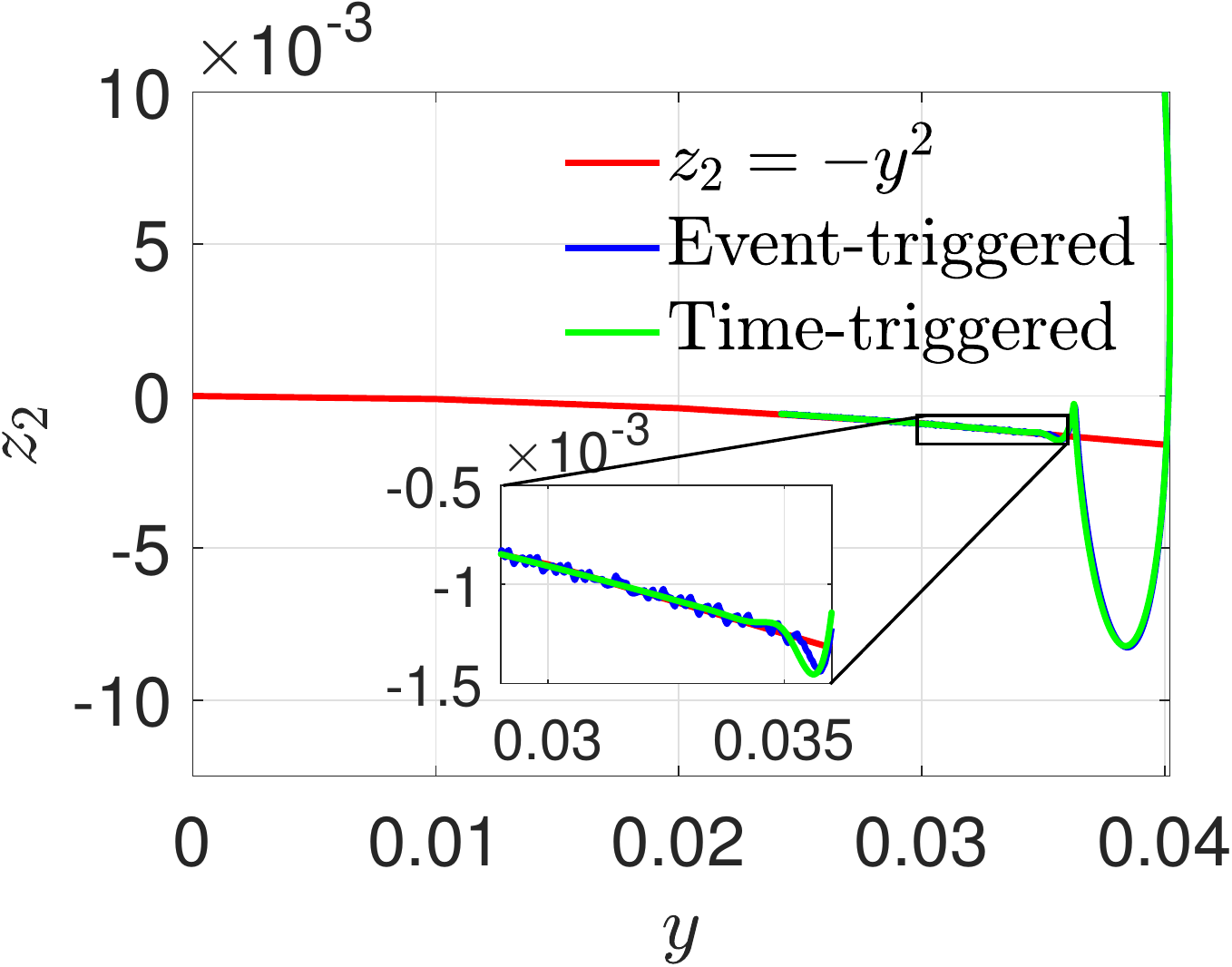}}
\hfill
\subfloat[Inter-event times\label{plot4_good}]{\includegraphics[width=0.49\linewidth]{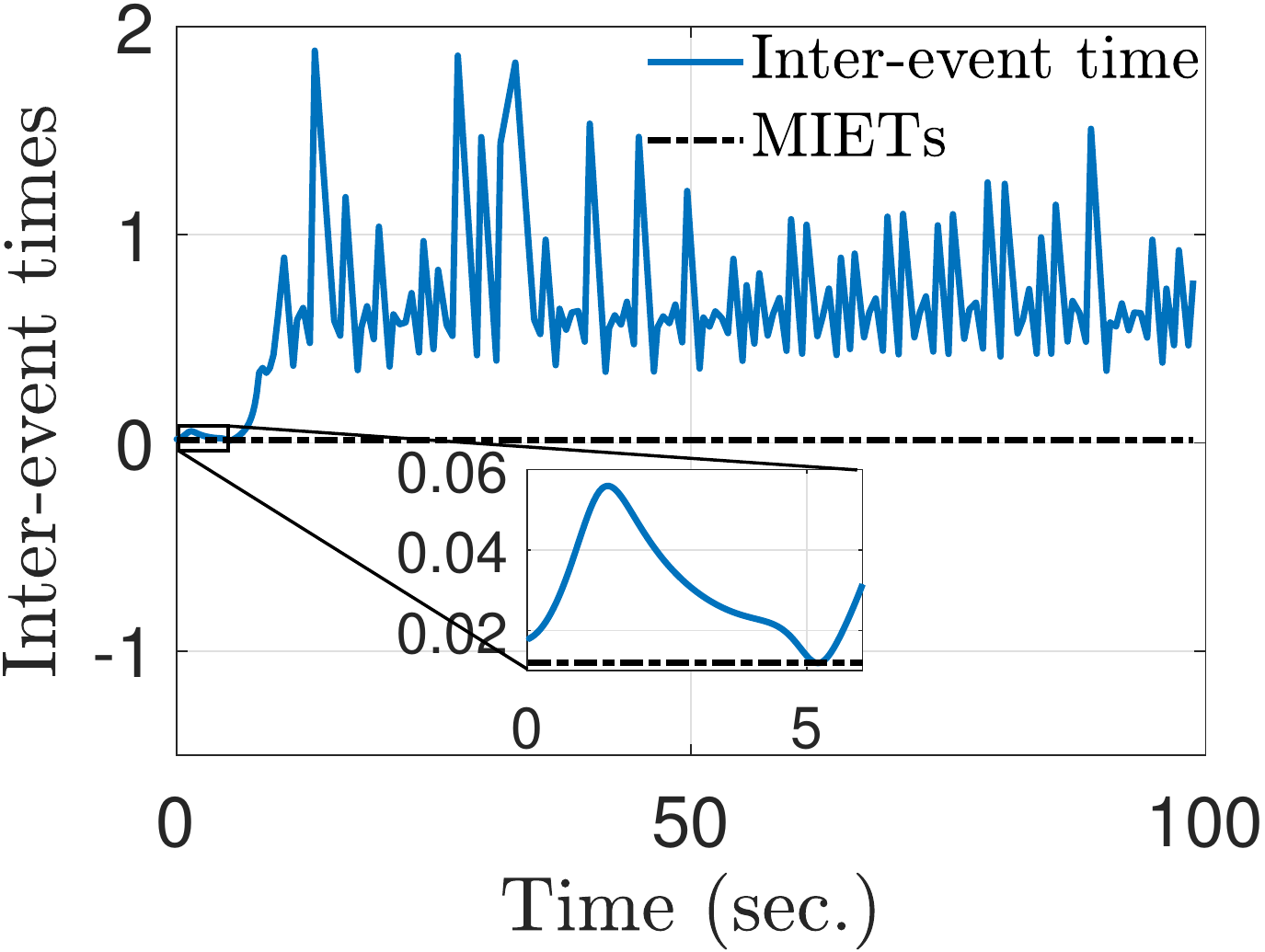}}
\caption{Simulation results for the event-triggered implementation of a controller designed for system \eqref{sys_ex2} with the thresholding rule $\norm{e_z}\geq 0.03(y^4+\norm{z-h^{a}_1(y)})$, $\sigma=0.03$.}
\label{Example_2_sim}
\end{figure}
Simulation results using the triggering condition $\norm{e_z}\geq 0.03(y^4+\norm{z-h^{a}_1(y)})$ with $\sigma=0.03$ chosen to satisfy \eqref{sigma_sf} are presented in Figure \ref{Example_2_sim}. As the transformation $v=z-Ey$ is not necessary for system \eqref{example_closed}, the variable $z$ appears in the triggering condition. The system is initialized at $(0.04,0.01,0.01)$. 
Exponential convergence of the states $z_1$ and $z_2$ to the center manifold can be seen in Figures \ref{plot2_good} and \ref{plot3_good}.  As in Example 1, minimum inter-event time (MIETs) from ten initial conditions was found to be \unit{11.9}{\milli\second}. The mean time between two events was found to be \unit{541.3}{\milli\second}. In Figure \ref{plot4_good}, we see that the inter-event times are lower bounded by MIETs. With MIETs as the sampling time, time-triggered implementation was performed and in Figure \ref{plot1_good} we see that the performance of the event-triggered system is a close match to that of the time-triggered system. On an average, the control is updated every \unit{541.3}{\milli\second} in event-triggered control and every \unit{11.9}{\milli\second} in time-triggered control, which makes a case in favour of event-triggered control. From Figures \ref{plot2_good} and \ref{plot3_good}, we see that the trajectories of the system show oscillations about the center manifold, as they tend to the origin along the center manifold. When the exact knowledge was available and was employed in checking the triggering condition in Example 1, the trajectories showed no oscillations along the center manifold.

Next, we analyse the effects of using better approximations of the center manifold in the triggering condition. Solving the PDE \eqref{partial} for $h(y)$ upto orders four and six we get 
\begin{align*}
h^a_2(y)=\left[\begin{matrix}
y^2\\ -y^2-10y^4
\end{matrix}\right] \;\;
h^a_3(y)=\left[\begin{matrix}
y^2-200y^6 \\ -y^2-10y^4-80y^6
\end{matrix}\right].
\end{align*}
Using the two approximations, simulations were performed for two sets of initial conditions, $P_1$ and $P_2$. The initial conditions in $P_1$ are close to the origin and those in $P_2$ are relatively farther. The initial conditions $(0.04,0.01,0.01)$ and $(0.4,0.1,0.1)$ are sample initial conditions from sets $P_1$ and $P_2$ respectively. As the approximation for $h(y)$ improves, it is found that, for initial conditions in both $P_1$ and $P_2$, there is no discernible difference in the system performance and the qualitative properties of trajectories. For initial conditions in $P_1$, no change in the triggering behaviour is observed but for initial conditions in $P_2$, the mean inter-event times using $h^{a}_1(y), h^a_2(y)$ and $h^a_3(y)$ are found to be $\unit{0.2021}{\second}$, $\unit{0.2018}{\second}$ and $\unit{0.2014}{\second}$ respectively. This indicates that use of a better approximation leads to more frequent triggers, as the triggering condition is checked in a more refined manner.

Next, we use the triggering condition from Proposition \ref{second_corollary} for the event-triggered implementation of a position stabilizing controller for the Mobile Inverted Pendulum (MIP) robot.
%
%
\subsection{Example 3 : Position stabilization of MIP robot} \label{C5_position}
The MIP robot is a four degree-of-freedom robot with two independently driven wheels and a pendulum-like central body that has unstable pitching motion under the influence of gravity. A controller that asymptotically drives the robot to the origin of the $(x,y)$ plane, while maintaining an upright position, is considered here for event-triggered implementation. There is no specification on the final orientation $\theta$ of the robot at the origin. This is called position and reduced attitude stabilization \cite{vijay_TCST}. For simplicity, we refer to this task as position stabilization.

\subsubsection{Mathematical model} 
The model of the robot from \cite{vijay_TCST} and \cite{maru1} is considered, with the states $x =(x_1,x_2,x_3,x_4,x_5,x_6)= 
(\tanh (x\sin\theta-y\cos\theta),\tanh (x\cos\theta+y\sin\theta),\alpha,
\dot{\alpha},v,\dot{\theta})^\top \in \mathcal{Q}_c \triangleq (-1,1)^2 \times  S^1 \times \mathbb{R}^3$, 
where $S^1$ denotes the unit circle, $(x,y)$ represents the position of the robot in the plane, $\theta$ is the orientation of the robot at this position, $\alpha$ and $\dot{\alpha}$ are the pitch and pitch velocity, $v$ and $\dot{\theta}=\omega$ are the linear and angular velocities of the robot. The state-space model of the MIP robot presented in \cite{vijay_TCST,maru1} is
%
%
\begin{align*}
&\hspace{2cm}\dot{x}=f_1(x)+h_1(x)u_1+h_2(x)u_2\\
&f_1(x)= \\ &\left(\begin{array}{c}
(1-x_1^2) (\tanh^{-1} x_2) x_6 \\
(1-x_2^2)(-(\tanh^{-1} x_1) x_6 + x_5) \\
x_4 \\
\dfrac{\left(\sin{(2x_3)}(b_3b_1 x_6^2-b_5^2 x_4^2)+2b_5b_3 a_g\sin{x_3} \right) }{m_1(x)} 
\\
\dfrac{\left(\sin{(2x_3)}(-b_1b_5 x_5^2\cos{x_3}-b_5^2 a_g)  +2b_2b_5 x_4^2\sin{x_3} \right)}{m_1(x)} 
\\
\frac{-b_1 x_4 x_6\sin{(2x_3)}}{m_2(x)} 
\end{array}\right)
\end{align*}
\begin{align*}
h_1(x) &=( 
0,0,0, 
\frac{-2(b_3 r+b_5\cos{x_3})}{m_1(x)} , 
\frac{2(b_2+b_5 r \cos{x_3})}{m_1(x)} , 0
 )\\  
h_2(x)&=\left( 
0 , 0 , 0 , 0 , 0 , \frac{b}{ m_2(x) r} 
 \right), \\
m_1(x) &= 2(b_2b_3-b_5^2\cos^2{x_3}) \text{\;and\;}
m_2(x) = (b_4+b_1\sin^2{x_3})
\end{align*}
where $m_1(x),m_2(x) > 0, \forall\ x\in \mathcal{Q}_c$, 
and $b, r, b_i,i=1,...,5$ are constants dependent on the robot parameters.
%
The control inputs are $u=(u_1,u_2)=(\tau_r+\tau_l,\tau_r-\tau_l)$, 
where $(\tau_r,\tau_l)$ are the torques applied to the right and left wheels respectively and 
$a_g$ is the acceleration due to gravity.
Linearizing about $(x;u)=(0;0)$ yields
\begin{align}
\dot{x} &=  Ax + 
\left[\begin{array}{c}
0\\B_1
\end{array} \right] u_1 + 
\left[\begin{array}{c}
0\\
B_2
\end{array} \right] u_2+\tilde{G}(x,u)     \label{lin_model1} 
\end{align}
where 
\begin{align*}
A = \left[ \begin{array}{ll}
0 & 0_{1\times 5} \\
0_{5\times 1} & A_1
\end{array} \right], \;\; 
A_1 = \left[ \begin{array}{lllll}
0 & 0 & 0 & 1 & 0\\
0 & 0 & 1 & 0 & 0\\
0 & a_1 & 0 & 0 & 0\\
0 & a_2 & 0 & 0 & 0\\
0 & 0 & 0 & 0 & 0
\end{array} \right],\\
B_1 = \begin{bmatrix}
0 & 0 & a_3 & a_4 & 0
\end{bmatrix}^{\top}, \; B_2 = \begin{bmatrix}
0 & 0 & 0 & 0 & a_5\end{bmatrix}^{\top}
\end{align*} 
and $\tilde{G}$ are the nonlinearities satisfying conditions in \eqref{conditions_1}.
%
The controllability matrix of the linearized system \eqref{lin_model1} has rank 
five, with $x_1$ being the uncontrollable state. 
\subsubsection{Choice of a controller}
A linear state-feedback control law with the following structure 
\beq
u_1=-{K}_1 [x_2\ x_3\ x_4\ x_5]^\top,\; u_2= -{K}_2 [x_6\ x_1]^\top \label{inner-loop}
\eeq
where, $K_1=[k_{1i}]_{i=1,\ldots,4}\in \mathbb{R}^4,
K_2=[k_{2j}]_{j=1,2}\in \mathbb{R}^2$ was proposed in \cite{vijay_TCST} to achieve the control objective of reduced attitude stabilization. For ease of notation, the controller is represented as $u=K(x)$. The controller parameters are chosen so as to assign negative real eigenvalues 
to the fifth-order controllable subsystem of \eqref{lin_model1}.  
The eigenvalue corresponding to the $x_1$ dynamics is zero and cannot be influenced by the controller \eqref{inner-loop}. The state $x_1$ is used in the control $u_2$ to  indirectly influence and stabilize the dynamics on the center manifold.  
\subsubsection{Center manifold analysis of the closed-loop system}
Using the partitioning of the states $p=({p}_1,{p}_2)\in \mathcal{Q}_c, {p}_1\triangleq x_1,
{p}_2\triangleq\left(x_2,x_3,x_4,x_5,x_6\right)$, we arrive at
\beq
\begin{aligned}
\dot{p}_1  &= {A}^c {p}_1 + G_1({p}_1,{p}_2)\\
\dot{p}_2  &= {A}^s {p}_2 + B_2k_{21}p_1+ G_2({p}_1,{p}_2,K(p_1,p_2)) 
\end{aligned}
\label{linclosed1}
\eeq
where, ${A}^s \in \mathbb{R}^{5\times 5}$ is Hurwitz, 
${A}^c = 0$, $G_1 \in \mathbb{R}^5$ and $ G_2 \in \mathbb{R}$. As $G_1$ is independent of $u$, the control directly influences only the $p_1$-subsystem.
 
As discussed in the problem formulation in Section \ref{formulation_4}, the cross-coupling linear term, $B_2k_{21}p_1$ between the $p_1$ and $p_2$ subsystems must be eliminated for the results from center manifold theory to hold. This term is removed using the change of variables $\bar{p}_2=p_2-Ep_1$, $E \in \mathbb{R}^{5\times 1}$, where the matrix $E$ is obtained by solving \eqref{finde} and the matrix $E$ is $E=[0\;0\;0\;0\;-\frac{k_{22}}{k_{21}}]^{\top}$. Using the notations
\begin{align*}
g_1(p_1,\bar{p}_2&+Ep_1))=G_1(p_1,\bar{p}_2+Ep_1)) \\
g_2(p_1,\bar{p}_2&+Ep_1,K(p_1;\bar{p}_2+Ep_1)) \\
=&G_2(p_1,\bar{p}_2+Ep_1,K(p_1;\bar{p}_2+Ep_1))\\
&-EG_1(p_1,\bar{p}_2+Ep_1,K(p_1;\bar{p}_2+Ep_1))
\end{align*}
we arrive at
\begin{equation}
\begin{aligned}
\dot{p}_1 &= A^cp_1 + g_1(p_1,\bar{p}_2+Ep_1)\\
\dot{\bar{p}}_2 &= A^s\bar{p}_2 + g_2(p_1,\bar{p}_2+Ep_1,K(p_1;\bar{p}_2+Ep_1))
\label{final_tranf_5}
\end{aligned}
\end{equation}
where $\bar{p}_2\triangleq\left(x_2,x_3,x_4,x_5,x_6+(k_{22}/k_{21})x_1\right)$. 
By Theorem \ref{existence}, there exists locally, a smooth map
$h:\mathbb{R} \rightarrow \mathbb{R}^5$ such that $\bar{p}_2 = h(p_1)$ is 
a center manifold for the system \eqref{final_tranf_5}.
The choice of 
$h({p}_1) = (c_1 p_1^2 + \mathcal{O}(|p_1|^4),\mathcal{O}(|p_1|^4),\mathcal{O}(|p_1|^4),-c_2 p_1^2+\mathcal{O}(|p_1|^4)
,c_3 p_1^3+\mathcal{O}(|p_1|^4))$ (derived in \cite{vijay_TCST}), 
where 
\begin{align}
c_1=\frac{k_{14}k_{22}}{k_{11}k_{21}}, \; c_2= \frac{k_{22}}{k_{21}}, \; c_3= -\frac{b_4 k_{14}k_{22}^3 r}{k_{11}k_{21}^4 b}
\label{cis}
\end{align}
qualifies as a local center manifold, since it satisfies \eqref{partial}
in a small neighbourhood of the origin $(p_1;\bar{p}_2)=0$.
The dynamics on the one-dimensional center manifold is governed by 
\begin{align}
\dot{p}_1 = (c_3 p_1^3 - c_2 p_1)(1-p_1^2)\tanh^{-1}(c_1^2 p_1^2)
+\mathcal{O}(|p_1|^{6}). \label{reduced_mip}
\end{align}
The local asymptotic stability of $p_1=0$ of \eqref{reduced_mip} can be inferred using the Lyapunov function $V_2 = \frac{1}{2}p_1^2$, which yields 
\begin{align}
\dot{V_2}=-c_2c_1^2p_1^4+k|p_1^6|<0 \;\; \forall \; |p_1|<c_1\sqrt{\dfrac{c_2}{k}}
\label{vdot_mip}
\end{align}
where the constant $k>0$. By Theorem \ref{red_theorem} and Proposition \ref{main_result}, local asymptotic stability and local input-to-state stability of the overall closed-loop system \eqref{lin_model1}-\eqref{inner-loop} respectively is concluded.
\subsubsection{Event-triggered implementation}
The controller \eqref{inner-loop} that places the closed-loop poles of the fifth order controllable subsystem of \eqref{lin_model1} at $\{-2 \pm j0.5, -0.9, -0.5, -1\}$ is
\begin{equation}
\begin{aligned}
u_1&=0.1091x_2+7.0089x_3+1.0014x_4+0.4302x_5 \\ u_2&=-0.1929x_6-0.09645x_1.
\end{aligned} 
\label{part_control}
\end{equation}
Th constants $c_1,c_2$ and $c_3$ in \eqref{cis} are functions of the robot parameters and the controller gains $k_{ij}$. For the controller chosen, it can be checked that the dynamics on the reduced system \eqref{reduced_mip} is locally asymptotically stable. The conditions in Assumption \ref{second_assumption} are satisfied and we use the relative thresholding $\norm{e_{\bar{p}_2}}\geq \sigma (\norm{\bar{p}_2-h^{a}(p_1)}+\norm{p_1}^4)$ from Proposition \ref{second_new_proposition} with $h^{a}(p_1)\triangleq[c_1 p_1^2,\; 0,\; 0,\; -c_2 p_1^2,\; c_3 p_1^3]^{\top}$ for event-triggered implementation. 
Under this implementation, asymptotic stability and the non-existence of Zeno behaviour is guaranteed through Proposition \ref{second_corollary}.

The pair of matrices $(P,Q)$ required to compute the bound on the thresholding parameter $\sigma$ from \eqref{sigma_sf} are found by solving the Lyapunov equation ${A^{s}}^{\top}P+PA^{s}+Q=0$ by choosing $Q=\mathbb{I}_5$.
For the controller \eqref{part_control} and the $(P,Q)$ pair chosen as above, the bound for the relative threshold is found to be $\sigma\leq 10^{-4}$. Choosing $\sigma=10^{-4}$, the simulation results of event-triggered position stabilization of the MIP robot are presented in Figures \ref{Chap5_sim4} and \ref{Chap5_sim6}. 

The MIP robot is initialized at $(x,y,\theta)=(2,2,\frac{\pi}{2})$ with the pitch upright, that is, $\alpha=\dot{\alpha}=0$ and $v=\dot{\theta}=0$. The evolution of the position of the robot is shown in Figure \ref{sim4a}. The robot asymptotically reaches the origin of the $(x, y)$ plane. In Figure \ref{sim4b}, the evolution of the norm of the error $\norm{e_{\bar{p}_2}}$ and the threshold $10^{-4} (\norm{\bar{p}_2-h^{a}(p_1)}+\norm{p_1}^4)$ are shown. An event occurs when the norm of the error rises from zero to meet the threshold. In Figure \ref{sim4c}, we see that the pitch angle $\alpha$ and pitch velocity $\dot{\alpha}$ tend asymptotically to zero as guaranteed by Proposition \ref{second_corollary}. The evolution of the inter-event times is shown in Figure \ref{sim4d}. The inter-event times are lower bounded by $\unit{2.4}{\milli\second}$. Considering ten initial conditions inside a circle of radius \unit{3}{\meter} in the $(x, y)$ plane, it is found that the minimum time between two events (MIETs) is $\unit{2.4}{\milli\second}$. With MIETs as the sampling time for time-triggered control, simulations were performed and as can be seen in Figure \ref{sim4a}, the performance of event-triggered control is a close match to the performance of time-triggered control. In achieving this close similarity in performance, event-triggered control requires fewer closings of the control-loop than time-triggered control. From Figures \ref{plotcenter1_good}, \ref{plotcenter2_good} and \ref{plotcenter3_good}, we see that the trajectories of the event-triggered closed-loop system converge rapidly to the center manifold, while evolving slowly along the center manifold.
%
%
\begin{figure}
    \centering
  \subfloat[Position $(x,y)$ of the robot.\label{sim4a}]{%
       \includegraphics[width=0.49\linewidth]{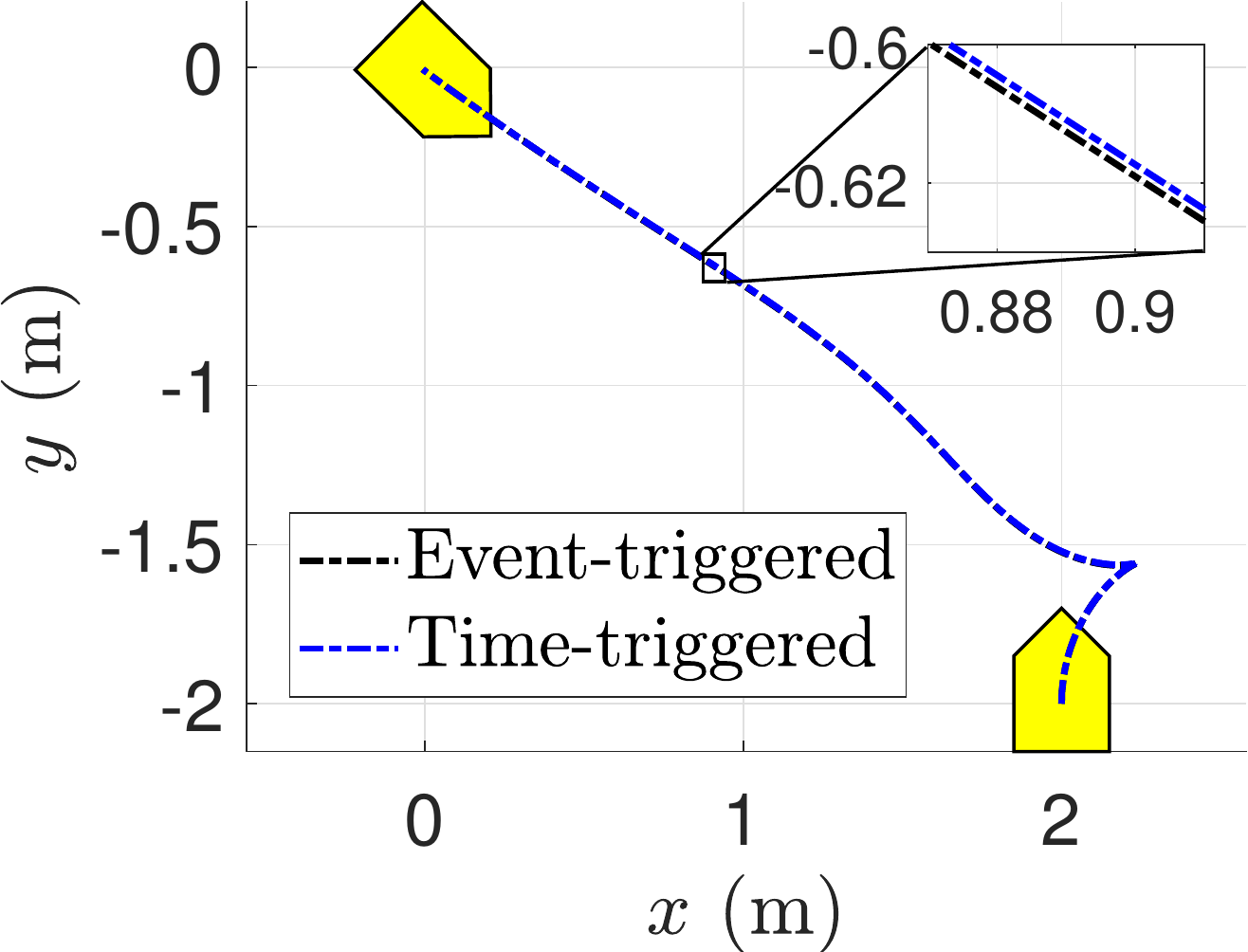}}
    \hfill
  \subfloat[Evolution of the error and the threshold\label{sim4b}]{%
        \includegraphics[width=0.49\linewidth]{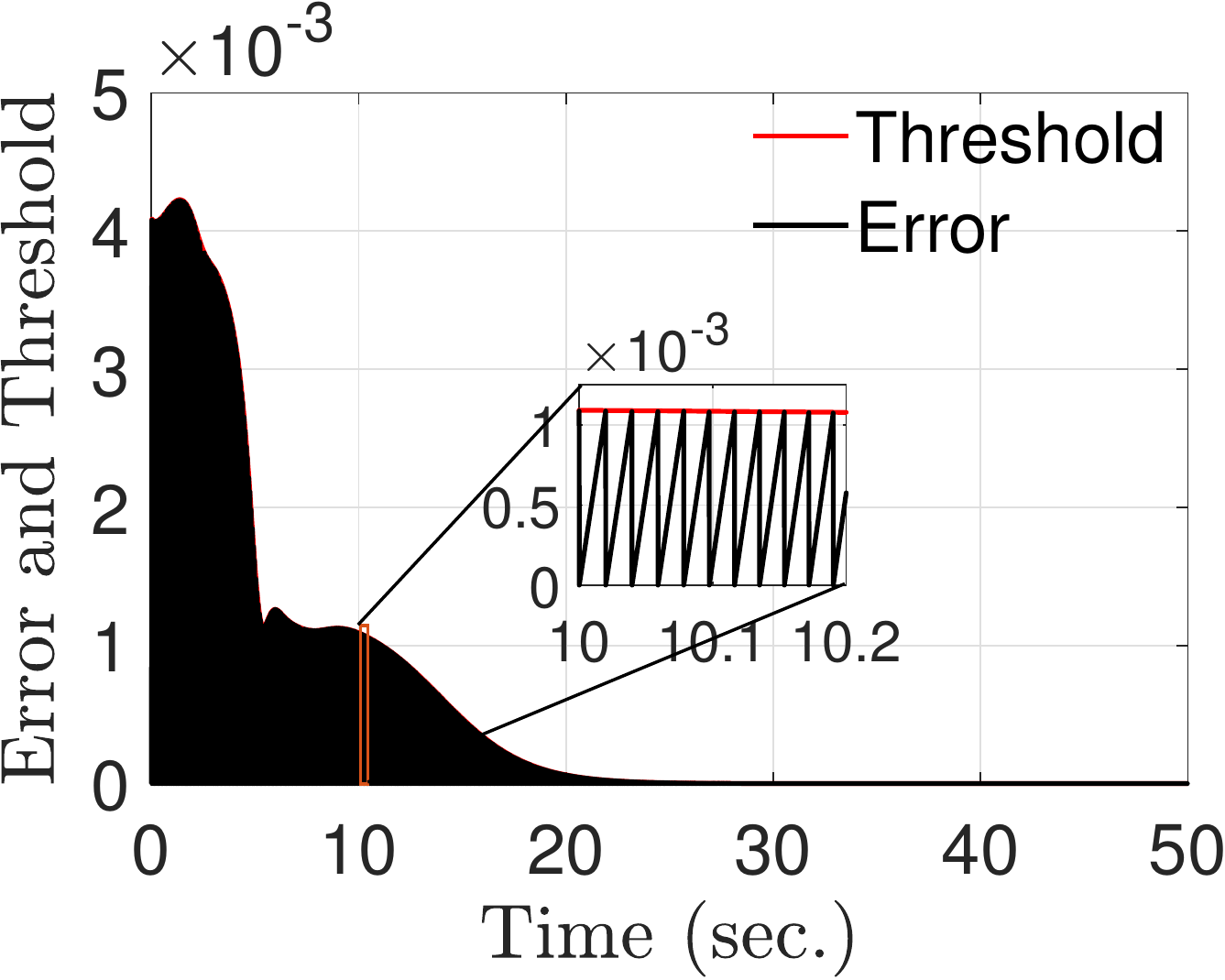}}
    \\ \vspace{0.5cm}
  \subfloat[Evolution of $\alpha$ and $\dot{\alpha}$ \label{sim4c}]{%
        \includegraphics[width=0.49\linewidth]{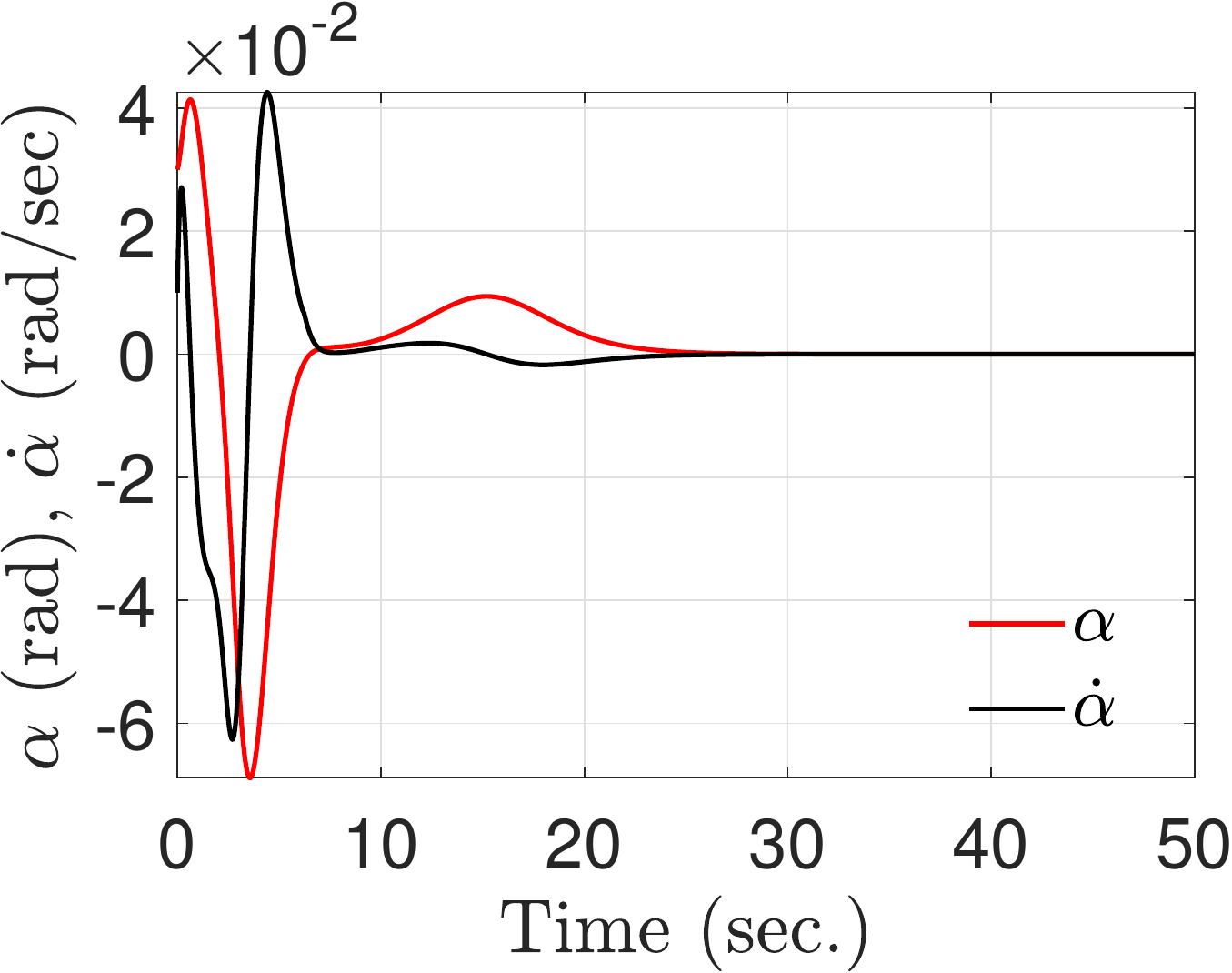}}
    \hfill
  \subfloat[Inter-event times \label{sim4d}]{%
        \includegraphics[width=0.49\linewidth]{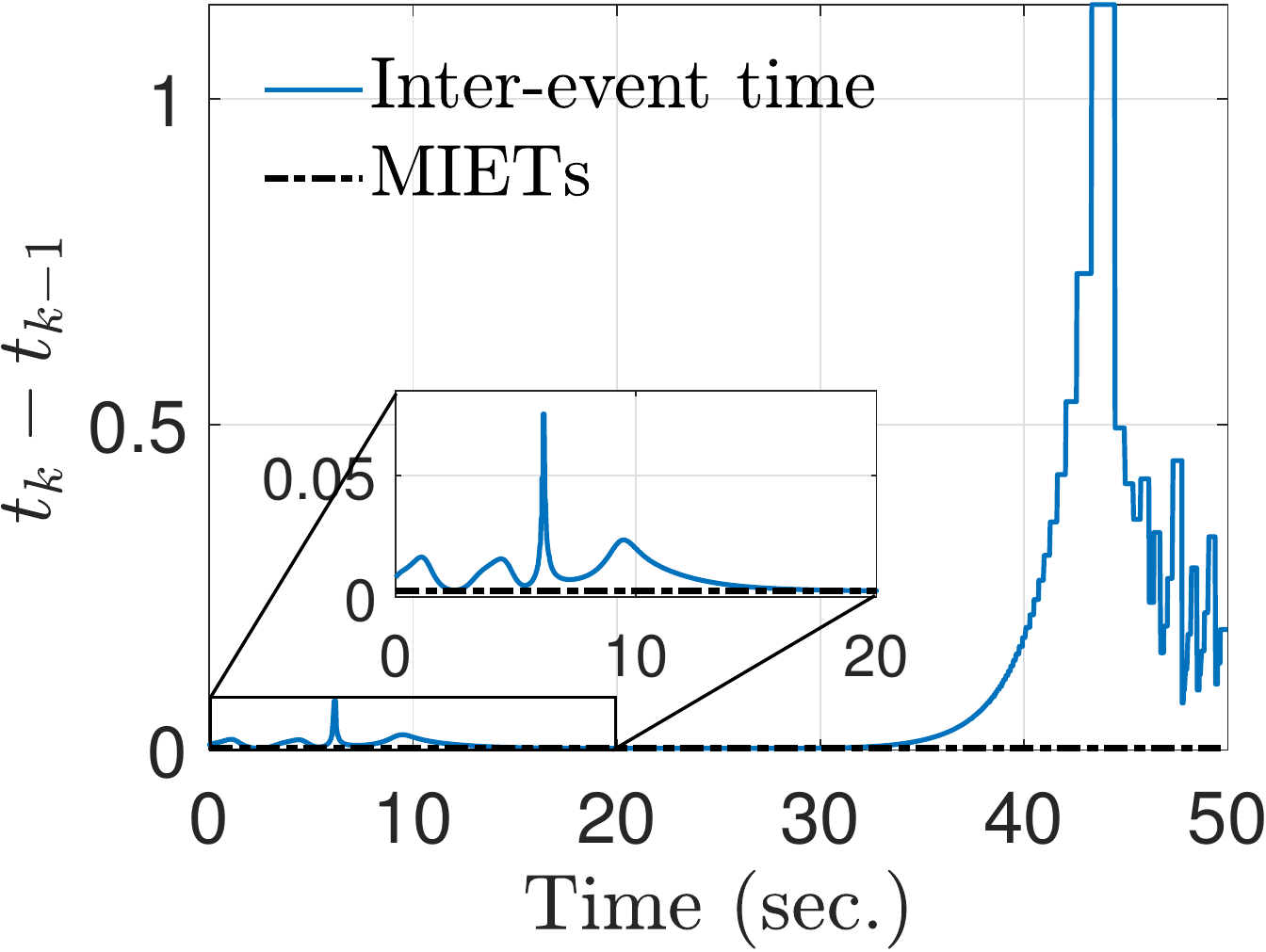}}
  \caption{Simulation results for the event-triggered position stabilization of an MIP robot.}
  \label{Chap5_sim4}
\end{figure}
\begin{figure} 
    \centering
  \subfloat[Evolution in $(x_1,x_2)$ plane \label{plotcenter1_good}]{%
       \includegraphics[width=0.49\linewidth]{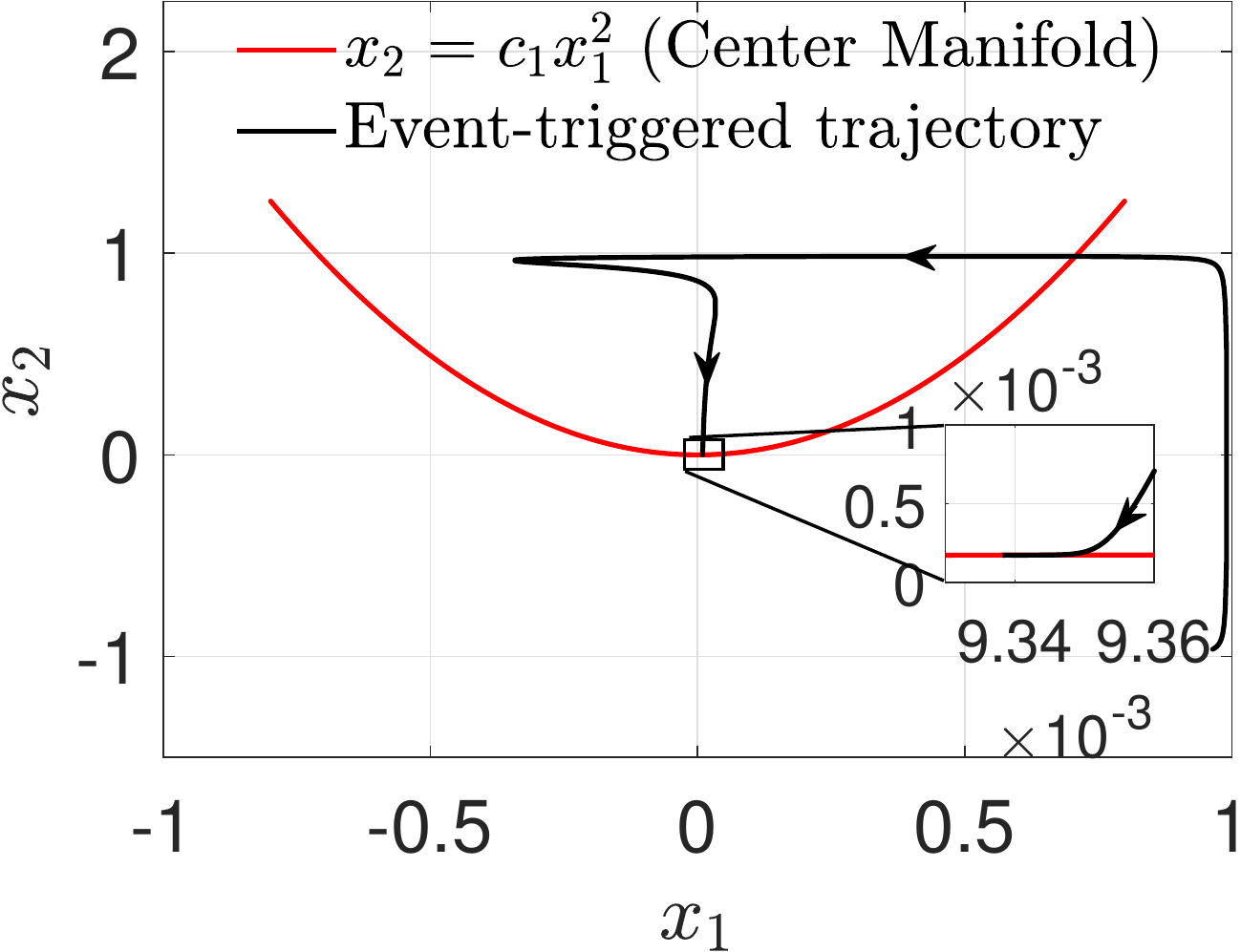}}
    \hfill
  \subfloat[Evolution in $(x_1,x_5)$ plane \label{plotcenter2_good}]{%
       \includegraphics[width=0.49\linewidth]{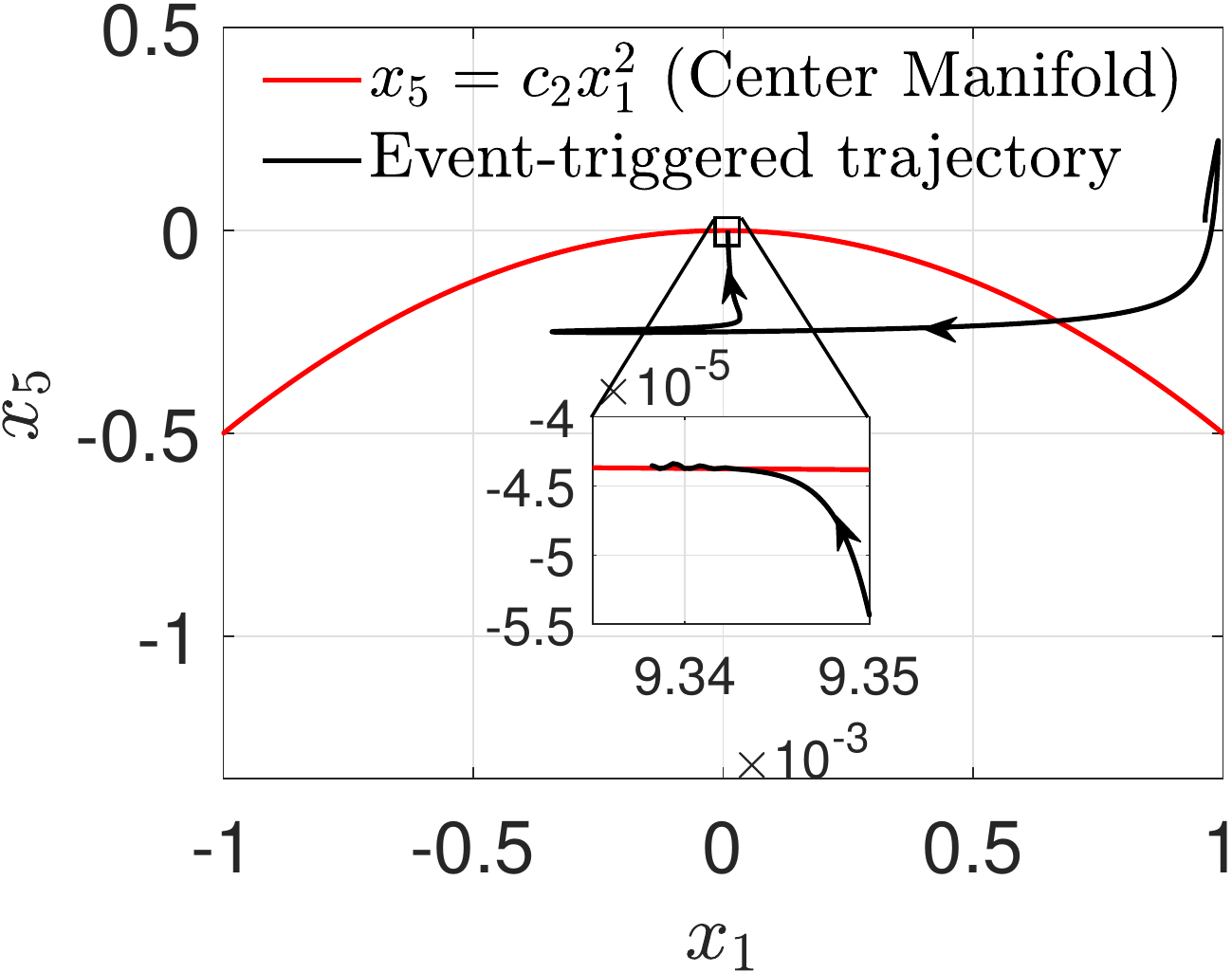}}
    \\ \vspace{0.5cm}
  \subfloat[Evolution in $(x_1,\bar{p}_2(5))$ plane \label{plotcenter3_good}]{%
        \includegraphics[width=0.47\linewidth]{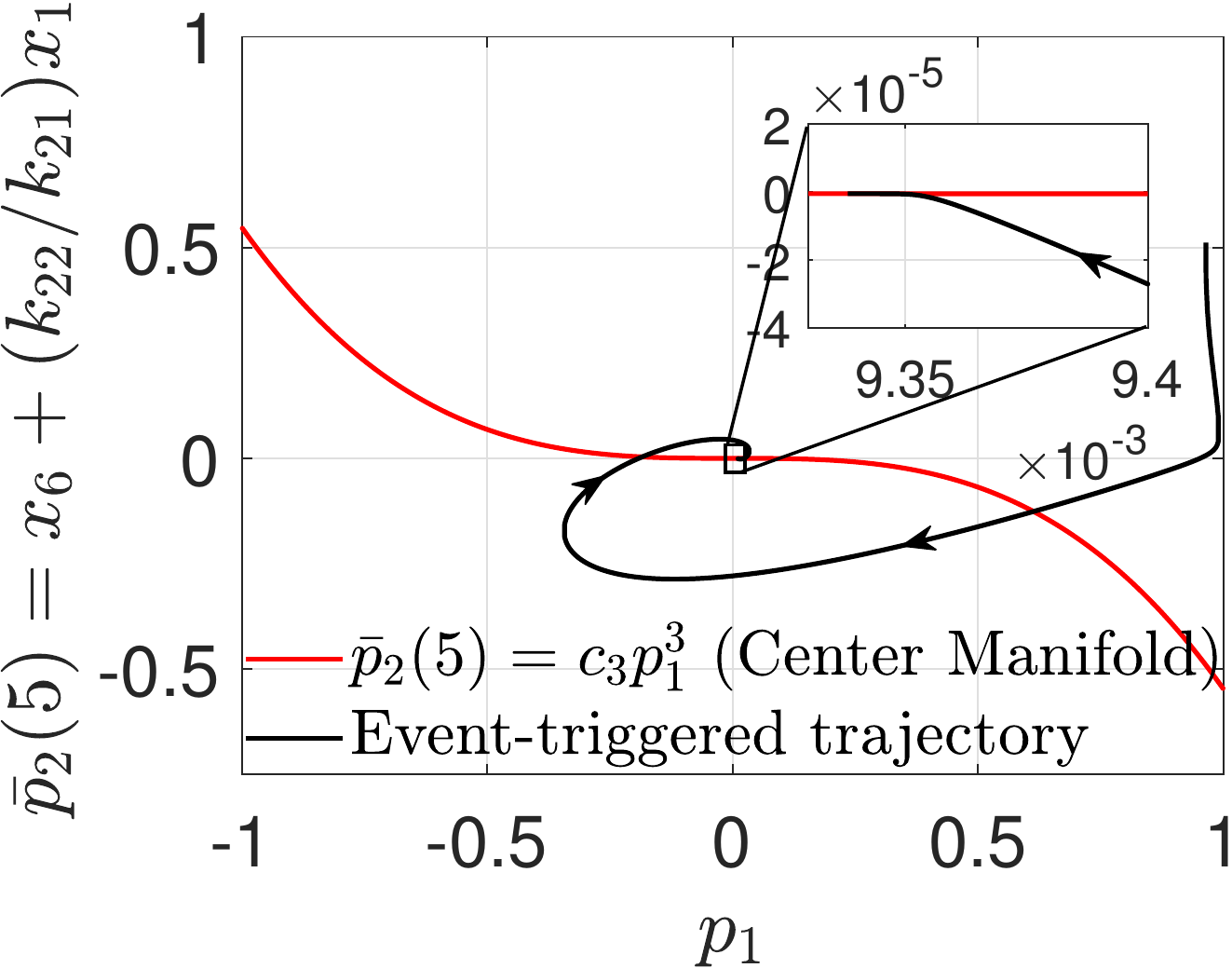}}
  \caption{The trajectories of the states $x_2, x_5$ and $\bar{p}_2(5)=x_6+(k_{22}/k_{21})x_1$ of the event-triggered closed-loop system.}
  \label{Chap5_sim6}
\end{figure}
\section{Conclusions}
\label{conclusions}
In this work, we presented event-triggered implementation of control laws designed for nonlinear systems with center manifolds. The proposed methods ensured Zeno-free local ultimate boudedness of the trajectories, and under some assumptions on the controller structure, Zeno-free asymptotic stability of the origin. Systems for which the center manifold is exactly computable were considered first and triggering conditions were presented, the checking of which requires the exact knowledge of the center manifold. Then, we considered systems for which the center-manifold can only be approximately computed and showed that the same triggering conditions could be used with the available approximate knowledge of the center manifold. 

When the triggering conditions employed the approximate knowledge of the center manifold, the trajectories of the closed-loop system showed oscillations about the center manifold, as they tend to the origin along the center manifold. When the exact knowledge was employed, the trajectories converged to the origin without any oscillations about the center manifold. For initial conditions close to the origin, a simulation study showed that use of a better approximation of the center manifold leads to more frequent triggers (as the triggering condition is checked in a more refined manner) and yields no significant improvement in the system performance. 
The minimum inter-event time from multiple simulations (MIETs) was used as sampling time for time-triggered control and it was found that event-triggered control yields similar performance as time-triggered control but with significantly lesser control updates. 

As part of our future work, we look to relax the assumptions on the controller structure, to guarantee Zeno-free asymptotic stability of the origin for a larger class of nonlinear systems with center-manifolds.
\bibliographystyle{ieeetr}
\bibliography{refs_Thesis}
\vskip 0pt plus -1fil
\vspace{-0.75cm}
\begin{IEEEbiography}[{\includegraphics[width=1in,height=1.25in,clip,keepaspectratio]{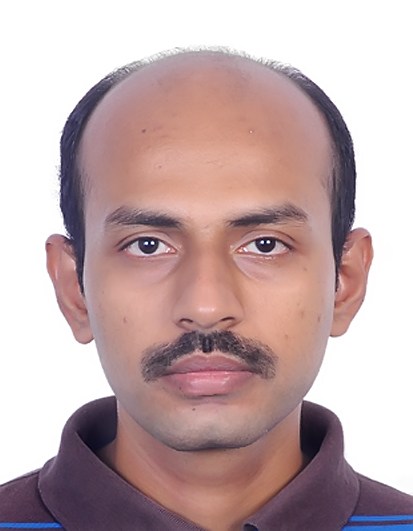}}]{Akshit Saradagi}
is currently a Postdoctoral researcher in the Department of Electrical Engineering, Indian Institute of Technology Madras, Chennai, India. He received the B.E degree in Electronics and Communication Engineering from P.E.S Institute of Technology, Bengaluru in 2013 and the M.S and Ph.D Dual degree from Indian Institute of Technology Madras, Chennai in 2020. His current research interests are in the areas of networked dynamical systems, control under resource constraints and bio-inspired control.
\end{IEEEbiography}
\vskip 0pt plus -1fil
\vspace{-0.85cm}
\begin{IEEEbiography}[{\includegraphics[width=1in,height=1.25in,clip,keepaspectratio]{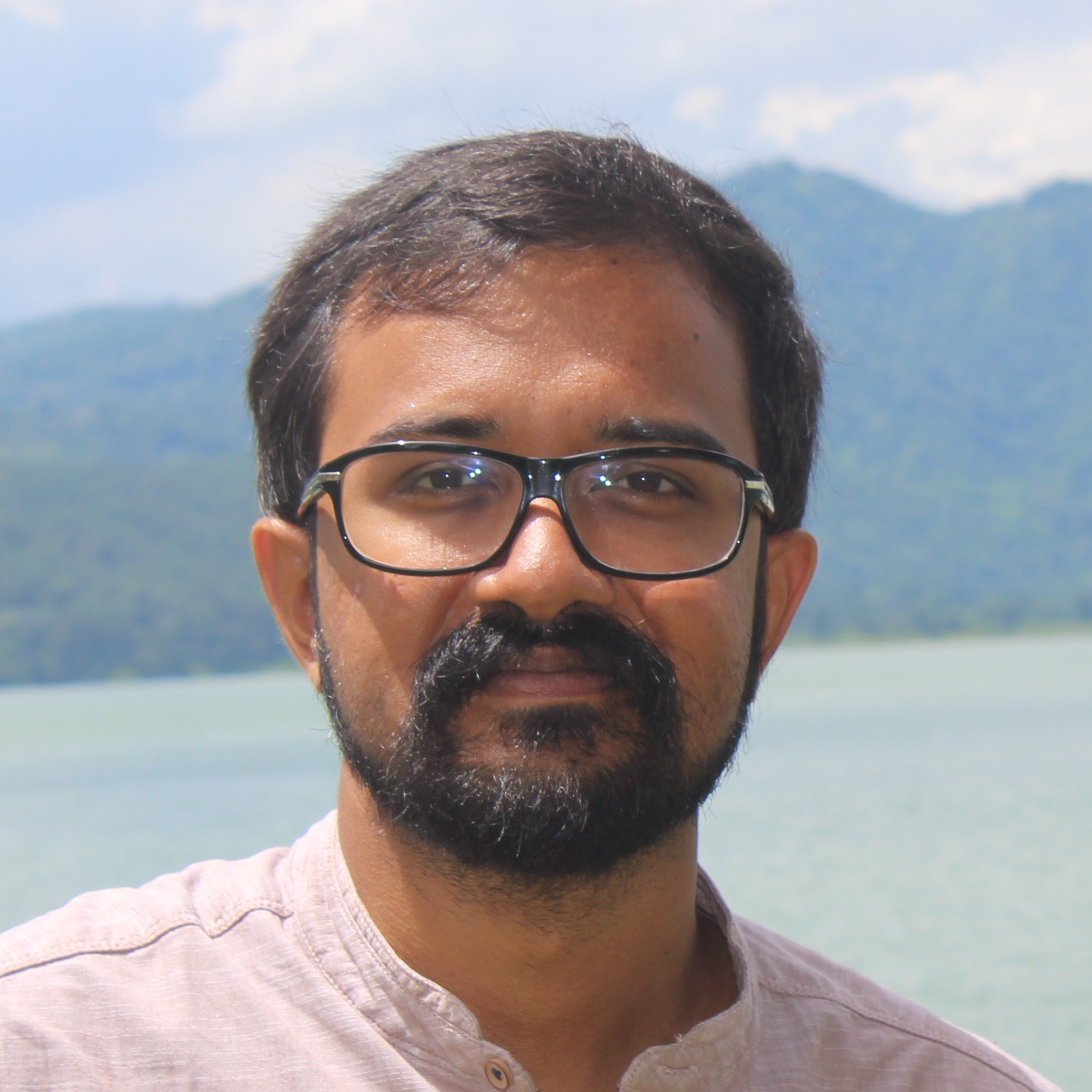}}]{Vijay Muralidharan}
obtained his PhD in robotics and control from IIT Madras, India. His experience includes postdoc and product development with LTU Sweden and SigTuple Technologies India, respectively. He is currently an Assistant Professor in the Department of Electrical Engineering at Indian Institute of Technology Palakkad, India. His research interests include dynamics modeling, nonlinear control, state estimation for robotic systems, medical devices and aerospace systems. 
\end{IEEEbiography}
\vskip 0pt plus -1fil
\vspace{-0.85cm}
\begin{IEEEbiography}[{\includegraphics[width=1in,height=1.25in,clip,keepaspectratio]{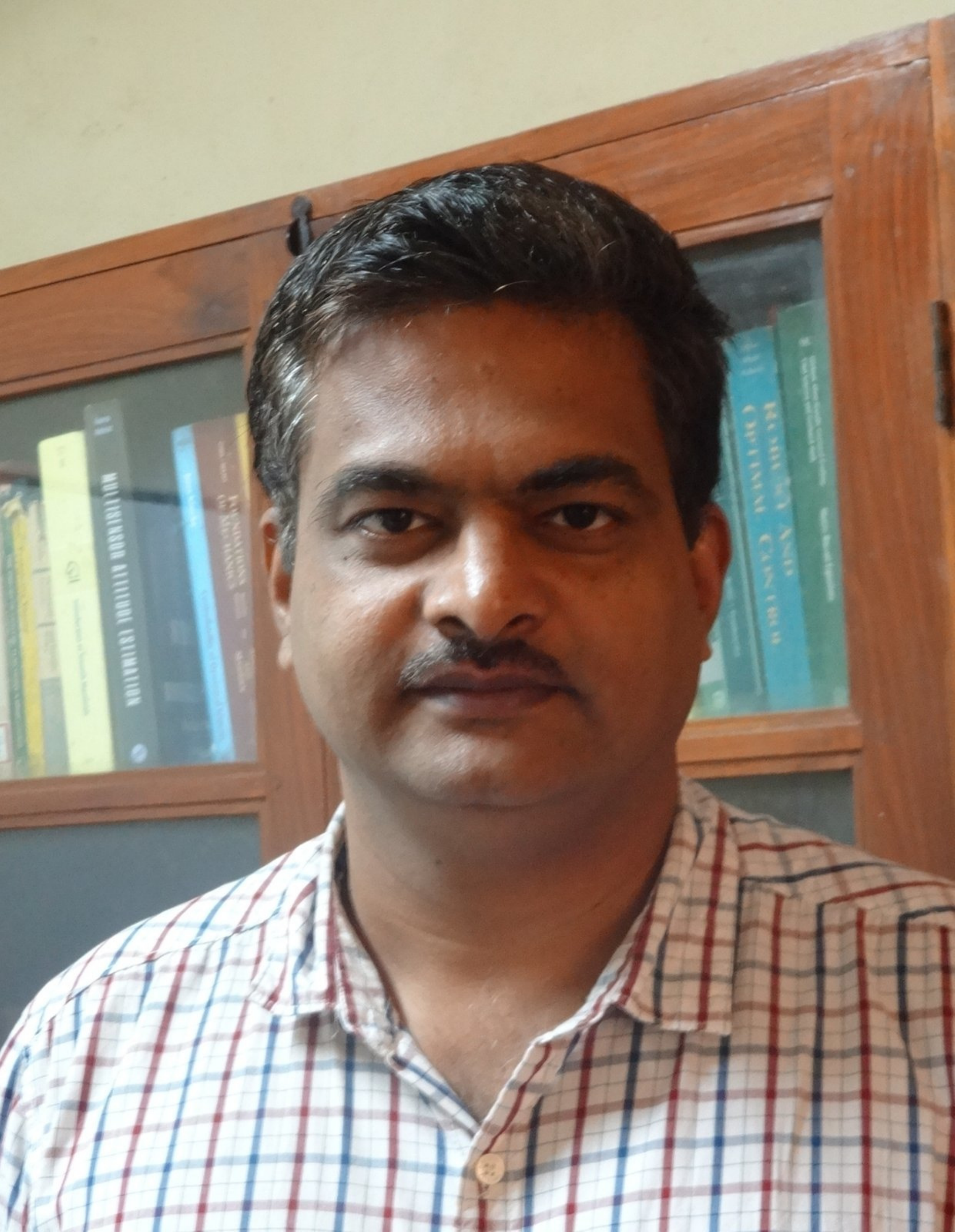}}]{Arun D Mahindrakar}(M'93-SM'19)
received the Bachelor of Engineering in Electrical and Electronics from Karnatak University, Dharwad, India in 1994, the Master of Engineering in Control Systems from VJTI, Mumbai in 1997 and the Ph.D in Systems  \& Control  from IIT Bombay in 2004. He worked as a Postdoctoral Fellow at the Laboratory of Signals and Systems at Sup\'elec, Paris from 2004 to 2005. He is currently a Professor in the Department of Electrical Engineering, Indian Institute of  Technology Madras, India. His research interests include  nonlinear stability, geometric control and  nonsmooth optimization.
\end{IEEEbiography}
\vskip 0pt plus -1fil
\vspace{-0.75cm}
\begin{IEEEbiography}[{\includegraphics[width=1in,height=1.25in,clip,keepaspectratio]{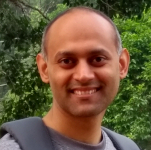}}]{Pavankumar Tallapragada}(S'12-M'14)
received the B.E. degree in   Instrumentation Engineering from SGGS Institute of Engineering $\&$ Technology, Nanded, India in 2005, M.Sc. (Engg.) degree in Instrumentation from the Indian Institute of Science in 2007 and the Ph.D. degree in Mechanical Engineering from the University of Maryland, College Park in 2013. He was a Postdoctoral Scholar in the Department of Mechanical and Aerospace Engineering at the University of California, San Diego from 2014 to 2017. He is currently an Assistant Professor in the Department of Electrical Engineering and the Robert Bosch Centre for Cyber Physical Systems at the Indian Institute of Science. His research interests include networked control systems, distributed control, multi-agent systems and  networked transportation systems.
\end{IEEEbiography}
\end{document}